\numberwithin{equation}{section}
\newcounter{dummy}
\newcommand\myitem[1][]{\item[#1]\refstepcounter{dummy}\def\@currentlabel{#1}}
\DeclareMathSymbol{\shortminus}{\mathbin}{AMSa}{"39}
\theoremstyle{plain}
\newtheorem{theorem}{Theorem}[section] %compteur commenÃ§ant par le numéro de la section (on pourrait aussi faire commencer par le numéro de la sous-section - remplacer "section" par "subsection")
\newtheorem{proposition}[theorem]{Proposition}%[section]      
\newtheorem{lemma}[theorem]{Lemma}%[section]            %etc...
\newtheorem{corollary}[theorem]{Corollary}%[section]
\theoremstyle{definition}
\newtheorem{definition}{Definition}
\newtheorem{remark}[theorem]{Remark}
\def\ogg~{{\rm \og}}   % guillemets ouvrants
\def\emptyset{\varnothing}
\def\N{{\mathbb N}}    %naturels
\def\Z{{\mathbb Z}}     %entiers relatifs
\def\R{{\mathbb R}}    %rÃ©els
\def\C{{\mathbb C}}    %complexes
    \def\cS{{\mathcal S}}                     \def\cF{{\mathcal F}}    \def\cX{{\mathcal X}}   \def\cZ{{\mathcal Z}}
\newcommand\Res{\operatorname{Res}}
\begin{document}
	\title{Reproducibility via neural fields of visual illusions induced by localized stimuli}
	% \title[Reproducibility of Billock and Tsou's Experiments]{Reproducibility of Billock and Tsou's Experiments via neural fields: Analyzing the interplay of excitatory and inhibitory responses}
	\thanks{This work has been supported by the ANR-20-CE48-0003 and by the ANR-11-IDEX-0003, Objet interdisciplinaire H-Code. The first author was supported by a grant from the bourse de thèses ``Jean-Pierre Aguilar''. Corresponding author: Cyprien Tamekue.}
	
	\author{Cyprien Tamekue}
	\address{Université Paris-Saclay, CNRS, CentraleSupélec, Laboratoire des signaux et systèmes, 91190, Gif-sur-Yvette, France}
	\email{cyprien.tamekue@centralesupelec.fr}
	
	\author{Dario Prandi}
	\address{Université Paris-Saclay, CNRS, CentraleSupélec, Laboratoire des signaux et systèmes, 91190, Gif-sur-Yvette, France}
	\email{dario.prandi@centralesupelec.fr}
	
	\author{Yacine Chitour}
	\address{Université Paris-Saclay, CNRS, CentraleSupélec, Laboratoire des signaux et systèmes, 91190, Gif-sur-Yvette, France}
	\email{yacine.chitour@centralesupelec.fr}
	
	\begin{abstract}
		This paper focuses on the modeling of experiments conducted by Billock and Tsou [V. A. Billock and
		B. H. Tsou, Proc. Natl. Acad. Sci. USA, 104 (2007), pp. 8490--8495] using an Amari-type neural field that models the average membrane potential of neuronal activity in the primary visual cortex (V1). The study specifically focuses on a regular funnel pattern localized in the fovea or the peripheral visual field. It aims to comprehend and model the visual phenomena induced by this pattern, emphasizing their nonlinear nature. The research involves designing sensory inputs that mimic the visual stimuli from Billock and Tsou's experiments. The cortical outputs induced by these sensory inputs are then theoretically and numerically studied to assess their ability to model the experimentally observed visual effects at the V1 level. A crucial aspect of this study is the exploration of the effects induced by the nonlinear nature of neural responses. By highlighting the significance of excitatory and inhibitory neurons in the emergence of these visual phenomena, the research suggests that an interplay of both types of neuronal activities plays a crucial role in visual processes, challenging the assumption that the latter is primarily driven by excitatory activities alone.
		
		\textbf{Keywords.} Neural field model, Visual illusions and perception, Spatially forced pattern forming system, the visual MacKay-type effect.\\
		\textbf{MSCcodes.}  92C20, 35B36, 45A05, 45G15, 45K05, 	65R20.
	\end{abstract}

	\maketitle
	\tableofcontents
	
	\section{Introduction}
	
	Exploring a mathematically sound approach to understanding visual illusions in human perception using neural dynamics can give us valuable insights into perceptual processes and visual organization \cite{bertalmio2020visual,bertalmio2021cortical}, and can reveal much about how precisely the brain works. Neural dynamics refers to the patterns of activity and interactions among neurons that give rise to our ability to see and understand the world. Our visual system processes information in different stages, with specialized neurons at each stage extracting specific details from what we see. The visual system shows dynamic and widespread activity patterns, from detecting basic features like edges and orientations to putting everything together and making sense of it.
	The brain area which detects basic features such as spatial position, edges, local orientations, and direction in visual stimuli from the retina is the primary visual cortex (V1 for short), \cite{hubel1959,hubel1977ferrier}. 
	
	Simple geometric visual hallucinations akin to that classified by Klüver \cite{kluver1966} have been theoretically recovered in the last decades via the neural dynamic equation used to model the cortical activity in V1 combined with the bijective nonlinear retino-cortical mapping \cite{schwartz1977,tootell1982} between the visual field and V1, see for instance, \cite{bressloff2001,bressloff2002,ermentrout1979,golubitsky2003,tass1995}. These geometric forms, known as form constants, are obtained near a Turing-like instability using linear stability analysis, (equivariant) bifurcation theory, and pattern selection when the cortical activity is due solely to the random firing of V1 neurons, that is, in the absence of sensory inputs from the retina. However, to function correctly, the primary visual cortex must be primarily driven by sensory information from the retina \cite{hubel1959,hubel1977ferrier}, not only by the internal noisy fluctuation of its cells. { Several methods have explored how sensory inputs are processed in early visual areas. Experimental studies have been conducted \cite{hebb2005organization}, along with experimentally induced phenomena via psychophysical tests \cite{billock2007,billock2012elementary,rogers2021hallucinations,pearson2016sensory,mackay1957,mackay1961}. Additionally, theoretical tools like the Lie transformation group model have been applied to analyze perceptual processes \cite{dodwell1983lie,hoffman1966lie}. Despite these efforts, using theoretical neural dynamics, our understanding of the precise neuronal mechanisms underlying visual illusions remains elusive.}
	\begin{figure}[t]
		\centering
		\includegraphics[width=.6\linewidth]{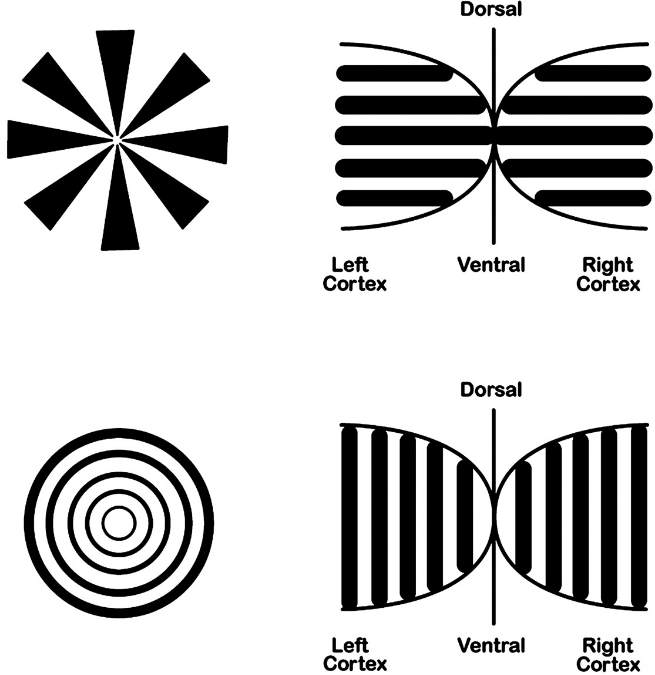}
		\caption{Visual illustration of the retino-cortical map, redrawn from \cite{billock2007}. The \emph{top-left} corresponds to the funnel pattern in the retina, and on the \emph{top-right}, the corresponding pattern of horizontal stripes is in V1. While the \emph{bottom-left} corresponds to the tunnel pattern in the retina, and on the \emph{bottom-right}, the corresponding pattern of vertical stripes is in V1. In particular, these images are regular in shape and symmetrical with respect to a specific subgroup of the plane's motion group \cite{bressloff2001}. 
		}
		\label{fig:retino-cortical}
	\end{figure}
	It has been known since Helmholtz's work \cite{helmholtz1867} that even simple geometrical patterns comprising black and white zones may induce strong after-images\footnote{{ In the experiments of \cite{billock2007}, observers perceive an illusory image in their visual field after viewing a visual stimulus, and this image persists for a few seconds. This is what we refer to when using the term \textit{after-image}.
	}} 
	accompanying a visual perception after a few seconds. Then, via redundant and non-redundant stimulation by funnel (fan shape) and tunnel (concentric rings) patterns (see Figure~\ref{fig:retino-cortical}), MacKay \cite{mackay1957,mackay1961} points out that there is some kind of orthogonal response in the visual cortex since funnel pattern induces a superimposed (to the physical stimulus) tunnel pattern as an after-image, and conversely.
	
	{
		More recently, Nicks \emph{et al.} \cite{nicks2021} have built on the foundation of neural field equations of Amari-type \cite[Eq. (3)]{amari1977} to model MacKay-type visual illusions induced by specific visual stimuli at the cortical level. Their model, which represents cortical activity in V1, incorporates a fully distributed state-dependent sensory input. 
		This input models the cortical representation via the retino-cortical map of funnel and tunnel patterns. 
		Theoretically, they proved these experimental findings, demonstrating an orthogonal response of V1 to visual inputs. The present authors have further sustained this evidence in their previous works \cite{tamekue2024mathematical,tamekue2022reproducing}.
	}
	In particular, by using the neural field equation of Amari-type, we have shown that the underlying Euclidean symmetry of V1 (see, for instance, \cite{bressloff2001}) restricts the geometrical shape of visual inputs that can induce a ``strong'' after-effect in the primary visual cortex. If the visual input is symmetric with respect to a subgroup of the group of the motion of the plane {(refer to \cite[Appendix~A]{tamekue2024mathematical})}, then the induced after-image obtained via the Amari-type equation and the inverse retino-cortical map have the same subgroup as a group of symmetry. The latter suggests that the after-images induced by fully distributed tunnel and funnel patterns (more generally spontaneous patterns obtained through Turing-like instability \cite{bressloff2001,ermentrout1979,tass1995}) that fill all the visual field have the same shape. Moreover, we exhibited in \cite{tamekue2023,tamekue2022reproducing} numerical simulations using the Amari-type equation, showing that if the funnel pattern is localized either in the fovea (centre of the visual field) or in the peripheral visual field, then the induced after-image consisting of the tunnel pattern appears in the white or black complementary region where the stimulus is not localized--also demonstrating orthogonal and non-local response--in V1. These numerical simulations, therefore, sustain the psychophysical experiments reported by Billock and Tsou \cite{billock2007}, see also \cite{billock2012elementary}. Note that numerical simulations (including those for rotating after-images that are not considered in this paper) performed in \cite{nicks2021} also support the latter psychophysical experiments.
	\begin{figure}[t]
		\centering
		% \vspace{-1cm}
		\includegraphics[width=.3\linewidth]{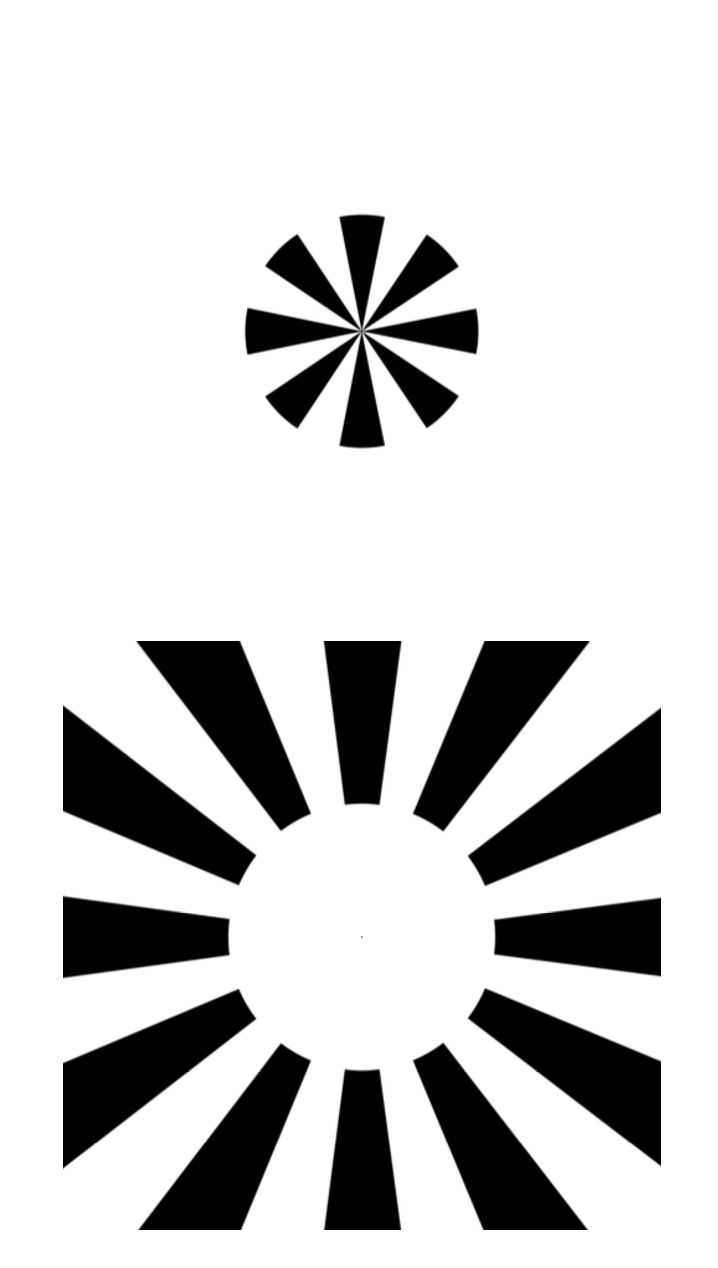}\hspace{4.5em}
		\includegraphics[width=.3\linewidth]{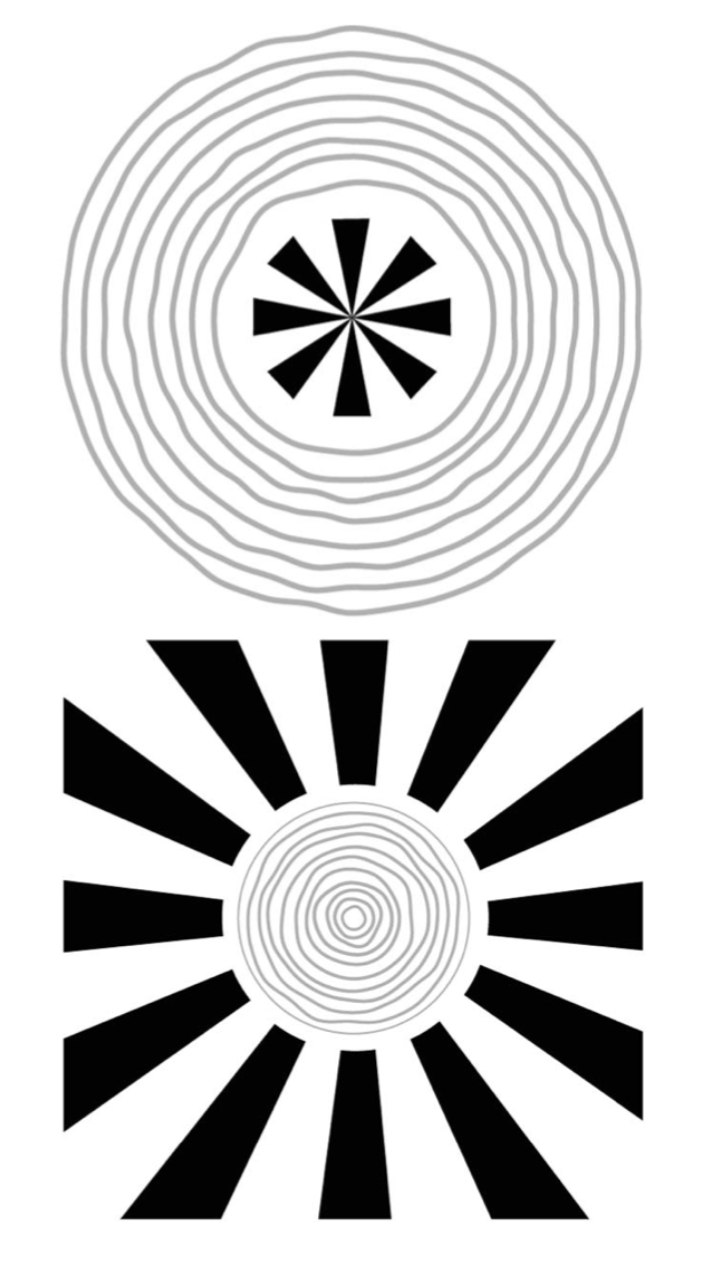}
		\caption{Billock and Tsou's experiments: the presentation of a funnel pattern stimulus in the centre (image on the \textit{top-left}) induces an illusory perception of tunnel pattern in surround (image on the \emph{top-right}) after a flickering of the empty region (the white region surrounding the stimulus pattern on the \textit{top-left}). We have a reverse effect on the \emph{bottom}. Adapted from \cite[Fig.~3]{billock2007}.}%Adapted from \parencite{mackay1957,zeki1993}.
		\label{fig:Billock-Tsou}%bottom and top
	\end{figure}
	\subsection{Billock and Tsou's psychophysical experiments}\label{ss::BT experiments}
	
	Significant visual effects associated with funnel and tunnel patterns have been recently observed in the psychophysical experiments conducted by Billock and Tsou \cite{billock2007}. Like the MacKay effect \cite{mackay1957,mackay1961}, these authors discovered that introducing biased stimuli elicits orthogonal responses in the visual field. When a physical stimulus is localized at the fovea (the central region of the visual field), the resulting visual illusion appears in the flickering periphery. Conversely, the visual illusion emerges in the flickering centre if the physical stimulus is presented in the periphery.
	Specifically, when a background flicker is combined with a funnel pattern centered on the fovea (or periphery), the observer experiences the illusory perception of a tunnel pattern in the periphery (or fovea, respectively). Similarly, when the periphery (or fovea) of a tunnel pattern localized at the fovea (or periphery) is subjected to flickering, an illusory rotating funnel pattern is perceived in the periphery (or fovea).
	In all cases, the illusory contours in the afterimage appear within the nonflickering region, depending on whether the flicker does not extend through the physical stimulus or if the empty region is flickered out of phase. Refer to Fig.~\ref{fig:Billock-Tsou} for a visual illustration.

	\subsection{Neural field model, strategy of study, and presentation of our results }
	This paper aims to investigate the theoretical modeling of Billock and Tsou's experiments~\cite{billock2007} associated with a regular funnel pattern localized in the fovea or peripheral visual field, as recalled in the previous section. 
	{
		Our approach is mechanistic: we describe a possible model of how cortical dynamics induce the phenomena under consideration. The matter of \textit{why} neurons behave this way is outside the scope of this article, albeit being a very active topic of investigation in theoretical neuroscience \cite{CLIFFORD20073125,laparra2015,rentzeperis2023}
		.}
	We will follow the idea of controllability of the Amari-type neural field introduced in \cite{tamekue2024mathematical,tamekue2022reproducing} {that we will recall hereafter}. In particular, we stress why these visual phenomena are nonlinear, as first pointed out in \cite{tamekue2023}. 
	
	% {We recall the idea of controllability right here.}
	From a control theory point of view, the first aim is to design a suitable sensory input $I$, V1 representation via the retino-cortical map of visual stimulus from the retina used in the experiment such that the cortical state $a:\R_{+}\times\mathbb R^2\to \R$ solution to the following Amari-type control system
	\begin{equation}\label{eq::NF-intro}\tag{NF}
		\begin{split}
			\partial_t a(x,t) + a(x,t) - \mu\int_{\R^2}\omega(x-y)f(a(y,t))dy &= I(x),\qquad (t,x)\in\R_{+}\times\mathbb R^2\\
			a(x,0) &= a_0(x),\qquad x\in\R^2
		\end{split}
	\end{equation}
	exponentially stabilizes to the stationary state, corresponding to the V1 representation via the retino-cortical map of the induced after-image reported by Billock and Tsou. 
	{ Following \cite{bressloff2001,ermentrout1979}, we assume that the perceived image is obtained by applying the inverse retino-cortical map to the cortical state.}
	Secondly, we will perform a quantitative and qualitative study of this stationary output to show that it captures all the essential features of the visual illusion announced by Billock and Tsou at the V1 level. 
	To this aim, we follow a numerical analysis approach specifically designed to address the complex nonlinear dynamics characteristic of the considered neural fields model.
	
	{
		Equation~\eqref{eq::NF-intro} has been introduced in \cite{amari1977} (see also \cite{cook2022neural} for a recent overview on neural field models) to describe the dynamics of the average membrane potential of the neurons located at the point $x\in\R^2$ at time $t\ge 0$.
		% , which is the electrical potential difference across the cell membrane. 
		The time-evolution of the average membrane potential at time $t\ge 0$ is given by the map $x\mapsto a(x,t)$.
		The state of cortical activity in V1 at time $t=0$ is assumed to be given by the function $a_0$.
	}
	{
		The neural field equation \eqref{eq::NF-intro} can be seen as the combined action of the external stimulus $I$, the  natural decay rate or leakage of the neurons' average membrane potential towards their resting states, 
		% natural exponential decay of the neural activity, 
		and an integral term representing the intra-neural connectivity, modulated by the parameter $\mu>0$.
		The latter consists of a convolution product between the synaptic connectivity kernel $\omega$, modeling the spatial relationship between neurons, and the non-linear term given by the response function $f$ applied to $a$, which transforms the activity level of a neural population at location $y$ and time $t$ into an output signal. (See Definition~\ref{def:response-fct}.) }
	Therefore, once the signal reaches V1, it will interact with local neural dynamics captured by this equation. The equation then models how V1 responds to this input while accounting for local interactions (via the connectivity kernel $\omega$) and nonlinearities in neural activity (via the response function  $f$).
	
	In biological brain tissue, neurons can be excitatory or inhibitory \cite{hubel1959,hubel1977ferrier}, and an inhibitory neuron decreases the likelihood that a post-synaptic neuron will send out electrical signals or spike to communicate with other brain cells. A negative value for $f(a(y,t))$ might capture this inhibitory influence. Notice also that a nonnegative function $f(a(y,t))\ge 0$ would imply that all neurons, regardless of their current activity level, provide some excitatory output. This overlooks the crucial role of inhibitory neurons in shaping neural activity. Moreover, as evident from the study we will present in this paper, a model lacking inhibitory activity is likely insufficient for capturing certain phenomena such as that reported by Billock and Tsou. In the latter case, we will also see that a complex interplay between excitatory and inhibitory activity \cite{haider2006neocortical,shu2003turning} in the shape of $f$ is required and plays a crucial role since an odd nonlinearity does not model the phenomenon.
	
	Therefore, the effect that plays the non-linearity $f$ on the reproducibility of Billock and Tsou's experiments using \eqref{eq::NF-intro} will be highlighted. As we previously pointed out in \cite[Fig.~8]{tamekue2023}, these phenomena are wholly nonlinear and strongly depend on the shape of the nonlinear function $f$. 
	
	{ Different models for cortical activity are available (e.g., the original Wilson-Cowan model for excitatory/inhibitory populations \cite{wilson1973}). The choice of an Amari-type neural field is motivated by the fact that equation \eqref{eq::NF-intro} is sufficient for describing the spontaneous formation of funnel and tunnel patterns \cite[Eq.~(16)]{bressloff2001} in V1, and we expect it also to be suitable for describing psychophysical experiments involving these patterns. Moreover, it is more amenable to mathematical analysis and yields the same qualitative behaviors as the original Wilson-Cowan model in many neural processes modeling \cite{ermentrout1998neural}.}
	{
		We also mention that the Amari-type neural field has been successfully applied to reproduce visual illusions \cite{baspinarCorticalInspired2021,bertalmio2021cortical}, and has recently been connected with the Divisive Normalization from visual psychophysics \cite{maloCortical2024}.
	}

	{ In this work, we focus on fixed contrast stimuli. We stress that as a consequence of our results, one can easily obtain that, for a fixed nonlinearity, the illusory phenomena are not reproduced for small contrast.}
	
	Notice that while sensory inputs in Billock and Tsou's experiments are time-varying, our study finds that a temporal flicker of the complementary region where the stimulus is not localized is not necessary to reproduce these intriguing visual phenomena (an observation already made in \cite{nicks2021}).
	Our interpretation is that Billock and Tsou's phenomena result wholly from the underlying non-local and nonlinear properties of neural activity in V1 rather than the temporal flickering of the complementary region where the stimulus is not localized. In particular, the flickering should instead be in the origin of illusory motions that subjects perceived in the after-images.% in these experiments
	
	The remaining of the paper is organized as follows: Section~\ref{s::GN} recalls some general notations used throughout the following. We present assumptions on model parameters used in \eqref{eq::NF-intro} in Section~\ref{ss::Assumption on parameters}. Section~\ref{ss::mathematical modeling} describes the mathematical modeling of visual stimuli associated with funnel patterns used in Billock and Tsou's experiments. In Section~\ref{s::well-posedness}, we recall some preliminary results related to the well-posedness of equation \eqref{eq::NF-intro} and those in the direction of modeling Billock and Tsou's experiments associated with a funnel pattern localized either in the fovea or in the peripheral visual field. The modeling of the phenomena using \eqref{eq::NF-intro} starts precisely in Section~\ref{s::theoretical Billock and Tsou}. In Section~\ref{ss::theoretical Billock and Tsou experiments linear}, we prove that the stationary output of \eqref{eq::NF-intro} associated with a pattern of horizontal stripes localized in the left area of V1 does not contain a pattern of vertical stripes in the white complementary region (the right area of V1) but rather a mixture of horizontal and vertical stripes if the response function is linear. In Section~\ref{ss::unreproducibility of BT with certain nonlinearities}, we prove that even with certain nonlinear response functions that exhibit strong inhibitory or excitatory influences and a weak slope, or a balance between excitatory and inhibitory influences, the stationary output of \eqref{eq::NF-intro} associated with a pattern of horizontal stripes localized in the left area of V1 is identical with that of the linear response function. Section~\ref{s::numerics} focuses precisely on proving that if, for instance, the response function in \eqref{eq::NF-intro} exhibits a good interplay between excitatory and inhibitory influence and a weak slope, then the stationary output associated with a pattern of horizontal stripes localized in the left area of V1 contains a pattern of vertical stripes in the white complementary region (the right area of V1) as Billock and Tsou reported. For this aim, we follow a numerical analysis-type of argument in Section~\ref{s::Analysis of NS}, together with an analysis of the corresponding numerical schemes. Section~\ref{s::Simulations} presents some numerical simulations that bolster our theoretical study. Finally, in Section~\ref{s::concluding remarks and discussion}, we discuss the main results of our paper and highlight areas for future work. We defer to  Appendix~\ref{s::complement resluts}, the proof of some technical results used in the paper.

	\subsection{General notations}\label{s::GN}
	
	In the following, $d\in\{1,2\}$ is the dimension of $\R^d$ and $|x|$ denote the Euclidean norm of $x\in\R^d$. For $p\in\{1,\infty\}$, $L^p(\R^d)$ is the Lebesgue space of class of real-valued measurable functions $u$ on $\R^d$ such that $|u|$ is integrable over $\R^d$ if $p=1$, and $|u|$ is essentially bounded over $\R^d$ when $p=\infty$. We endow these spaces with their standard norms $\|u\|_1 = \int_{\R^d}|u(x)|dx$  and $\|u\|_\infty = \operatorname{ess}\sup_{x\in\R^d}|u(x)|$. 
	We let %$X_\infty:=
	$C([0,\infty);L^\infty(\R^d))$
	be the space of all real-valued functions $u$ on $\R^d\times[0,\infty)$ such that, $u(x,\cdot)$ is continuous on $[0,\infty)$ for $\mbox{a.e.},\; x\in\R^d$ and $u(\cdot,t)\in L^p(\R^d)$ for every $t\in[0,\infty)$. We endow this space with the norm $\|u\|_{L_t^\infty L_x^\infty} = \sup\limits_{t\ge 0}\|u(\cdot,t)\|_\infty$.
	
	We let $\cS(\R^d)$ be the Schwartz space of rapidly-decreasing $C^\infty(\R^d)$ functions,
	and $\cS'(\R^d)$ be its dual space, i.e., the space of tempered distributions. Then, $\cS(\R^d)\subset L^p(\R^d)$ and $L^p(\R^d)\subset\cS'(\R^d)$ continuously. 
	The Fourier transform of $u\in L^1(\R^2)$ is defined by
	\begin{equation}\label{eq::Fourier transform in S}
		\widehat{u}(\xi):= \cF\{u\}(\xi)=\int_{\R^d} u(x)e^{-2\pi i\langle x,\xi\rangle}dx,\quad\forall\xi\in\R^d.
	\end{equation}
	Since $\cS(\R^d)\subset L^1(\R^2)$, one can extend the above by duality to $\cS'(\R^d)$, and in particular to $L^\infty(\mathbb{R}^d)$. The convolution of $u\in L^1(\R^d)$ and $v\in L^p(\R^d)$, $p\in\{1,\infty\}$, is %defined by
	\begin{equation}\label{eq::spatial convolution}
		(u\ast v)(x) = \int_{\R^d}u(x-y)v(y)dy,\qquad\forall x\in\R^d.
	\end{equation}
	
	Finally, the following notation will be helpful: if $F$ is a real-valued function defined on $\R^2$, we use  $F^{-1}(\{0\})$ to denote the zero level-set of $F$.
	
	\section{Assumption on parameters and mathematical modeling of visual stimuli }
	In this section, we will present assumptions that we will consider on the parameters in \eqref{eq::NF-intro}, specifically on the response function $f$ and on the connectivity kernel $\omega$, as it is highlighted in Section~\ref{ss::Assumption on parameters}. Then, in Section~\ref{ss::mathematical modeling}, we will present how we mathematically model the visual stimuli used in Billock and Tsou's experiments associated with a regular funnel pattern localized in the fovea or peripheral visual field that we incorporate as sensory inputs in \eqref{eq::NF-intro}. 
	
	\subsection{Assumption on parameters in the Amari-type equation}\label{ss::Assumption on parameters}
	We make the following assumption on parameters involved in \eqref{eq::NF-intro}. 
	\newline
	\paragraph{\textbf{Coupling kernel}:} In this article, we use a spatially homogeneous and isotropic interaction kernel $\omega$ in relation to coordinates $(x_1,x_2)$. It depends solely on the Euclidean distance among neurons, showing rotational symmetry. The ``Mexican-hat'' distribution is employed, a variant of the Difference of Gaussians (DoG) model with two components. The first Gaussian governs short-range excitatory interactions, and the second Gaussian models long-range inhibitory interactions in V1 neurons. Thus, the connectivity kernel is taken as:
	\begin{equation}\label{eq::connectivity}
		\omega(x) = [2\pi\sigma_1^2]^{-1}e^{-\frac{|x|^2}{2\sigma_1^2}}-\kappa[2\pi\sigma_2^2]^{-1}e^{-\frac{|x|^2}{2\sigma_2^2}},\qquad x\in\R^2
	\end{equation}
	where $\kappa>0$, and $\sigma_1$ and $\sigma_2$ satisfy $0<\sigma_1<\sigma_2$ and $\sigma_1\sqrt{\kappa}<\sigma_2$. The latter condition is crucial for explicitly calculating the $L^1$-norm of $\omega$, as detailed in \eqref{eq::L^1-norm of omega}.
	
	Note that $\omega(x)$ is equivalent to $\omega(|x|)$, and $\omega$ belongs to the Schwartz space $\cS(\R^2)$. The Fourier transform of $\omega$ is explicitly given by:
	\begin{equation}\label{eq::Fourier transform of the kernel omega}
		\widehat{\omega}(\xi)= e^{-2\pi^2\sigma_1^2|\xi|^2}-\kappa e^{-2\pi^2\sigma_2^2|\xi|^2},\qquad\qquad\qquad\forall\xi\in\R^2
	\end{equation}
	and the maximum of $\widehat{\omega}$ occurs at every vector $\xi_c\in\R^2$ satisfying $|\xi_c| = q_c$. Thus:
	\begin{equation}\label{eq::maximum of omegahat}
		q_c := \sqrt{\frac{\log\left(\frac{\kappa\sigma_2^2}{\sigma_1^2}\right)}{2\pi^2(\sigma_2^2-\sigma_1^2)}}\qquad\qquad\text{and}\qquad\qquad \max\limits_{r\ge 0}\widehat{\omega}(r) = \widehat{\omega}(q_c).
	\end{equation}
	The $L^1$-norm of $\omega$ is also explicitly represented by:
	\begin{equation}\label{eq::L^1-norm of omega}
		\|\omega\|_1 = (1-\kappa)+2\left(\kappa e^{-\frac{\Theta^2}{2\sigma_2^2}}-e^{-\frac{\Theta^2}{2\sigma_1^2}}\right)\qquad\quad\text{with}\qquad\quad \Theta:= \sigma_1\sigma_2\sqrt{\frac{2\log\left(\frac{\sigma_2^2}{\kappa\sigma_1^2}\right)}{\sigma_2^2-\sigma_1^2}}.
	\end{equation}
	
	Let us mention that $\omega$ might not satisfy the \textit{balanced} condition $\widehat{\omega}(0)=0$, an equilibrium between excitation and inhibition. Nonetheless, this equilibrium is achieved when $\kappa = 1$.
	\begin{figure}[t]
		\centering
		\includegraphics[width=.45\linewidth]{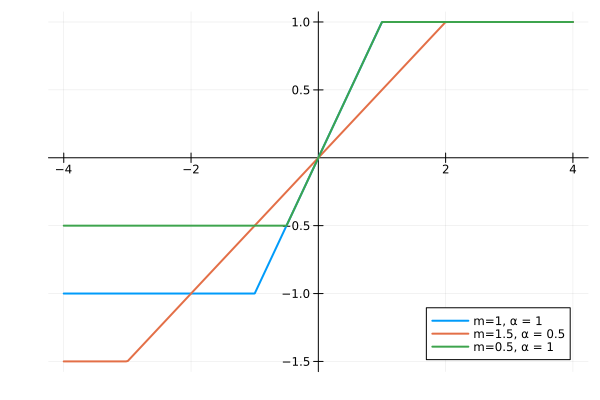}\hspace{0.1em}
		\includegraphics[width=.45\linewidth]{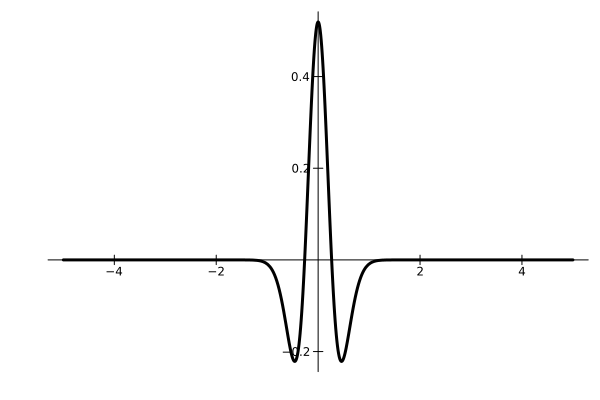}
		\caption{On the \textit{left}, nonlinear response functions $f_{m,\alpha}(s) = \max(-m,\min(1,\alpha s))$ for different values of $m$ and $\alpha$. On the \textit{right} a $1$D DoG kernel $\omega$.}
		\label{fig:response function}
	\end{figure}
	
	Finally, in the sequel, we use the letter $C_\omega$ to denote any positive constant depending only on the parameters involved in the definition of $\omega$.
	
	\paragraph{\textbf{Response function}:} 
	The choice of the response function $f$ is crucial, and it is motivated by authors' previous works \cite{tamekue2023,tamekue2022reproducing}. Indeed, in \cite[Figs.~5 and~6]{tamekue2022reproducing} we illustrated the capability of Equation \eqref{eq::NF-intro} to reproduce Billock and Tsou experiments with the nonlinear response function $f(s) = (1+\exp(-s+0.25))^{-1}-(1+\exp(0.25))^{-1}$, and that $f(s)=\tanh(s)$ does not reproduce the phenomenon, suggesting that certain (non-odd) sigmoidal-type response functions are required to model the phenomenon. In \cite[Section~4]{tamekue2023}, we briefly explained why the stationary output pattern of the Amari-type \eqref{eq::NF-intro} does not capture the essential features of the visual illusions reported by Billock and Tsou's when the response function is linear. Moreover, still in \cite[Fig.~8]{tamekue2023}, by considering the ``sigmoidal-type'' response function $f_{m,\alpha}(s) = \max(-m,\min(1,\alpha s))$  with $m\ge 0$ and $\alpha>0$, we figured out ranges on parameters $m$ and $\alpha$ for which the stationary output pattern of the Amari-type \eqref{eq::NF-intro} captures the essential features of the visual illusions reported by Billock and Tsou's. More precisely, \cite[Fig.~8]{tamekue2023} suggests that nonnegative $f_{0,\alpha}$, odd $f_{1,\alpha}$ with $0<\alpha<\infty$, nonlinearities $f_{m,\alpha}$ with strong inhibitory influence $m>1$ and weak slope $0<\alpha< 1$ as well as nonlinearities $f_{m,\alpha}$ with strong excitatory influence and weak slope $0<\alpha< m\le 1$ do not model Billock and Tsou's experiments associated with a regular funnel pattern localized either in the fovea or in the peripheral visual field. While for other values of $m$ and $\alpha$, \eqref{eq::NF-intro} with the response function $f_{m,\alpha}$ captures the essential features of the visual illusions reported by Billock and Tsou (either the ``strong'' or the ``weak'' phenomenon, as recalled in Section~\ref{ss::BT experiments}).
	
	Observe also that $f_{m,\alpha}$ is a non-smooth ``mathematical approximation'' of the following sigmoid function, frequently used in neural field models like \eqref{eq::NF-intro},
	\begin{equation}
		\label{eq:sigmoide-BT-smooth}
		g_{\gamma, \nu}(s) := (1+\exp(-\gamma(s-\nu)))^{-1}-(1+\exp(\kappa\nu))^{-1},\qquad \gamma>0, \nu>0.
	\end{equation}
	
	In this paper, when referring to a response function, we will always assume the following.
	
	\begin{definition}
		\label{def:response-fct}
		A response function is a non-decreasing Lipschitz continuous function $f:\R\to \R$ such that $f(0)=0$, $f$ is differentiable at $0$, and $\alpha:=f'(0) = \|f'\|_\infty$.
	\end{definition}
	
	Of particular interest in the rest of the paper is the family of response functions given by
	\begin{equation}\label{eq::f m alpha}
		% f(s):=
		f_{m,\alpha}(s) = \max(-m,\min(1,\alpha s)) = \begin{cases}
			1&\quad\mbox{if}\quad s\ge \frac 1\alpha\cr
			\alpha s&\quad\mbox{if}\quad-\frac m\alpha\le s\le \frac 1\alpha,\cr
			-m&\quad\mbox{if}\quad s\le -\frac m\alpha
		\end{cases}\qquad s\in\R
	\end{equation}
	for every $0\le m<\infty$ and $0<\alpha<\infty$, or by
	\begin{equation}\label{eq::f infinity alpha}
		f_{\infty,\alpha}(s) 
		=\min(1,\alpha s)
		\qquad s\in\R
	\end{equation}
	for every $0<\alpha<\infty$. Please refer to Figure~\ref{fig:response function} for a visual illustration. 
	Notice that, whenever $m\ge 0$ is finite, $f_{m,\alpha}$ is bounded.
	
	Finally, it is worth emphasizing that the spatially forced pattern-forming mechanism that we are studying is qualitatively the same if instead of $f_{m,\alpha}$ we use the smooth sigmoid $g_{\gamma,\nu}$ since the neural field model \eqref{eq::NF-intro} is structurally stable.
	\newline
	\paragraph{\textbf{The intra-neural connectivity parameter $\mu>0$}:} Following our previous works \cite{tamekue2024mathematical,tamekue2023,tamekue2022reproducing}, we assume that $\mu>0$ is smaller than the threshold parameter $\mu_c>0$ where certain geometric patterns spontaneously emerge in V1 in the absence of sensory inputs from the retina, see for instance, \cite{bressloff2001,curtu2004,ermentrout1979,nicks2021}. This threshold parameter is referred to as the bifurcation point, and it is analytically given by
	\begin{equation}\label{eq::parameter mu_c}
		\mu_c := \frac{1}{f_{m,\alpha}'(0)\widehat{\omega}(q_c)} = \frac{1}{\alpha\widehat{\omega}(q_c)}
	\end{equation}
	where $\widehat{\omega}(q_c)$ is defined by \eqref{eq::maximum of omegahat}. Moreover, we let
	\begin{equation}\label{eq::parameter mu_0}
		\mu_0 := \frac{1}{f_{m,\alpha}'(0)\|\omega\|_1} = \frac{1}{\alpha\|\omega\|_1}\le\mu_c
	\end{equation}
	be the largest value of $\mu$ up to which we can ensure the existence and uniqueness of a stationary solution to \eqref{eq::NF-intro} in the space $L^\infty(\R^2)$. We henceforth assume that
	\begin{equation}\label{eq::general asumption}
		\mu<\mu_0.
	\end{equation}
	\begin{remark}
		The response function $f_{m,\alpha}$ is globally bounded for all finite $m\ge 0$ and $\alpha>0$ ensuring that, independently of $\mu>0$, the solution $a\in C([0,\infty);L^\infty(\R^d))$ of \eqref{eq::NF-intro} is uniformly bounded for $t\in [0,+\infty)$, for any initial datum $a_0\in L^\infty(\R^2)$ and sensory input $I\in L^\infty(\R^2)$. See for instance \cite[Theorem~B.6.]{tamekue2024mathematical}.
		Although the semilinear response function $f_{\infty,\alpha}$ is unbounded, we prove in Section~\ref{s::well-posedness} that this is still true under the assumption $\mu<\mu_{0}$.
	\end{remark}

	\subsection{Mathematical modeling of visual stimuli}\label{ss::mathematical modeling}
	In this section, we mathematically model the cortical representation of visual stimuli associated with funnel patterns used in Billock and Tsou's experiments. Then, we incorporate them as sensory inputs in \eqref{eq::NF-intro}. Note that we are devoted to modeling the static version of these phenomena. Here, ``static'' refers to a physical visual stimulus that induces an afterimage on the retina, resulting in illusory contours that do not exhibit apparent motion. Consequently, we will not consider a time-dependent sensory input, which should incorporate the modeling of flickering employed in the experiment. However, as we already pointed out, this consideration will be enough for the corresponding stationary output pattern of \eqref{eq::NF-intro} to capture all the essential features (illusory contours) of the after-image reported by Billock and Tsou.
	
	Recall that the functional architecture of V1 exhibits a remarkable characteristic known as retinotopic organization \cite{tootell1982}: the neurons in the V1 area are arranged orderly, forming a topographic or retinotopic map (well-known as the retino-cortical map). This map represents a two-dimensional projection of the visual image formed on the retina. Notably, neighboring regions of the visual field are represented by neighboring regions of neurons in V1, establishing a bijective relationship.
	
	Up to the authors' knowledge, the retino-cortical map was first represented analytically as a complex logarithmic map in \cite{schwartz1977}. Let $(r,\theta)\in[0,\infty)\times[0,2\pi)$ denote polar coordinates in the visual field (or in the retina) and $(x_1,x_2)\in\R^2$ Cartesian coordinates in V1. The retino-cortical map (see also \cite{tamekue2022reproducing} and references within) is analytically given by 
	\begin{equation}\label{eq::retino-cortical}
		r e^{i\theta}  \mapsto (x_1,x_2):=\left( \log r, \theta \right).
	\end{equation}
	\begin{figure}[t]
		\centering
		% First image with TikZ square contour
		\begin{minipage}{0.45\textwidth}
			\centering
			\begin{tikzpicture}[remember picture]
				\node[anchor=south west,inner sep=0] (image1) at (0,0) {
					\includegraphics[width=0.75\textwidth]{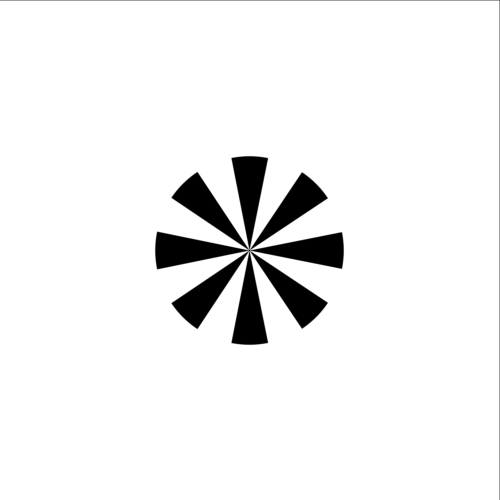}
				};
				\begin{scope}[x={(image1.south east)},y={(image1.north west)}]
					\draw[black, thick, opacity=0.1] (0,0) rectangle (1,1);
				\end{scope}
			\end{tikzpicture}
			\caption{Funnel pattern in the centre of the visual field.}
			\label{fig:fovea funnel}
		\end{minipage}
		\hfill
		% Second image with TikZ square contour
		\begin{minipage}{0.45\textwidth}
			\centering
			\begin{tikzpicture}[remember picture]
				\node[anchor=south west,inner sep=0] (image2) at (0,0) {
					\includegraphics[width=0.75\textwidth]{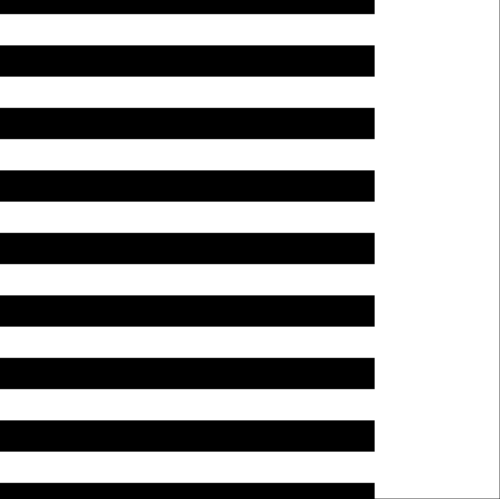}
				};
				\begin{scope}[x={(image2.south east)},y={(image2.north west)}]
					\draw[black, thick, opacity=0.1] (0,0) rectangle (1,1);
				\end{scope}
			\end{tikzpicture}
			\caption{Horizontal stripes in the left area of V1.}
			\label{fig:left horizontal stripes}
		\end{minipage}
		% Arrow with text (above) and math (below), drawn after both images have been placed
		\begin{tikzpicture}[remember picture, overlay]
			\draw[->, thick] ([xshift=5pt]image1.east) -- ([xshift=-5pt]image2.west)
			node[midway, above, font=\small] {retino-cortical map}
			node[midway, below, font=\small] {$r e^{i\theta}  \mapsto\left( \log r, \theta \right)$};
		\end{tikzpicture}
	\end{figure}
	Due to the retino-cortical map analytical representation~\eqref{eq::retino-cortical} and consistent with spontaneous patterns description \cite{bressloff2001,ermentrout1979}, we consider that the function which generates the funnel pattern is given in Cartesian coordinates $x:=(x_1,x_2)\in\R^2$ of V1 by
	\begin{equation}\label{eq::funnel pattern}
		P_F(x) = \cos(2\pi\lambda x_2), \qquad\lambda>0.%\qquad\qquad P_T(x) = \cos(2\pi\lambda x_1),\quad\lambda>0.
	\end{equation}
	
	Let us point out that one of the fundamental properties of the retinotopic projection of the visual field into V1 is that small objects centred on the fovea (centre of the visual field) have a much larger representation in V1 than do similar objects in the peripheral visual field. 
	\begin{figure}[t]
		\centering
		% First image with TikZ square contour
		\begin{minipage}{0.45\textwidth}
			\centering
			\begin{tikzpicture}[remember picture]
				\node[anchor=south west,inner sep=0] (image1) at (0,0) {
					\includegraphics[width=0.75\textwidth]{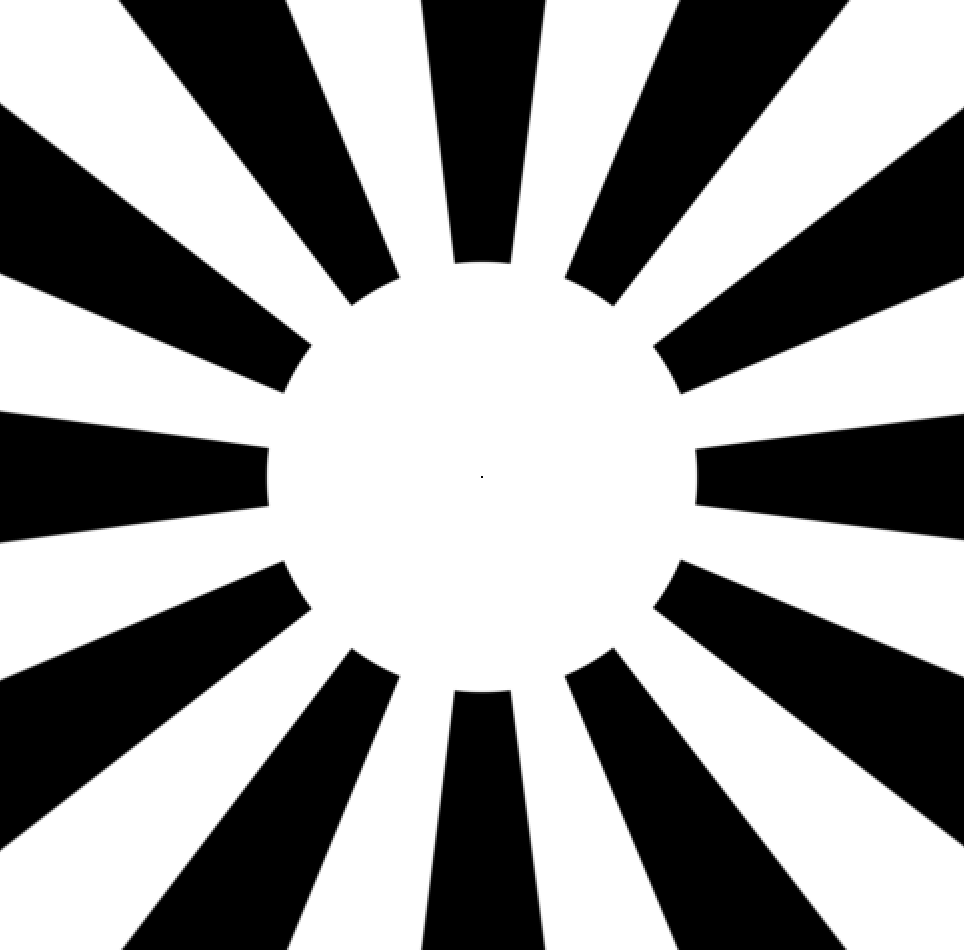}
				};
				\begin{scope}[x={(image1.south east)},y={(image1.north west)}]
					\draw[black, thick, opacity=0.1] (0,0) rectangle (1,1);
				\end{scope}
			\end{tikzpicture}
			\caption{Funnel pattern in the peripheral visual field.}
			\label{fig:peripheral funnel}
		\end{minipage}
		\hfill
		% Second image with TikZ square contour
		\begin{minipage}{0.45\textwidth}
			\centering
			\begin{tikzpicture}[remember picture]
				\node[anchor=south west,inner sep=0] (image2) at (0,0) {
					\includegraphics[width=0.75\textwidth]{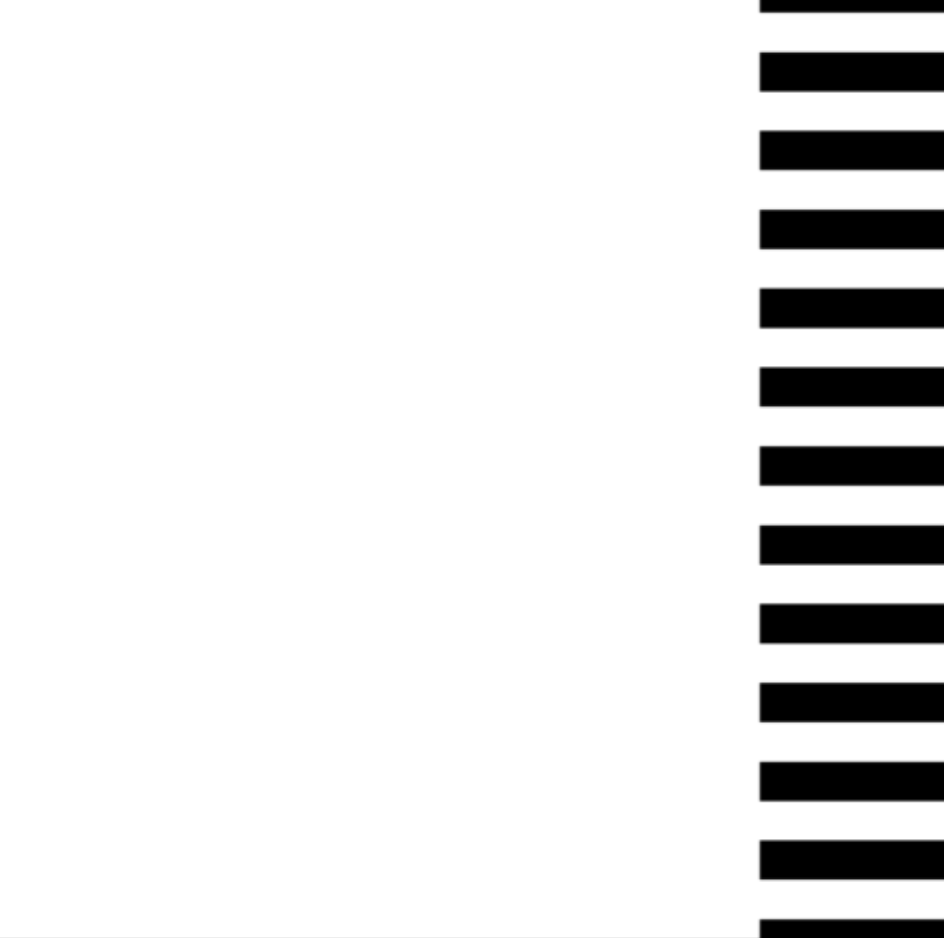}
				};
				\begin{scope}[x={(image2.south east)},y={(image2.north west)}]
					\draw[black, thick, opacity=0.1] (0,0) rectangle (1,1);
				\end{scope}
			\end{tikzpicture}
			\caption{Horizontal stripes in the right area of V1.}
			\label{fig:right horizontal stripes}
		\end{minipage}
		% Arrow with text (above) and math (below), drawn after both images have been placed
		\begin{tikzpicture}[remember picture, overlay]
			\draw[->, thick] ([xshift=5pt]image1.east) -- ([xshift=-5pt]image2.west)
			node[midway, above, font=\small] {retino-cortical map}
			node[midway, below, font=\small] {$r e^{i\theta}  \mapsto\left( \log r, \theta \right)$};
		\end{tikzpicture}
	\end{figure}
	As a result, the cortical representation of Billock and Tsou's visual stimulus associated, e.g., with the funnel pattern localized respectively in the fovea and in the peripheral visual field, should consist of taking the sensory input as 
	\begin{equation}\label{eq::sensory inputs}
		I_L(x_1,x_2) = P_F(x_1,x_2)H(\theta_L-x_1),\qquad I_R(x_1,x_2) = P_F(x_1,x_2)H(x_1-\theta_R).    
	\end{equation}
	Here, $\theta_L$ and $\theta_R$ are nonnegative real numbers, and $H$ is the Heaviside step function, modeling that the funnel pattern is localized in the fovea and the peripheral visual field, respectively. Note that $I_L$ and $I_R$ correspond to sensory inputs consisting of horizontal stripes in the left and right areas of the cortex V1. Indeed, since visual stimuli employed in these experiments are alternating sequences of white and black zones, we represent every cortical function, say $I_R$, as defined in \eqref{eq::sensory inputs} in terms of a binary image, corresponding to the zero-level set of $I_R$, in the following way: in the regions where $I_R>0$ we put the \emph{black grayscale} and where $I_R\le 0$ we put the \emph{white grayscale}, refer for instance, to Figures~\ref{fig:left horizontal stripes} and~\ref{fig:right horizontal stripes}.
	\begin{remark}
		For ease in the presentation, in the following, we will restrict ourselves to the funnel pattern $I_L$ localized in the left area of V1 since the same analysis can be straightforwardly adapted to $I_R$.
	\end{remark}
	
	\section{Preliminary results on the Amari-type equation}\label{s::well-posedness}
	In this section, we begin by discussing the concept of a stationary state as it applies to \eqref{eq::NF-intro}. Following this, we review essential preliminary findings related to the well-posedness of the same equation that is necessary to comprehend the rest of the paper.
	\begin{definition}[Stationary state]\label{def::stationary state to WC equation}
		Let $a_0\in L^p(\R^2)$. For every $I\in L^p(\R^2)$, a stationary state $a_I\in L^p(\R^2)$ to \eqref{eq::NF-intro} is  a time-invariant solution, viz.
		\begin{equation}\label{eq::SS}\tag{SS}
			a_I = \mu\omega\ast f(a_I)+I.
		\end{equation}
	\end{definition}
	
	The following well-posedness result is \cite[Theorem~3.1]{tamekue2024mathematical}, which only relies on the global Lipschitz property of the nonlinearity $f$. 
	\begin{theorem}[\cite{tamekue2024mathematical}]\label{thm::existence of stationary input}
		Let $I\in L^\infty(\R^2)$. For any initial datum $a_0\in L^\infty\R^2)$, there exists a unique $a\in C([0,\infty);L^\infty(\R^d))$, solution to \eqref{eq::NF-intro}. If $\mu<\mu_0$, there exists a unique stationary state $a_I\in L^\infty(\R^2)$ to \eqref{eq::NF-intro}. Moreover, the following holds.
		\begin{equation}\label{eq::Decay of solution}
			\|a(\cdot,t)-a_I\|_{\infty}\le e^{-(1-\mu \|\omega\|_1)t}\|a_0-a_I\|_\infty,
			{\qquad \text{for any } t\ge 0.}
		\end{equation}
	\end{theorem}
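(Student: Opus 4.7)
The plan is to treat \eqref{eq::NF-intro} as an abstract ODE in the Banach space $X = L^\infty(\R^2)$ and exploit the fact that, although $X$ is infinite-dimensional, the right-hand side is globally Lipschitz thanks to the global Lipschitz property of $f$ (with constant $\alpha$) combined with $\omega \in L^1(\R^2)$. I would split the statement into three claims and address them in the natural order: (i) global well-posedness; (ii) existence and uniqueness of the stationary state under $\mu < \mu_0$; (iii) the exponential stability estimate \eqref{eq::Decay of solution}.

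For (i), rewrite \eqref{eq::NF-intro} as $\dot a = F(a)$ with $F(u) := -u + \mu\,\omega \ast f(u) + I$. Using Young's inequality and the Lipschitz bound $\|f(u)-f(v)\|_\infty \le \alpha \|u-v\|_\infty$, one gets
\begin{equation}
\|F(u)-F(v)\|_\infty \le (1 + \mu \alpha \|\omega\|_1)\|u-v\|_\infty,
\end{equation}
so $F$ is globally Lipschitz on $X$. The Banach-space version of the Picard--Lindelöf theorem—most cleanly proved by contracting the integral formulation $a(t) = a_0 + \int_0^t F(a(s))\,ds$ in $C([0,T];X)$ endowed with a weighted norm $\sup_{t\in[0,T]} e^{-Lt}\|u(t)\|_\infty$ for $L$ slightly larger than the Lipschitz constant of $F$—then produces a unique global solution $a\in C([0,\infty);X)$.

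For (ii), observe that $a_I\in X$ solves \eqref{eq::SS} if and only if it is a fixed point of the affine map $T: X \to X$, $T(u) := \mu\,\omega \ast f(u) + I$. Young plus Lipschitz again gives $\|T(u)-T(v)\|_\infty \le \mu\alpha\|\omega\|_1 \|u-v\|_\infty$, and the assumption $\mu < \mu_0 = 1/(\alpha\|\omega\|_1)$ makes $T$ a strict contraction. Banach's fixed point theorem then delivers a unique $a_I\in X$.

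For (iii), set $w(\cdot,t) := a(\cdot,t) - a_I$. Subtracting \eqref{eq::SS} from \eqref{eq::NF-intro} yields
\begin{equation}
\partial_t w + w = \mu\,\omega \ast [f(a) - f(a_I)].
\end{equation}
Duhamel's formula, followed by Young and the Lipschitz bound, gives
\begin{equation}
\|w(\cdot,t)\|_\infty \le e^{-t}\|w(\cdot,0)\|_\infty + \mu\alpha\|\omega\|_1\int_0^t e^{-(t-s)}\|w(\cdot,s)\|_\infty\,ds.
\end{equation}
Multiplying by $e^t$ and applying Grönwall's lemma to $\psi(t) := e^t\|w(\cdot,t)\|_\infty$ produces the exponential decay with rate $1 - \mu\alpha\|\omega\|_1 > 0$, as stated.

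The main obstacle is not any single step—each is standard—but rather the bookkeeping of constants in the fixed-point arguments and Grönwall estimate, together with ensuring that the convolution $\omega\ast f(u)$ is genuinely well-defined as a map $L^\infty \to L^\infty$ (which rests on $\omega\in L^1$, so that we stay within the functional setting even though $f(u)$ need not vanish at infinity). A secondary subtlety is that the unboundedness of $\R^2$ prevents us from relying on compactness, which is precisely why the contraction-mapping point of view, rather than a Schauder-type argument, is the right tool here.
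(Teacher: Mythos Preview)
Your proposal is correct and follows exactly the standard route one would expect; note that the paper does not actually prove this theorem but merely cites it from \cite{tamekue2024mathematical}, remarking that the argument ``only relies on the global Lipschitz property of the nonlinearity $f$''. Your three-step layout (Picard--Lindel\"of for global well-posedness, Banach fixed point for the stationary state, Duhamel plus Gr\"onwall for the decay) is precisely that argument, and the paper later uses the same Duhamel-then-Gr\"onwall pattern in its proof of Theorem~\ref{thm::boundness of the solution}. One small remark: your Gr\"onwall step yields the rate $1-\mu\alpha\|\omega\|_1$, which is the correct constant (equal to $1-\mu/\mu_0$); the paper's stated rate $1-\mu\|\omega\|_1$ appears to suppress the factor $\alpha$.
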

	In the following theorem, we prove the uniform boundedness of the solution under the assumptions of Section~\ref{ss::Assumption on parameters}. 
	\begin{theorem}\label{thm::boundness of the solution}
		Let $a_0\in L^\infty(\R^2)$, $I\in L^\infty(\R^2)$ and $a\in C([0,\infty);L^\infty(\R^d))$ be the solution of \eqref{eq::NF-intro}. Then,  
		\begin{itemize}
			\item[i.] If $0<\mu<\mu_0$, it holds
			\begin{equation}\label{eq::bound on limsup of a mu less than mu_0}							\|a_I\|_\infty = \lim\limits_{t\rightarrow+\infty}\|a(\cdot,t)\|_\infty\le\|I\|_\infty\left(1-\frac{\mu}{\mu_0}\right)^{-1}
			\end{equation}
			where $a_I$ is the stationary solution to Equation \eqref{eq::NF-intro} given by Theorem~\ref{thm::existence of stationary input}.
			\item[ii.] If $\mu=\mu_0$, we have
			\begin{equation}\label{eq::bound on limsup of a mu tends to mu_0}
				\|a(\cdot,t)\|_\infty\le \|I\|_\infty t+\|a_0\|_\infty, \qquad \text{for any } t\ge 0.
			\end{equation}
		\end{itemize}
	\end{theorem}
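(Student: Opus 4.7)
The plan is to split into the two cases using different techniques, exploiting in each the precise relation between $\mu$ and $\mu_0=1/(\alpha\|\omega\|_1)$. For \emph{(i)}, since $\mu<\mu_0$, Theorem~\ref{thm::existence of stationary input} gives exponential $L^\infty$-convergence $a(\cdot,t)\to a_I$, hence $\lim_{t\to\infty}\|a(\cdot,t)\|_\infty = \|a_I\|_\infty$, and it suffices to bound $\|a_I\|_\infty$. Taking the $L^\infty$-norm of the fixed-point identity \eqref{eq::SS}, applying Young's convolution inequality, and using the Lipschitz bound $|f(s)|\le\alpha|s|$ (which follows from $f(0)=0$ and $\|f'\|_\infty=\alpha$) yields
\begin{equation}
\|a_I\|_\infty\le \mu\alpha\|\omega\|_1\|a_I\|_\infty+\|I\|_\infty=(\mu/\mu_0)\|a_I\|_\infty+\|I\|_\infty,
\end{equation}
and rearranging using $\mu/\mu_0<1$ gives the stated estimate.

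For \emph{(ii)}, the previous argument collapses because $\mu/\mu_0=1$ and the decay rate in \eqref{eq::Decay of solution} vanishes, so I would instead work from the Duhamel integral representation of \eqref{eq::NF-intro},
\begin{equation}
a(x,t)=e^{-t}a_0(x)+\int_0^t e^{-(t-s)}\bigl[\mu(\omega\ast f(a))(x,s)+I(x)\bigr]\,ds.
\end{equation}
Taking the supremum in $x$, using $\mu\alpha\|\omega\|_1=1$, and setting $\phi(t):=\|a(\cdot,t)\|_\infty$, I obtain
\begin{equation}
\phi(t)\le e^{-t}\|a_0\|_\infty+\int_0^t e^{-(t-s)}\phi(s)\,ds+\|I\|_\infty(1-e^{-t}).
\end{equation}
Rescaling $\psi(t):=e^t\phi(t)$ and setting $h(t):=\|a_0\|_\infty+\|I\|_\infty(e^t-1)$ turns this into $\psi(t)\le h(t)+\int_0^t\psi(s)\,ds$. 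With $w(t):=\int_0^t\psi$, the equivalent differential inequality $w'-w\le h$ with $w(0)=0$ is integrated explicitly as $(e^{-t}w)'\le e^{-t}h$; substituting $e^{-s}h(s)=\|I\|_\infty+(\|a_0\|_\infty-\|I\|_\infty)e^{-s}$ produces $w(t)\le \|I\|_\infty t\,e^t+(\|a_0\|_\infty-\|I\|_\infty)(e^t-1)$, and adding back $h(t)$ telescopes to $\psi(t)\le e^t(\|a_0\|_\infty+\|I\|_\infty t)$, i.e., $\phi(t)\le \|a_0\|_\infty+\|I\|_\infty t$.

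The main obstacle is the borderline case in \emph{(ii)}: the crude Gronwall bound $\psi(t)\le h(t)e^t$ loses the expected linear growth and produces an $e^{2t}$-type estimate. The clean $\|I\|_\infty t$ term emerges only after the refined Gronwall estimate $\psi(t)\le h(t)+\int_0^t h(s)e^{t-s}\,ds$ is combined with the exact cancellation between $h(t)$ and the contribution $(\|a_0\|_\infty-\|I\|_\infty)(e^t-1)$ coming from the initial datum. Once this delicate accounting is in place, the remaining steps reduce to routine computation, and the rescaling $\psi=e^t\phi$ is precisely what absorbs the resonance at $\mu=\mu_0$.
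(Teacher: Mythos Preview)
Your proof is correct. For part~\emph{(ii)} your argument is essentially identical to the paper's: both start from the Duhamel representation, rescale by $e^t$, and apply the refined Gr\"onwall estimate $u(t)\le u(0)e^{G(t)}+\int_0^t h(s)e^{G(t)-G(s)}\,ds$ (the paper cites this as Lemma~\ref{lem::gronwall lemma}, while you rederive it by hand via the auxiliary $w=\int_0^t\psi$).

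For part~\emph{(i)}, however, you take a genuinely different and more economical route. The paper treats both cases uniformly: it runs the same Duhamel--Gr\"onwall machinery for general $\mu$, obtains the transient bound
\[
\|a(\cdot,t)\|_\infty\le e^{-(1-\mu/\mu_0)t}\|a_0\|_\infty+\|I\|_\infty\bigl(1-\tfrac{\mu}{\mu_0}\bigr)^{-1}\bigl(1-e^{-(1-\mu/\mu_0)t}\bigr),
\]
and then lets $t\to\infty$. You instead invoke Theorem~\ref{thm::existence of stationary input} to pass directly to the limit $\|a_I\|_\infty$, and bound the latter by a one-line contraction estimate on the stationary equation~\eqref{eq::SS}. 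Your approach is shorter and avoids any time-dependent analysis for~\emph{(i)}; the paper's approach has the advantage of producing the explicit transient bound above as a by-product, and of treating $\mu<\mu_0$ and $\mu=\mu_0$ in a single framework.
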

	\begin{proof}
		We recall from Theorem~\ref{thm::existence of stationary input} that
		for all $x\in\R^2$, and every $t\ge 0$ we have 
		\begin{equation}\label{eq::Variation of constant formula NF}
			a(x,t) = e^{- t}a_0(x)+\left(1-e^{- t}\right)I(x)+\mu\int_{0}^{t}e^{-(t-s)}(\omega\ast f(a))(x,s)ds.
		\end{equation}
		Therefore, we apply Minkowski's and Young convolution inequalities to \eqref{eq::Variation of constant formula NF}, and obtain for any $t\ge 0$,
		\begin{equation}\label{eq::bound on absolute value of a}
			e^{t}\|a(\cdot,t)\|_\infty\le\|a_0\|_\infty+\frac{\mu}{\mu_0}\int_{0}^{t}e^{s}\|a(\cdot,s)\|_\infty ds+\|I\|_\infty\int_{0}^{t}e^{s}ds
		\end{equation}
		using that $f$ is $\alpha$-Lipschitz continuous. Applying Grönwall's Lemma~\ref{lem::gronwall lemma}
		with $u(t) = e^{t}\|a(\cdot,t)\|_\infty$, $g(t) = \mu/\mu_0$ and $h(t)=\|I\|_\infty e^{t}$ to \eqref{eq::bound on absolute value of a} yields \eqref{eq::bound on limsup of a mu tends to mu_0} for $\mu=\mu_0$, while for $\mu\neq\mu_0$ one gets
		\begin{equation}\label{eq::bound on transient sol a}
			\|a(\cdot,t)\|_\infty\le e^{-\left(1-\frac{\mu}{\mu_0}\right)t}\|a_0\|_\infty+\|I\|_\infty\left(1-\frac{\mu}{\mu_0}\right)^{-1}\left(1-e^{-\left(1-\frac{\mu}{\mu_0}\right)t}\right), \qquad \forall t\ge 0.
		\end{equation}
		Inequality  \eqref{eq::bound on limsup of a mu less than mu_0} follows directly.
	\end{proof}
	One also has the following.
	\begin{proposition}\label{pro::BT NC}
		Under the assumption $\mu<\mu_0$, for any $\alpha>0$ we let 
		\begin{equation}
			m_{\alpha}:=\alpha \|I\|_\infty \left(1-\frac{\mu}{\mu_0}\right)^{-1}.
		\end{equation}
		Then, for any $m\ge m_{\alpha}$
		the stationary solution of \eqref{eq::NF-intro} with response function $f_{m,\alpha}$ coincides with the unique stationary solution to the same equation with response function $f_{m_\alpha,\alpha}$.
	\end{proposition}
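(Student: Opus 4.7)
The plan is to exploit the uniform bound from Theorem~\ref{thm::boundness of the solution} to show that, when $m \ge m_\alpha$, the lower saturation branch of $f_{m,\alpha}$ is never activated on the range of values taken by the stationary state. Concretely, let $a_m \in L^\infty(\R^2)$ denote the unique stationary solution of \eqref{eq::NF-intro} with response function $f_{m,\alpha}$, which exists by Theorem~\ref{thm::existence of stationary input} (note that the threshold $\mu_0 = 1/(\alpha\|\omega\|_1)$ depends only on $\alpha = f'_{m,\alpha}(0)$, not on $m$, so the assumption $\mu < \mu_0$ applies uniformly in $m$).

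First, I would invoke Theorem~\ref{thm::boundness of the solution}(i) to get
\begin{equation}
\|a_m\|_\infty \;\le\; \|I\|_\infty\left(1-\frac{\mu}{\mu_0}\right)^{-1} \;=\; \frac{m_\alpha}{\alpha}.
\end{equation}
In particular, for a.e. $x \in \R^2$ one has $\alpha\, a_m(x) \ge -m_\alpha \ge -m$, so the lower clamp in the definition of $f_{m,\alpha}$ plays no role: $f_{m,\alpha}(a_m(x)) = \min(1,\alpha a_m(x))$. The very same computation applied to $f_{m_\alpha,\alpha}$ gives $f_{m_\alpha,\alpha}(a_m(x)) = \min(1,\alpha a_m(x))$. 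Therefore $f_{m,\alpha}(a_m) = f_{m_\alpha,\alpha}(a_m)$ pointwise a.e., and substituting into \eqref{eq::SS} we obtain
\begin{equation}
a_m \;=\; \mu\,\omega \ast f_{m,\alpha}(a_m) + I \;=\; \mu\,\omega \ast f_{m_\alpha,\alpha}(a_m) + I,
\end{equation}
which says that $a_m$ is a stationary state of \eqref{eq::NF-intro} with response function $f_{m_\alpha,\alpha}$.

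To conclude, I would apply the uniqueness part of Theorem~\ref{thm::existence of stationary input} to the equation with $f_{m_\alpha,\alpha}$: since $\mu < \mu_0$ and $f'_{m_\alpha,\alpha}(0) = \alpha$, there is exactly one such stationary state, namely $a_{m_\alpha}$; hence $a_m = a_{m_\alpha}$, which is the desired identity.

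Honestly, there is no serious obstacle here: the entire argument hinges on the explicit $L^\infty$ bound of Theorem~\ref{thm::boundness of the solution} matching exactly the clamp threshold $m_\alpha/\alpha$, together with the fact that $\mu_0$ is insensitive to $m$. The only small point to be careful about is to verify that the well-posedness hypotheses of Theorems~\ref{thm::existence of stationary input} and~\ref{thm::boundness of the solution} apply simultaneously to both $f_{m,\alpha}$ and $f_{m_\alpha,\alpha}$, which is immediate since both have the same slope $\alpha$ at zero and are globally Lipschitz.
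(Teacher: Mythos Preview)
Your proof is correct and follows essentially the same approach as the paper: both use the $L^\infty$ bound of Theorem~\ref{thm::boundness of the solution} to see that $|a_m|\le m_\alpha/\alpha$ a.e., deduce that $f_{m,\alpha}(a_m)=f_{m_\alpha,\alpha}(a_m)$ pointwise, and conclude by the uniqueness of the stationary state in Theorem~\ref{thm::existence of stationary input}. The only cosmetic difference is that the paper splits the convolution integral via characteristic functions whereas you argue directly that the lower clamp is inactive; the substance is identical.
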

	\begin{proof}
		By Theorem~\ref{thm::existence of stationary input}, the stationary solution $a_{m,\alpha}\in L^\infty(\R^2)$ to \eqref{eq::NF-intro} with response function $f_{m,\alpha}$ is the unique solution of $a_{m,\alpha} = I+\mu\omega\ast f_{m,\alpha}(a_{m,\alpha})$. In particular, { by definition of $m_\alpha$, } inequality~\eqref{eq::bound on limsup of a mu less than mu_0} implies that
		\begin{equation}
			-\frac{m_\alpha}{\alpha}\le a_{m,\alpha}(x)\le \frac{m_\alpha}{\alpha}, \qquad \text{for a.e. } x\in\R^2.
		\end{equation}
		Therefore, one has\footnote{Here $\mathbbm{1}_A$ denotes the characteristic function of the subset $A\subset\R^2$.}, for a.e. $x\in\R^2$,
		\begin{eqnarray}
			a_{m,\alpha}(x) &=& I(x)+\mu\omega\ast f_{m,\alpha}(a_{m,\alpha})(x)\nonumber\\
			&=&I(x)+\mu\int_{\R^2}\omega(x-y)f_{m,\alpha}(a_{m,\alpha}(y))\mathbbm{1}_{\{-\frac{m_\alpha}{\alpha}\le a_{m,\alpha}(y)\le\frac{1}{\alpha}\}}dy\nonumber\\
			&&+\mu\int_{\R^2}\omega(x-y)f_{m,\alpha}(a_{m,\alpha}(y))\mathbbm{1}_{\{a_{m,\alpha}(y)\ge\frac{1}{\alpha}\}}dy\nonumber\\
			&=& I(x)+\mu\omega\ast f_{m_\alpha,\alpha}(a_{m,\alpha})(x)
		\end{eqnarray}
		since $f_{m,\alpha}(s)=f_{m_\alpha,\alpha}(s)$ for every $s\ge -m_\alpha/\alpha$. It follows that $a_{m,\alpha}$ is a stationary solution for \eqref{eq::NF-intro} with nonlinearity $f_{m_\alpha,\alpha}$. The statement follows by the uniqueness of the stationary solution provided by Theorem~\ref{thm::existence of stationary input}.
	\end{proof}
	Applied, for instance, to Billock and Tsou's experiments modeling, Proposition~\ref{pro::BT NC} implies the following simple but important result.
	\begin{corollary}\label{coro::BT NC}
		Under the same assumptions as Proposition~\ref{pro::BT NC}, let  $\overline{m}\ge m_\alpha$ be such that the response function $f_{\overline{m},\alpha}$ reproduces Billock and Tsou's experiments. Then, the same is true for any response function $f_{m,\alpha}$ such that $m\ge \overline{m} $.
	\end{corollary}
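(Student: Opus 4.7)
The plan is to invoke Proposition~\ref{pro::BT NC} twice, once for $\overline{m}$ and once for $m$, and to combine the two conclusions by transitivity. No new analysis of \eqref{eq::NF-intro} is required.

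First, since $\overline{m}\ge m_\alpha$, Proposition~\ref{pro::BT NC} guarantees that the unique stationary solution $a_{\overline{m},\alpha}\in L^\infty(\R^2)$ of \eqref{eq::NF-intro} with response function $f_{\overline{m},\alpha}$ coincides with the unique stationary solution $a_{m_\alpha,\alpha}\in L^\infty(\R^2)$ associated with $f_{m_\alpha,\alpha}$. Second, for any $m\ge\overline{m}$ one also has $m\ge m_\alpha$, so Proposition~\ref{pro::BT NC} applied again yields $a_{m,\alpha}=a_{m_\alpha,\alpha}$. Combining the two equalities gives $a_{m,\alpha}=a_{\overline{m},\alpha}$ in $L^\infty(\R^2)$. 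Uniqueness, needed to make all these identifications unambiguous, is supplied by Theorem~\ref{thm::existence of stationary input} under the standing assumption $\mu<\mu_0$.

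Since the paper's modeling framework identifies the reproduction of Billock and Tsou's phenomenon with a geometric feature of the stationary output of \eqref{eq::NF-intro} (concretely, the presence of vertical stripes in the white complementary region, viewed through the inverse retino-cortical map) and this feature depends only on the stationary solution, the property ``$f_{\overline{m},\alpha}$ reproduces Billock and Tsou's experiments'' transfers verbatim from $a_{\overline{m},\alpha}$ to $a_{m,\alpha}$, proving the claim.

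I do not expect any serious obstacle: once Proposition~\ref{pro::BT NC} is available, the corollary reduces to transitivity of equalities between stationary solutions. The only point deserving explicit mention in a full write-up is that the hypothesis $\overline{m}\ge m_\alpha$ is inherited by all $m\ge\overline{m}$, which is what legitimizes the second application of Proposition~\ref{pro::BT NC}.
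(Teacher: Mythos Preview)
Your proof is correct and follows the same approach implicit in the paper, which simply states that the corollary is implied by Proposition~\ref{pro::BT NC} without giving further details. Your explicit two-step application of the proposition (for $\overline{m}$ and for $m$) together with transitivity is exactly the intended argument.
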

	
	The following result proves that the stationary state to \eqref{eq::NF-intro} is Lipschitz continuous whenever the sensory input $I$ is.
	\begin{proposition}\label{pro::Lipschitz continuity}
		Assume that $\mu<\mu_0$. If the sensory input $I\in L^\infty(\R^2)$ is $L_I$-Lipschitz continuous on some open set $\Omega\subset\R^2$, then the corresponding stationary solution to equation \eqref{eq::NF-intro} is also Lipschitz continuous on $\Omega$, with Lipschitz constant upper bounded by
		\begin{equation}
			\label{eq:DI}
			D_I:= L_I + \mu m_{\alpha}C_\omega 
		\end{equation}
		where $C_\omega$ denotes a constant depending only on $\omega$.
	\end{proposition}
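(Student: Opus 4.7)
The plan is to use the fixed-point characterization $a_I = I + \mu\, \omega \ast f(a_I)$ from~\eqref{eq::SS} and estimate separately the contributions of $I$ and of the convolution term. For $x,y \in \Omega$ the first contribution is immediate: $|I(x)-I(y)| \le L_I |x-y|$. The whole content of the statement is therefore to show that $\omega \ast f(a_I)$ is globally Lipschitz on $\R^2$ with a constant of the form $m_\alpha C_\omega$, where $C_\omega$ depends only on $\omega$.

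First, I would control $f(a_I)$ in $L^\infty$. Since $f$ is a response function in the sense of Definition~\ref{def:response-fct}, it is $\alpha$-Lipschitz with $f(0)=0$, so $|f(a_I(z))| \le \alpha |a_I(z)|$. By Theorem~\ref{thm::boundness of the solution}, $\|a_I\|_\infty \le \|I\|_\infty(1-\mu/\mu_0)^{-1}$, which yields $\|f(a_I)\|_\infty \le m_\alpha$, where $m_\alpha$ is the constant introduced in Proposition~\ref{pro::BT NC}.

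Next, I would exploit the smoothness of $\omega$. Since $\omega \in \cS(\R^2)$, its gradient $\nabla \omega$ belongs to $L^1(\R^2)$, and a standard application of the fundamental theorem of calculus gives, for any $x,y,z \in \R^2$,
\begin{equation}
\omega(x-z) - \omega(y-z) = \int_0^1 \nabla \omega(tx+(1-t)y - z)\cdot (x-y)\, dt.
\end{equation}
Integrating in $z$ and applying Fubini together with the translation invariance of Lebesgue measure,
\begin{equation}
\int_{\R^2} |\omega(x-z)-\omega(y-z)|\, dz \le |x-y|\, \|\nabla \omega\|_1.
\end{equation}
Combining this with the $L^\infty$-bound on $f(a_I)$ yields
\begin{equation}
|(\omega \ast f(a_I))(x) - (\omega \ast f(a_I))(y)| \le m_\alpha\, \|\nabla \omega\|_1\, |x-y|
\end{equation}
for all $x,y \in \R^2$ (not merely in $\Omega$). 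Setting $C_\omega := \|\nabla \omega\|_1$, which depends only on the parameters defining $\omega$, and plugging both bounds into the fixed-point identity, we obtain $|a_I(x)-a_I(y)| \le (L_I + \mu m_\alpha C_\omega) |x-y|$ for $x,y\in\Omega$, which is exactly the bound~\eqref{eq:DI}.

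No step is really an obstacle: the estimate on the convolution is a routine application of the fact that convolution with a Schwartz kernel smooths. The only point to be careful about is that the Lipschitz hypothesis on $I$ is only local (on $\Omega$), but since the convolution part is globally Lipschitz on $\R^2$ this causes no issue and the constant $D_I$ depends on $I$ only through $L_I$ and $\|I\|_\infty$ (the latter entering through $m_\alpha$).
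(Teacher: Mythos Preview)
Your proof is correct and follows essentially the same approach as the paper: both split $a_I = I + \mu\,\omega\ast f(a_I)$, bound $\|f(a_I)\|_\infty$ by $\alpha\|a_I\|_\infty \le m_\alpha$ via Theorem~\ref{thm::boundness of the solution}, and then show the convolution term is globally Lipschitz using the integrability of $\nabla\omega$. The only cosmetic difference is that the paper differentiates $b=\omega\ast f(a_I)$ and applies the Mean Value Theorem, whereas you apply the fundamental theorem of calculus directly to $\omega$; the resulting constants $C_\omega$ differ in form but are equivalent up to a universal factor.
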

	\begin{proof}
		% Let $I\in L^\infty(\R^2)$, then 
		Let $a\in L^\infty(\R^2)$ be the unique stationary solution whose existence is guaranteed by Theorem~\ref{thm::existence of stationary input}.
		For $x\in\R^2$ we have that
		\begin{equation}
			a(x) = I(x)+\mu b(x)\quad\mbox{with}\quad b := \omega\ast f(a).
		\end{equation}
		Since $\omega\in \cS(\R^2)$ and $f(a)\in L^\infty(\R^2)$, one has that $b$ is infinitely differentiable on $\R^2$. 
		Since by assumption $f$ is $\alpha$-Lipschitz continuous and satisfies $f(0)=0$, it is straightforward to show that
		\begin{equation}
			\|\nabla b(x)\|\le \alpha \|a\|_\infty\sqrt{\|\partial_{x_1}\omega\|_1^2+\|\partial_{x_2}\omega\|_1^2},\qquad\forall x\in\R^2.
		\end{equation}
		It follows by the Mean Value Theorem that $b$ is Lipschitz continuous on $\R^2$. Since $I$ is Lipschitz continuous on $\Omega$ and using Theorem \ref{thm::boundness of the solution} to upper bound $\|a\|_\infty$, the result then follows at once.
	\end{proof}
	
	The following simple result will be used hereafter.
	\begin{lemma}\label{lem::general when f is odd}
		Assume that the response function $f$ in \eqref{eq::NF-intro} is odd.
		If $\mu<\mu_0$, for any sensory input $I\in L^\infty(\R^2)$ one has $a_{ \shortminus I}=-a_I$.
	\end{lemma}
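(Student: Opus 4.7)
The plan is to verify directly that $-a_I$ satisfies the stationary equation associated with the sensory input $-I$, and then to invoke the uniqueness of stationary solutions provided by Theorem~\ref{thm::existence of stationary input} to conclude that $a_{-I}=-a_I$.

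Concretely, by Definition~\ref{def::stationary state to WC equation} the stationary state $a_I\in L^\infty(\R^2)$ satisfies
\begin{equation}
a_I = \mu\,\omega\ast f(a_I) + I.
\end{equation}
Negating both sides and using the linearity of the convolution together with the oddness of $f$ (which yields $-f(a_I) = f(-a_I)$ pointwise a.e.), I would rewrite this as
\begin{equation}
-a_I = \mu\,\omega\ast f(-a_I) + (-I),
\end{equation}
so that $-a_I$ is a stationary solution of \eqref{eq::NF-intro} with sensory input $-I$. Since $-a_I\in L^\infty(\R^2)$ and $-I\in L^\infty(\R^2)$, the assumption $\mu<\mu_0$ guarantees via Theorem~\ref{thm::existence of stationary input} that the stationary state for input $-I$ is unique, hence $a_{-I}=-a_I$.

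There is no real obstacle here: the argument is a direct symmetry/uniqueness computation. The only point worth a brief verification is that $f(-a_I(y))=-f(a_I(y))$ holds for almost every $y\in\R^2$, which is immediate from the pointwise oddness of $f$, and that the Lipschitz hypothesis on $f$ from Definition~\ref{def:response-fct} (together with $\mu<\mu_0$) is exactly what is needed to apply the uniqueness part of Theorem~\ref{thm::existence of stationary input}.
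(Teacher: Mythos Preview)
Your proof is correct. The paper takes a slightly different route: rather than verifying that $-a_I$ solves the stationary equation for input $-I$ and invoking uniqueness, it adds the two stationary equations to obtain $a_I+a_{-I}=\mu\,\omega\ast\bigl[f(a_I)-f(-a_{-I})\bigr]$ (using oddness of $f$), and then applies Young's inequality together with the Lipschitz bound to get $\|a_I+a_{-I}\|_\infty\le(\mu/\mu_0)\|a_I+a_{-I}\|_\infty$, forcing the sum to vanish. The two arguments are really the same idea in different packaging: the paper unpacks the contraction estimate that underlies the uniqueness statement in Theorem~\ref{thm::existence of stationary input}, whereas you invoke that theorem directly as a black box. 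Your version is arguably cleaner, since the uniqueness has already been established.
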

	\begin{proof}
		Thanks to Theorem~\ref{thm::existence of stationary input}, we have that $a_I$ and $a_{\shortminus I}$ are uniquely defined by $a_I =I+\mu\omega\ast f(a_I)$ and $a_{ \shortminus I} =-I+\mu\omega\ast f(a_{ \shortminus I})$, respectively. Since $f$ is odd, one has $a_I+a_{ \shortminus I} = \mu\omega\ast[f(a_I)-f(-a_{ \shortminus I})]$. Therefore, $\|a_I+a_{ \shortminus I}\|_\infty=0$ since Young convolution inequality gives 
		\[
		\|a_I+a_{ \shortminus I}\|_\infty\le\mu\|\omega\|_1\|f(a_I)-f(-a_{ \shortminus I})\|_\infty\le\frac{\mu}{\mu_0}\|a_I+a_{ \shortminus I}\|_\infty.\qedhere
		\] 
	\end{proof}
	
	In the following, we prove more general results that provide insight into the qualitative properties of the stationary state of \eqref{eq::NF-intro} when the sensory input has a cosine factor.
	% will allow us to solve, in particular, the conjecture announced in \cite[Conjecture~1]{tamekue2022reproducing}.
	\begin{proposition}\label{pro::general when f is odd}
		Let the sensory input $I$ be given by $I(x_1,x_2) = \cos(2\pi\lambda x_2)I_1(x_1)$, for $\lambda>0$ and $(x_1,x_2)\in\R^2$, where $I_1\in L^\infty(\R)$. If  $\mu<\mu_0$, then the following hold.
		\begin{enumerate}
			\item $a_I$ is $1/\lambda$-periodic, even and globally Lipschitz continuous with respect to $x_2\in\R$;
			\vspace{.4em}
			\item If $f$ is odd, then $a_I$ is $1/2\lambda$-antiperiodic with respect to $x_2\in\R$. Namely,
			\begin{equation}
				a_I(x_1,x_2+1/2\lambda) = -a_I(x_1,x_2),\qquad\mbox{for a.e. }\; (x_1,x_2)\in\R^2.
			\end{equation}
		\end{enumerate}
	\end{proposition}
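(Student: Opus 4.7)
The plan is to exploit the uniqueness of the stationary solution from Theorem~\ref{thm::existence of stationary input}: for each claimed symmetry, I produce a transformed function $\tilde{a}$, verify that it still satisfies \eqref{eq::SS} with the same input $I$ (or with $-I$ in the antiperiodic case), and conclude $\tilde{a}=a_I$ (resp.\ $\tilde{a}=-a_I$ via Lemma~\ref{lem::general when f is odd}). The crucial structural facts are that $\cos(2\pi\lambda\,\cdot)$ is $1/\lambda$-periodic, even, and $1/2\lambda$-antiperiodic in $x_2$, and that the kernel $\omega$ is radially symmetric, hence invariant under the reflection $(y_1,y_2)\mapsto(y_1,-y_2)$ and under all translations.

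For the $1/\lambda$-periodicity in $x_2$, I set $\tilde a(x_1,x_2):=a_I(x_1,x_2+1/\lambda)$. Changing variables in the convolution, one checks $(\omega\ast f(\tilde a))(x)=(\omega\ast f(a_I))(x_1,x_2+1/\lambda)$, and since $I(x_1,x_2+1/\lambda)=I(x_1,x_2)$, the function $\tilde a$ satisfies \eqref{eq::SS}; uniqueness then gives $\tilde a=a_I$. For evenness in $x_2$, I set $\tilde a(x_1,x_2):=a_I(x_1,-x_2)$; the change of variables $z_2=-y_2$ in the convolution introduces $\omega(x_1-z_1,x_2+z_2)$, which equals $\omega(x_1-z_1,-x_2-z_2)$ precisely because $\omega$ depends only on $|\cdot|$. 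Combined with the evenness of the cosine, this again shows $\tilde a$ solves \eqref{eq::SS}, and uniqueness yields $\tilde a=a_I$.

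For the global Lipschitz continuity with respect to $x_2$, I repeat the argument of Proposition~\ref{pro::Lipschitz continuity} but restricted to the $x_2$-direction. Writing $a_I=I+\mu\, b$ with $b=\omega\ast f(a_I)$, the smoothness and Schwartz decay of $\omega$ together with $a_I\in L^\infty(\R^2)$ (Theorem~\ref{thm::boundness of the solution}) imply $|\partial_{x_2} b(x)|\le \alpha\|a_I\|_\infty\|\partial_{x_2}\omega\|_1$ uniformly in $x\in\R^2$. Since $|\partial_{x_2}I(x_1,x_2)|\le 2\pi\lambda\|I_1\|_\infty$, the Mean Value Theorem applied in the $x_2$ variable (for each fixed $x_1$) gives the required global Lipschitz bound in $x_2$.

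Finally, for the antiperiodicity under the odd-nonlinearity assumption, I set $\tilde a(x_1,x_2):=a_I(x_1,x_2+1/2\lambda)$. The same translation argument as in step one shows that $\tilde a$ satisfies \eqref{eq::SS} but with sensory input $I(x_1,x_2+1/2\lambda)=\cos(2\pi\lambda x_2+\pi)I_1(x_1)=-I(x_1,x_2)$. Lemma~\ref{lem::general when f is odd} then gives $\tilde a=a_{-I}=-a_I$, which is exactly the claimed antiperiodicity. I do not anticipate a genuine obstacle here; the only care needed is in the evenness step, where one must use the radial symmetry of $\omega$ to absorb the reflection inside the convolution, and in keeping track that Proposition~\ref{pro::Lipschitz continuity} applies only directionally (in $x_2$) since $I_1$ is merely bounded in $x_1$.
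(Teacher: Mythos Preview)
Your proposal is correct and follows essentially the same uniqueness-based strategy as the paper: each symmetry is obtained by checking that a suitably transformed function still solves \eqref{eq::SS} and then invoking Theorem~\ref{thm::existence of stationary input}. The only minor differences are that (i) for the Lipschitz bound you use the decomposition $a_I=I+\mu\,\omega\ast f(a_I)$ and bound $\partial_{x_2}(\omega\ast f(a_I))$ via $\|\partial_{x_2}\omega\|_1$, whereas the paper differentiates the stationary equation in $x_2$ and closes a self-referential estimate $\|\partial_{x_2}a_I\|_\infty\le 2\pi\lambda\|I_1\|_\infty(1-\mu/\mu_0)^{-1}$; and (ii) for the antiperiodicity you route through Lemma~\ref{lem::general when f is odd} (showing $\tilde a=a_{-I}=-a_I$), while the paper verifies directly that $-a_I(\cdot,\cdot+1/2\lambda)$ solves \eqref{eq::SS} with input $I$ using the oddness of $f$ inside the integral. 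Both variants are equivalent and equally elementary.
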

	\begin{proof}
		We assume that $\lambda=1$ for ease of notation. Using that the convolution operator commutes with translation and that the input $I$ and the kernel $\omega$ are even with respect to $x_2$, one deduces that $a_I$ is even with respect to $x_2$.  
		Let us prove that $a_I$ is $1$-periodic with respect to $x_2$. For a.e.~$(x_1,x_2)\in \R^2$, one has
		\begin{eqnarray}
			a_I(x_1,x_2+1)&=&\cos(2\pi\lambda x_2)I_1(x_1)+\mu\int_{\R^2}\omega(x_1-y_1,x_2+1-y_2)f(a_I(y))dy\nonumber\\
			&=& I(x_1,x_2)+\mu\int_{\R^2}\omega(x-y)f(a_I(y_1,y_2+1))dy.
		\end{eqnarray}
		It follows that $(x_1,x_2)\mapsto a_I(x_1,x_2+1)$ is the stationary solution associated with $I$ and hence it coincides with $a_I$.
		
		Let us show that $a_I$ is Lipschitz continuous with respect to $x_2$. Taking the derivative of \eqref{eq::level set 1} with respect to $x_2$, one finds that for a.e.~$(x_1,x_2)\in\R^2$ it holds
		\begin{equation}\label{eq::level set 2}
			\partial_{x_2}a(x_1,x_2) = -2\pi\sin(2\pi x_2)I_1(x_1)+\mu\int_{\R^2}\omega(x-y)f'(a_I(y))\partial_{x_2}a(y)dy.
		\end{equation}
		Since $\|f'\|_\infty\le\alpha$ by assumption, it follows that
		\[
		\|\partial_{x_2}a(x_1,\cdot)\|_{L^\infty(\R)}\le 2\pi\|I_1\|_\infty\left(1-\mu/\mu_0\right)^{-1}
		\] 
		for a.e. $x_1\in\R$,
		showing that $a_I(x_1,\cdot)$ is Lipschitz continuous for a.e. $x_1\in\R$. This completes the proof of \textit{item}~(1). 
		
		Let us now prove \textit{item}~(2). For a.e.~$(x_1,x_2)\in\R^2$, one has
		\begin{eqnarray}
			-a_I(x_1,x_2+1/2)
			&=&\cos(2\pi\lambda x_2)I_1(x_1)-\mu\int_{\R^2}\omega(x_1-y_1,x_2+1/2-y_2)f(a_I(y))dy\nonumber\\
			&=&I(x_1,x_2)+\mu\int_{\R^2}\omega(x-y)f(-a_I(y_1,y_2+1/2))dy
		\end{eqnarray}
		where in the last equality we used the fact that $f$ is odd. Hence, $(x_1,x_2)\mapsto -a_I(x_1,x_2+1/2)$ is the stationary solution associated with $I$ and hence it coincides with $a_I$.
	\end{proof}

	One has the following result related to Billock and Tsou's experiments, which is the main focus of this paper. 
	The proof is an adaptation of that of \cite[Theorem 5.2]{tamekue2024mathematical}. We will present it for the sake of completeness.
	
	\begin{proposition}\label{pro::general when f bounded and odd BT}
		Assume that the response function $f$ in \eqref{eq::NF-intro} is odd.
		Let the sensory input $I_L$ be given by \eqref{eq::sensory inputs}. If  $\mu<\mu_0/2$, denote by $a_L\in L^\infty(\R^2)$ the corresponding stationary state to $I_L$. Then, for a.e.~$x_1\in\R$, the set of zeros of $a_L(x_1,\cdot)$ coincides with that of $x_2\mapsto\cos(2\pi\lambda x_2)$.
	\end{proposition}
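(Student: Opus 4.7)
The plan is to combine the symmetry properties of $a_L$ deduced from Proposition~\ref{pro::general when f is odd} with a Fourier factorization in $x_2$ that exploits the strengthened bound $\mu<\mu_0/2$. The sensory input $I_L(x_1,x_2)=\cos(2\pi\lambda x_2)H(\theta_L-x_1)$ has exactly the product structure required by Proposition~\ref{pro::general when f is odd}, with $I_1(x_1):=H(\theta_L-x_1)\in L^\infty(\R)$. Since $f$ is odd and $\mu<\mu_0$, both items apply: $a_L$ is $1/\lambda$-periodic, even, and $1/(2\lambda)$-antiperiodic in $x_2$. Evaluating evenness at $x_2=1/(4\lambda)$ gives $a_L(x_1,1/(4\lambda))=a_L(x_1,-1/(4\lambda))$, while antiperiodicity gives $a_L(x_1,1/(4\lambda))=-a_L(x_1,-1/(4\lambda))$; hence $a_L(x_1,1/(4\lambda))=0$, and propagating via (anti)periodicity yields $a_L(x_1,(2k+1)/(4\lambda))=0$ for every $k\in\Z$ and \emph{every} $x_1\in\R$. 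This establishes the inclusion $\{x_2:\cos(2\pi\lambda x_2)=0\}\subseteq\{x_2:a_L(x_1,x_2)=0\}$.

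For the reverse inclusion, the idea is to factor the fundamental cosine out of $a_L$. Because $a_L(x_1,\cdot)$ is $1/\lambda$-periodic, even, and $1/(2\lambda)$-antiperiodic, its Fourier series contains only odd cosine harmonics, $a_L(x_1,x_2)=\sum_{k\ge 0}a_{2k+1}(x_1)\cos(2\pi(2k+1)\lambda x_2)$. Using the Chebyshev-like identity $\cos((2k+1)\theta)=\cos(\theta)\,U_k(\cos(2\theta))$, where $U_0\equiv 1$, $U_1(y)=2y-1$, and the polynomials $U_k$ satisfy $U_k(1)=1$ and $\max_{[-1,1]}|U_k|=|U_k(-1)|=2k+1$, one obtains the factorization
\begin{equation*}
a_L(x_1,x_2)=\cos(2\pi\lambda x_2)\,\Psi\!\bigl(x_1,\cos(4\pi\lambda x_2)\bigr),\qquad\Psi(x_1,y):=\sum_{k\ge 0}a_{2k+1}(x_1)U_k(y).
\end{equation*}
Since $y=\cos(4\pi\lambda x_2)$ sweeps $[-1,1]$ as $x_2$ varies, the reverse inclusion reduces to proving $\Psi(x_1,y)\ne 0$ for every $y\in[-1,1]$ and for a.e.\ $x_1\in\R$.

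To control the higher Fourier coefficients, I would project the stationary equation onto Fourier modes. The evenness of $\omega$ in $x_2$ decouples the modes into $a_n(x_1)=I_n(x_1)+\mu(\omega_n^c*b_n)(x_1)$ for odd $n$, where $I_1=H(\theta_L-\cdot)$, $I_n\equiv 0$ for $n\ne 1$, $\omega_n^c(x_1):=\int_\R\omega(x_1,u)\cos(2\pi n\lambda u)\,du$, and $b_n$ is the $n$-th Fourier coefficient of $f(a_L)(x_1,\cdot)$. The bound $\mu<\mu_0/2$, combined with Theorem~\ref{thm::boundness of the solution}, yields $\|a_L\|_\infty<2$ and hence $\|b_n\|_\infty<4\alpha$, while the explicit DoG form of $\omega$ gives $\|\omega_n^c\|_1\le e^{-2\pi^2\sigma_1^2 n^2\lambda^2}+\kappa\,e^{-2\pi^2\sigma_2^2 n^2\lambda^2}$, so the higher-mode amplitudes $\|a_{2k+1}\|_\infty$ decay super-exponentially in $k\ge 1$.

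The main obstacle is to combine these ingredients so as to ensure that, for a.e.\ $x_1\in\R$, the deviation $|\Psi(x_1,y)-a_1(x_1)|\le\sum_{k\ge 1}(2k+1)|a_{2k+1}(x_1)|$ is strictly smaller than $|a_1(x_1)|$ uniformly in $y\in[-1,1]$. The delicate point is the locus where $a_1(x_1)$ is small, which must be shown to form a measure-zero set (accounting for the quantifier ``for a.e.~$x_1$''). The strict inequality $\mu<\mu_0/2$, as opposed to $\mu<\mu_0$, is the quantitative ingredient that lets the fundamental mode $a_1$ dominate the infinite tail and thereby forces $\Psi(x_1,\cdot)$ to stay away from zero on $[-1,1]$, along the lines of the argument of \cite[Theorem~5.2]{tamekue2024mathematical}.
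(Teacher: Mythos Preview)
Your first inclusion is correct and identical to the paper's argument: evenness plus $1/(2\lambda)$-antiperiodicity force $a_L(x_1,(2k+1)/(4\lambda))=0$.

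For the converse inclusion, however, your approach is genuinely different from the paper's and, as written, incomplete. The Fourier factorization $a_L(x_1,x_2)=\cos(2\pi\lambda x_2)\,\Psi(x_1,\cos(4\pi\lambda x_2))$ is valid, and the super-exponential bound on $\|a_{2k+1}\|_\infty$ for $k\ge 1$ is correct. But the step you flag as ``the delicate point'' is not just delicate---it is the whole argument, and you do not carry it out. Super-exponential decay of the tail gives you $\sum_{k\ge 1}(2k+1)\|a_{2k+1}\|_\infty<\infty$, but to conclude $\Psi(x_1,\cdot)\ne 0$ you need this tail to be strictly smaller than $|a_1(x_1)|$ for a.e.\ $x_1$, and you have no lower bound on $|a_1(x_1)|$. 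Nothing in your setup prevents $a_1$ from being small on a set of positive measure, and the condition $\mu<\mu_0/2$ does not by itself produce such a lower bound. Your appeal to \cite[Theorem~5.2]{tamekue2024mathematical} is also misplaced: that argument (which the paper adapts here) is not a Fourier-domination estimate.

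The paper's route is entirely different and explains precisely where $\mu<\mu_0/2$ enters. Given $a_L(x^*)=0$, one writes $\cos(2\pi\lambda x_2^*)$ as $-\mu\int\omega(y)f(a_L(x^*-y))\,dy$ and then shows that $g(y):=a_L(x^*-y)$ satisfies a new fixed-point equation whose kernel has $L^1$-type norm at most $2\|\omega\|_1$; the hypothesis $\mu<\mu_0/2$ makes this a contraction. Since $f$ is odd, $(y_1,y_2)\mapsto -g(y_1,-y_2)$ solves the same equation, so by uniqueness $g$ is odd in $y_2$. The integrand $\omega(y)f(g(y))$ is then odd in $y_2$ and the integral vanishes, yielding $\cos(2\pi\lambda x_2^*)=0$ directly---no comparison of Fourier amplitudes is needed.
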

	
	\begin{proof}
		We assume that $\lambda=1$ for ease of notation. The zeroes of $x_2\mapsto\cos(2\pi x_2)$ are $z_k=k+1/4$ for every $k\in\Z$. Let $x_1\in\R$, let us first prove that $a_I(x_1,z_k) = 0$. Since $a_I(x_1,\cdot)$ is $1$-periodic by Proposition~\ref{pro::general when f is odd}, it is enough to prove that $a_I(x_1,1/4) = 0$. Using that $a_I(x_1,\cdot)$ is $1/2$-antiperiodic and even by Proposition~\ref{pro::general when f is odd}, one gets $a_I(x_1,1/4) = a_I(x_1,-1/4+1/2)=-a_I(x_1,-1/4)=-a_I(x_1,1/4)$. Therefore, $a_I(x_1,1/4) = 0$. Conversely, let $x^{*}:=(x_1^{*},x_2^{*})\in\R^2$ be such that $a_L(x^{*}) = 0$. We want to show that $\cos(2\pi x_2^{*})=0$. Recall that for a.e. $x:=(x_1,x_2)\in\R^2$,
		\begin{equation}\label{eq::aL}
			a_L(x) = \cos(2\pi x_2)H(\theta_L-x_1)+\mu\int_{\R^2}\omega(x-y)f(a_L(y))dy.
		\end{equation}
		If $x_1^{*}\le\theta_L$, then from \eqref{eq::aL}, it follows
		\begin{equation}\label{eq::useful-1}
			\cos(2\pi x_2^*) = -\mu\int_{\R^2}\omega(y)f(a_F(x_*-y))dy.
		\end{equation}
		By using \eqref{eq::aL} once again, one obtains 
		\begin{equation}\label{eq::Psi intermediate 1}
			a_L(x_1^{*}-y_1,x_2^{*}-y_2) = I_2(y)+\mu\int_{\R^2}k(y,z)f(a_L(x^{*}-z))dz
		\end{equation}
		where $I_2(y):=\sin(2\pi x_2^{*})\sin(2\pi y_2)H(\theta_L-x_1^{*}+y_1)$ and for every $(x,y)\in\R^2\times\R^2$
		\begin{equation}\label{eq::kernel k}
			k(y,z) = \omega(y-z)-H(\theta_L-x_1^{*}+y_1)\cos(2\pi y_2)\omega(z)
		\end{equation}
		satisfies
		\begin{equation}
			\sup\limits_{y\in\R^2}\int_{\R^2}|k(y,z)|dy\le 2\|\omega\|_1.
		\end{equation}
		Since $\mu<\mu_0/2$, the contracting mapping principle ensures that $y\mapsto g_2(y):=~a_L(x_*-y)$ is the unique solution to \eqref{eq::Psi intermediate 1}. Moreover, it holds
		$$
		-a_L(x_1^{*}-y_1,x_2^{*}+y_2) =  I_2(y)+\mu\int_{\R^2}k(y,z)f(-a_L(x_1^{*}-z_1,x_2^{*}+z_2))dz
		$$
		since $f$ is odd. So the function $(y_1,y_2)\mapsto -g_2(y_1,-y_2)$ is also solution of \eqref{eq::Psi intermediate 1} with input $I_2$. By uniqueness of solution, one has $g_2(y_1,-y_2) = -g_2(y_1,y_2)$ and that $y\mapsto\omega(y)f(a_L(x_*-y))$ is an odd function with respect to $y_2\in\R$, since $\omega$ is radial and $f$ is an odd function. It follows from Fubini's theorem that the right-hand side of \eqref{eq::useful-1} is equal to $0$.
	\end{proof}
	
	\begin{remark}
		Note that the assumption $\mu<\mu_0/2$ in Proposition~\ref{pro::general when f bounded and odd BT} is technical due to our strategy in the proof. Numerical simulations suggest that the conclusion of the proposition remains valid for all $\mu_0/2\le\mu<\mu_0$. See, for instance, Figure~\ref{fig:left horizontal stripes odd sigmoid}.
	\end{remark}
	
	\begin{figure}%[ht]
		\centering
		% First image with TikZ square contour
		\begin{minipage}{0.45\textwidth}
			\centering
			\begin{tikzpicture}[remember picture]
				\node[anchor=south west,inner sep=0] (image1) at (0,0) {
					\includegraphics[width=0.8\textwidth]{BT-stimulus_fovea-V1.png}
				};
				\begin{scope}[x={(image1.south east)},y={(image1.north west)}]
					\draw[black, thick, opacity=0.1] (0,0) rectangle (1,1);
				\end{scope}
			\end{tikzpicture}
			% \caption{Funnel pattern in the centre of the visual field.}
			% \label{fig:fovea funnel odd sigmoid}
		\end{minipage}
		\hfill
		% Second image with TikZ square contour
		\begin{minipage}{0.45\textwidth}
			\centering
			\begin{tikzpicture}[remember picture]
				\node[anchor=south west,inner sep=0] (image2) at (0,0) {
					\includegraphics[width=0.8\textwidth]{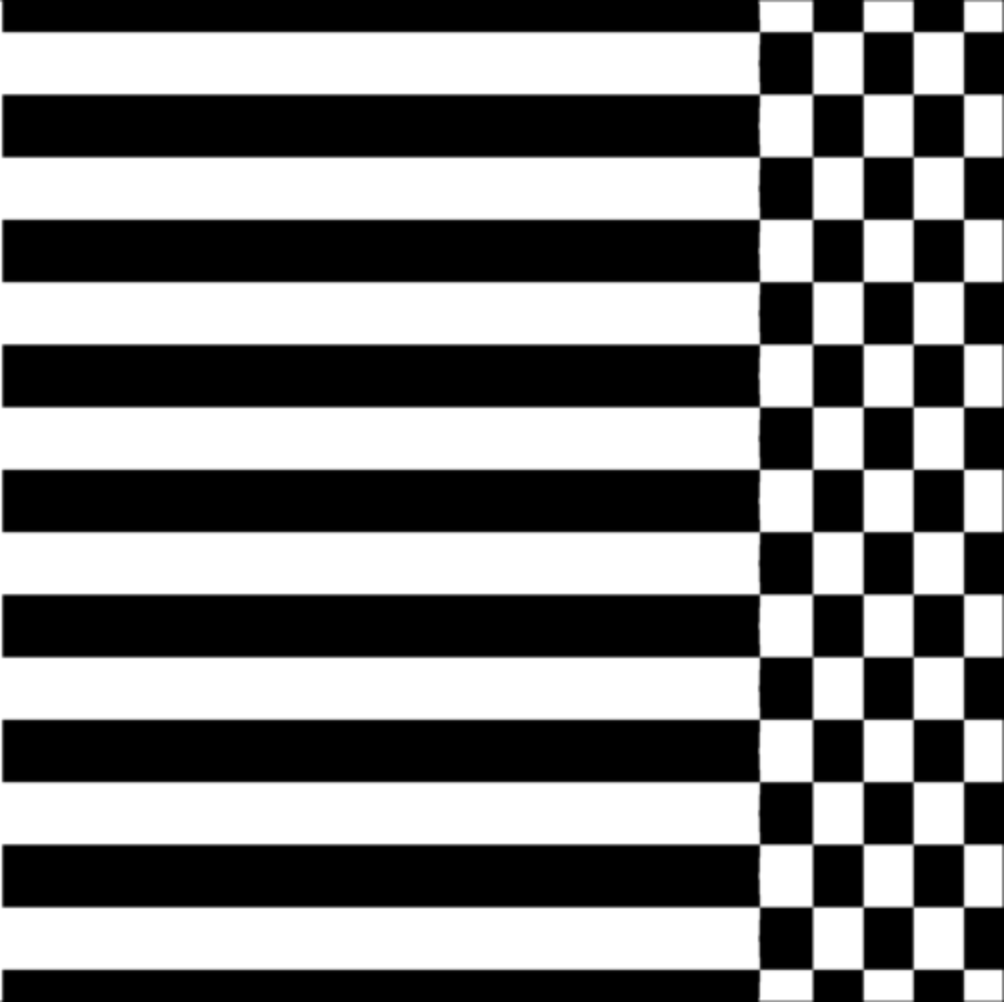}
				};
				\begin{scope}[x={(image2.south east)},y={(image2.north west)}]
					\draw[black, thick, opacity=0.1] (0,0) rectangle (1,1);
				\end{scope}
			\end{tikzpicture}
			% \caption{Horizontal stripes in the left area of V1.}
			% \label{fig:left horizontal stripes odd sigmoid}
		\end{minipage}
		% Arrow with text (above) and math (below), drawn after both images have been placed
		\begin{tikzpicture}[remember picture, overlay]
			\draw[->, thick] ([xshift=5pt]image1.east) -- ([xshift=-5pt]image2.west)
			node[midway, above, font=\small] {via the stationary}
			node[midway, below, font=\small] {equation~\eqref{eq::SS}};
		\end{tikzpicture}
		\caption{On the \textit{left}, we have the sensory input $I_L(x_1, x_2) = \cos(2\pi\lambda x_2)H(\theta_L-x_1)$ with $\lambda=0.4$ and $\theta_L=5$. On the \textit{right}, we have the corresponding stationary output when the response function is the odd nonlinearity $f_{1,1}(s)=\max(-1,\min(1,s))$. The cortical data is defined on the square $(x_1,x_2)\in [-10, 10]^2$ with step $\Delta x_1=\Delta x_2=0.01$. The parameters in the kernel $\omega$ are $\sigma_1=1/\pi$, $\sigma_2=\sqrt{2}/\pi$ and $\kappa=1.2$. Here $\mu:=0.99\mu_0$, where $\mu_0$ is defined in \eqref{eq::parameter mu_0}-\eqref{eq::L^1-norm of omega}. These numerical results are obtained using the Julia toolbox from \cite{tamekue:tel-04230895}.}
		\label{fig:left horizontal stripes odd sigmoid}
	\end{figure}

	\section{On Billock and Tsou's experiments modelling}\label{s::theoretical Billock and Tsou}
	In this section, we investigate the modeling of Billock and Tsou's phenomena using \eqref{eq::NF-intro}. In the current study, we aim to elucidate the efficacy of \eqref{eq::NF-intro} in mimicking these visual illusions, as briefly reviewed in Section~\ref{ss::BT experiments}. We focus on determining if the model's output exhibits qualitative concordance with the human experience of these illusions. It is imperative to note that our analysis is mechanistic and strictly qualitative and serves as an illustrative proof of concept for applying Amary-type dynamics \eqref{eq::NF-intro} in simulating the perceptual outcomes elicited by visual illusions as previously obtained in \cite{tamekue2024mathematical} for the visual MacKay effect modeling. This study does not endeavor to align its findings with quantitative empirical data, as such data are contingent upon numerous experimental variables that do not have a straightforward relationship with the parameters within our model.

	We begin by proving that these phenomena are wholly nonlinear in contrast, for instance, to the visual MacKay effect \cite{mackay1957} that we proved in \cite{tamekue2024mathematical} for being a linear phenomenon. Therefore, we will see that \eqref{eq::NF-intro} with a linear response function $f$ cannot reproduce the psychophysical experiments by Billock and Tsou \cite{billock2007} associated with the funnel pattern stimulus when the corresponding sensory inputs are modeled as in \eqref{eq::sensory inputs}. 
	
	\subsection{Unreproducibility of Billock and Tsou experiments: linear response function}\label{ss::theoretical Billock and Tsou experiments linear}
	This section assumes that the response function $f$ is linear. To simplify our analysis, we specifically focus on the funnel pattern centred on the fovea within the visual field. As a result, the corresponding sensory input $I$ consists of a localized pattern of horizontal stripes in the left area of the V1 cortex by the retino-cortical map, see Figure~\ref{fig:left horizontal stripes}.
	
	Previously, in \cite[Proposition 5.]{tamekue2023}, we proved that \eqref{eq::NF-intro} with a linear response function is incapable of reproducing Billock and Tsou's experiments, as verified through direct Fourier transform computations. While this finding sufficed to establish our desired outcome, it failed to offer deeper insights into the qualitative properties of the stationary state associated with the sensory input utilized in these experiments. Specifically, it did not precisely characterize the zero-level set of this stationary state. To address this gap, we draw upon the qualitative properties of the sensory input $I$ and utilize tools from complex and harmonic analysis.  Consequently, we present the following key results in this section.
	\begin{theorem}\label{thm::zero-level set linear BT}
		Assume that the response function $f$ in \eqref{eq::NF-intro} is linear with slope $\alpha>0$ and that the sensory input $I=I_L$.  If  $\mu<\mu_0$, denote by $a_L\in L^\infty(\R^2)$ the corresponding stationary state to $I_L$. Then, the zero-level set $\cZ_{a_L}$ of $a_L$ satisfies
		\begin{equation}\label{eq::zero-level set linear BT}
			\cZ_{a_L}\cap[(0,+\infty)\times\R] = [\cX_1\times\R]\cup[(0,+\infty)\times\cX_2]
		\end{equation}
		where $\cX_1$ and $\cX_2$ are discrete and countable sets, respectively in $(0,+\infty)$ and $\R$.
	\end{theorem}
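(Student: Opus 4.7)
My approach is to exploit the linearity of the stationary equation \eqref{eq::SS} together with the separable structure of the input $I_L(x_1,x_2)=\cos(2\pi\lambda x_2)H(\theta_L-x_1)$ to reduce the problem to a one-dimensional convolution equation in the horizontal variable, and then characterize the zeros using real-analyticity and Fourier analysis for bounded functions.

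First I would look for $a_L$ in the separated form $a_L(x_1,x_2)=A(x_1)\cos(2\pi\lambda x_2)$ with $A\in L^\infty(\R)$ to be determined. Using the radial symmetry of $\omega$ (in particular that $\omega$ is even in its second variable), a direct computation yields
\[
(\omega\ast a_L)(x_1,x_2)=\cos(2\pi\lambda x_2)\,(K_\lambda\ast A)(x_1),\qquad K_\lambda(z):=\int_\R\omega(z,t)\cos(2\pi\lambda t)\,dt,
\]
where $K_\lambda$ is a linear combination of one-dimensional Gaussians whose Fourier transform is $\widehat{K_\lambda}(\xi)=\widehat\omega(\xi,\lambda)$. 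Substituting into \eqref{eq::SS} with $f(s)=\alpha s$ and cancelling the cosine factors reduces the problem to the scalar equation $A=H(\theta_L-\cdot)+\mu\alpha\,K_\lambda\ast A$. Since $\|K_\lambda\|_1\le\|\omega\|_1$ and $\mu\alpha\|\omega\|_1=\mu/\mu_0<1$, Banach's fixed-point theorem provides a unique $A\in L^\infty(\R)$, and uniqueness in Theorem~\ref{thm::existence of stationary input} then forces $a_L(x_1,x_2)=A(x_1)\cos(2\pi\lambda x_2)$.

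Because $K_\lambda$ extends to an entire function of $z\in\C$, the convolution $K_\lambda\ast A$ is real-analytic on $\R$ (indeed the restriction of an entire function), so the equation for $A$ shows that $A$ is real-analytic on each of $(-\infty,\theta_L)$ and $(\theta_L,+\infty)$ with a jump of size one at $\theta_L$ inherited from the Heaviside. The zero set of the product $a_L(x_1,x_2)=A(x_1)\cos(2\pi\lambda x_2)$ on the right half-plane decomposes as $\{A(x_1)=0\}\cup\{\cos(2\pi\lambda x_2)=0\}$; the second piece is already $(0,+\infty)\times\cX_2$ with the countable set $\cX_2=\{(2k+1)/(4\lambda):k\in\Z\}$. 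It remains to show that $\cX_1:=\{x_1>0:A(x_1)=0\}$ is discrete; by analyticity this reduces to proving that $A$ is not identically zero on either connected component of $(0,+\infty)\setminus\{\theta_L\}$ and that zeros do not accumulate at $\theta_L$.

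The nontriviality step is the main obstacle. Suppose by contradiction that $A\equiv 0$ on $(\theta_L,+\infty)$. Then $K_\lambda\ast A$ vanishes on that open interval, and by real-analyticity $K_\lambda\ast A\equiv 0$ on $\R$. Viewing the identity in $\cS'(\R)$ and taking Fourier transforms yields $\widehat{K_\lambda}(\xi)\,\widehat A(\xi)=0$. Since $\widehat{K_\lambda}(\xi)=\widehat\omega(\xi,\lambda)=e^{-2\pi^2\sigma_1^2(\xi^2+\lambda^2)}-\kappa e^{-2\pi^2\sigma_2^2(\xi^2+\lambda^2)}$ vanishes at most at two real values of $\xi$, the distribution $\widehat A$ is supported on a finite set, so $A$ is a finite linear combination of complex exponentials; being bounded and vanishing on the open half-line $(\theta_L,+\infty)$, such an $A$ must vanish on all of $\R$, contradicting $A=1$ on $(-\infty,\theta_L)$ coming from the equation. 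The case $A\equiv 0$ on $(0,\theta_L)$ (relevant when $\theta_L>0$) is similar: analytic continuation then forces $\mu\alpha\,K_\lambda\ast A\equiv -1$ on $\R$, hence $A$ must be a constant by the same Fourier argument, which is incompatible with the equation on $(\theta_L,+\infty)$. Finally, any accumulation of zeros at $\theta_L$ from either side would force the corresponding analytic extension ($\mu\alpha\,K_\lambda\ast A$ or $1+\mu\alpha\,K_\lambda\ast A$) to vanish identically, which has just been excluded. Assembling these facts yields $\cZ_{a_L}\cap[(0,+\infty)\times\R]=[\cX_1\times\R]\cup[(0,+\infty)\times\cX_2]$ with $\cX_1$ discrete and $\cX_2$ countable, as required.
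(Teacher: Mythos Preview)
Your proof is correct and follows a genuinely different route from the paper's.

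For the separation of variables, the paper expands $a_L(x_1,\cdot)$ in a Fourier cosine series (after first establishing periodicity and evenness in $x_2$), derives a family of one-dimensional equations for the coefficients $a_n$, and uses the Fourier transform in $\cS'(\R)$ to kill all harmonics with $n\neq 1$. Your ansatz-plus-uniqueness argument is more direct and bypasses this machinery entirely.

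For the discreteness of $\cX_1$, the paper proceeds by computing the resolvent kernel $K$ explicitly via the residue theorem (its Theorem~A.2), which requires restricting to the specific parameter choice $\kappa=1$, $\sigma_1=1/\pi\sqrt{2}$, $\sigma_2=\sigma_1\sqrt{2}$, $\mu=1$ so that the exponential polynomial in the denominator is factorizable; it then invokes an argument from a companion paper to conclude. Your argument---observing that $K_\lambda\ast A$ is the restriction of an entire function, so that $A$ is real-analytic off $x_1=\theta_L$, and then ruling out identical vanishing on a half-line by the Fourier-support trick ($\widehat{K_\lambda}$ has at most two real zeros, forcing $A$ to be a finite sum of bounded exponentials)---is self-contained and works for \emph{all} admissible parameters $\kappa,\sigma_1,\sigma_2,\mu$, not just the special ones. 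You also handle the two-component case $\theta_L>0$ and the possible accumulation of zeros at the discontinuity, which the paper sidesteps by normalizing $\theta_L=0$.

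What the paper's approach buys in return is an explicit exponential-cosine series for the kernel, which carries more quantitative information (decay rates, oscillation frequencies) than your purely qualitative argument; this is useful if one wants to say something about the actual location or spacing of the vertical zero lines rather than merely their discreteness.

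One small wording issue: in the $(0,\theta_L)$ case you write ``hence $A$ must be a constant,'' but your Fourier argument only yields that $A$ is a finite combination of exponentials; since you are already assuming $A\equiv 0$ on the open interval $(0,\theta_L)$, the same almost-periodicity argument you used before forces $A\equiv 0$ on $\R$, contradicting $K_\lambda\ast A\equiv -1/(\mu\alpha)$. The logic is sound, just tighten the sentence.
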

	Since $f(s) = \alpha s$ and $I_L(x_1,x_2) = \cos(2\pi\lambda x_2)H(\theta_L-x_1)$, with $\lambda>0$ and $\theta_L\ge 0$, we assume without loss of generality that $\alpha=1$, $\lambda=1$ and $\theta_L=0$ to keep the presentation clear for reader convenience. Therefore, the stationary state $a_L\in L^\infty(\R^2)$ satisfies
	\begin{equation}\label{eq::level set 1}
		a_L(x_1,x_2) = \cos(2\pi x_2)H(-x_1)+\mu\int_{\R^2}\omega(x-y)a_L(y)dy,\qquad (x_1,x_2)\in\R^2
	\end{equation}
	where the kernel $\omega$ is defined in \eqref{eq::connectivity}.
	
	We pedagogically split the proof of Theorem~\ref{thm::zero-level set linear BT} into several steps. The first result is the following.
	\begin{lemma}\label{lem::zero-level set linear BT 2}
		Under hypotheses of Theorem~\ref{thm::zero-level set linear BT}, the stationary state $a_L$ decomposes as 
		\begin{equation}\label{eq::a in lineat regime BT}
			a_L(x_1,x_2) = a_1(x_1)\cos(2\pi x_2),\qquad (x_1,x_2)\in\R^2.
		\end{equation}
		Here $a_1\in L^\infty(\R)$ is given by
		\begin{equation}\label{eq::first Fourier coeffs of a linear}
			a_1(x_1) = H(-x_1)+\mu(K\ast H(-\cdot))(x_1),\qquad x_1\in\R.
		\end{equation}
		where $K\in\cS(\R)$ is defined for all $x_1\in\R$ by
		\begin{equation}\label{eq::noyau K}
			K(x_1) = \int_{-\infty}^{+\infty}e^{2i\pi x_1\xi}\frac{\widehat{\psi_1}(\xi)}{1-\mu\widehat{\psi_1}(\xi)}d\xi,\quad \widehat{\psi_1}(\xi) = e^{-2\pi^2\sigma_1^2(1+\xi^2)}-\kappa e^{-2\pi^2\sigma_2^2(1+\xi^2)}
		\end{equation}
		for every $ \xi\in\R$.
	\end{lemma}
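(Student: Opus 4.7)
The strategy is to exploit the product structure $I_L(x_1,x_2)=\cos(2\pi x_2)H(-x_1)$ (taking $\lambda=1$, $\theta_L=0$) together with the radial symmetry of $\omega$ to reduce the two-dimensional fixed-point equation \eqref{eq::level set 1} to a one-dimensional convolution equation in $x_1$, which one then solves via Fourier analysis. First, Proposition~\ref{pro::general when f is odd} (applied with the odd linear $f$) tells us that $a_L$ is $1$-periodic, even and $1/2$-antiperiodic in $x_2$, so essentially only the $\pm 1$ Fourier modes in $x_2$ can be excited by the forcing. This motivates the ansatz $a_L(x_1,x_2)=a_1(x_1)\cos(2\pi x_2)$, which I will justify by \emph{constructing} a bounded $a_1$ for which this product solves \eqref{eq::SS}, and then invoking the uniqueness part of Theorem~\ref{thm::existence of stationary input}.

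\textbf{Reduction to a $1$D equation.} Substituting the ansatz into \eqref{eq::level set 1} and performing the $y_2$-integration with the addition formula $\cos(2\pi(x_2-z_2))=\cos(2\pi x_2)\cos(2\pi z_2)+\sin(2\pi x_2)\sin(2\pi z_2)$, the sine contribution vanishes because $z\mapsto\omega(x_1-y_1,z)$ is even (radiality of $\omega$), and one gets
\[
\int_{\R^2}\omega(x-y)\,a_1(y_1)\cos(2\pi y_2)\,dy \;=\; \cos(2\pi x_2)\,(\psi_1\ast a_1)(x_1),
\]
where $\psi_1(x_1):=\int_\R \omega(x_1,z)\cos(2\pi z)\,dz$. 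Thus the ansatz satisfies \eqref{eq::SS} if and only if $a_1$ solves the $1$D equation $a_1 = H(-\cdot) + \mu\,\psi_1\ast a_1$.

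\textbf{Fourier solution and regularity of $K$.} Since $\psi_1$ is a superposition of two Gaussians, a direct computation (or equivalently the identity $\widehat{\psi_1}(\xi)=\widehat{\omega}(\xi,1)$ together with \eqref{eq::Fourier transform of the kernel omega}) yields the explicit formula for $\widehat{\psi_1}$ stated in the lemma. Define $K$ as the inverse Fourier transform of $\widehat{\psi_1}/(1-\mu\widehat{\psi_1})$. The denominator is bounded below uniformly in $\xi$: from $|\widehat{\psi_1}(\xi)|\le\|\psi_1\|_1\le\|\omega\|_1$ and $\mu<\mu_0=1/\|\omega\|_1$, one has $|1-\mu\widehat{\psi_1}(\xi)|\ge 1-\mu/\mu_0>0$. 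Combined with the exponential decay of $\widehat{\psi_1}$, this shows $\widehat{K}\in\cS(\R)$, hence $K\in\cS(\R)$. Setting $a_1(x_1):=H(-x_1)+\mu(K\ast H(-\cdot))(x_1)$ therefore defines a bounded function on $\R$ (Young's inequality, since $K\in L^1$ and $H(-\cdot)\in L^\infty$), so $a_L:=a_1\cos(2\pi x_2)\in L^\infty(\R^2)$.

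\textbf{Closing the argument and main obstacle.} It remains to verify that this $a_1$ does solve $a_1=H(-\cdot)+\mu\psi_1\ast a_1$. Multiplying out gives $\psi_1\ast a_1=\psi_1\ast H(-\cdot)+\mu\,\psi_1\ast K\ast H(-\cdot)$, so the desired identity reduces to the algebraic relation $K=\psi_1+\mu\,\psi_1\ast K$, which holds directly from the defining equality $\widehat{K}(1-\mu\widehat{\psi_1})=\widehat{\psi_1}$ (an equality between Schwartz functions, hence unambiguous). This in turn gives that $a_1\cos(2\pi x_2)$ satisfies \eqref{eq::SS}, and the uniqueness statement of Theorem~\ref{thm::existence of stationary input} forces it to coincide with $a_L$. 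The delicate point I expect is precisely the handling of $H(-\cdot)\notin L^1$: rather than Fourier-transforming it as a tempered distribution, the constructive route above avoids the issue by defining $K$ through a convergent inverse Fourier integral of a Schwartz function and then checking the fixed-point equation in the direct variable via the algebraic identity for $K$.
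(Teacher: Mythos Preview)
Your argument is correct, and it takes a genuinely different route from the paper's proof. The paper fixes $x_1$, expands $a_L(x_1,\cdot)$ as a Fourier cosine series in $x_2$ (justified via the Lipschitz continuity in $x_2$ from Proposition~\ref{pro::general when f is odd}), substitutes this expansion into the integral equation, and obtains a family of one-dimensional convolution equations $a_n = \delta_{1,n}H(-\cdot) + \mu\,\psi_n\ast a_n$; it then takes the Fourier transform in $\cS'(\R)$ to show that $a_n\equiv 0$ for $n\neq 1$ and to invert the equation for $n=1$. Your approach instead bypasses the Fourier series entirely: you posit the single-mode ansatz, reduce to the same one-dimensional equation for $a_1$, construct its solution via the Schwartz kernel $K$ and the algebraic identity $K=\psi_1+\mu\,\psi_1\ast K$, and then invoke the uniqueness part of Theorem~\ref{thm::existence of stationary input} to identify the result with $a_L$. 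This is cleaner in that it avoids justifying the uniform convergence of the Fourier series and the interchange of summation and integration, and it sidesteps the tempered-distribution Fourier transform of $H(-\cdot)$ by working only with the Schwartz identity for $K$ and checking the fixed-point equation in the direct variable. The paper's approach, on the other hand, is more ``diagnostic'': it explains \emph{why} no other Fourier mode can survive, which is informative even if not strictly needed once uniqueness is available.
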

	\begin{proof}
		We fix $x_1\in\R$. Since $x_2\mapsto a_L(x_1,x_2)$ is $1$-periodic and even on $\R$, we expand $a_L(x_1,\cdot)$ in term of Fourier series as
		\begin{equation}\label{eq::linear Fourier series}
			a_L(x_1,x_2) = \sum\limits_{n=0}^{\infty}a_n(x_1)\cos(2\pi n x_2),\qquad\qquad  x_2\in\R
		\end{equation}
		\begin{equation}\label{eq::linear Fourier coefficients}
			a_0(x_1) = \int_{0}^{1}a_L(x_1,t)dt\qquad\mbox{and}\qquad a_n(x_1) = 2\int_{0}^{1}a_L(x_1,t)\cos(2\pi nt)dt.
		\end{equation}
		Thanks to the \textit{item}~(1) of Proposition~\ref{pro::general when f is odd}, one has that the derivative $a_L'(x_1,\cdot)$ of $a_L$ with respect to $x_2$ is continuous and bounded on $\R$. Thus $a_L'(x_1,\cdot)$ belongs to $L^2([-1,1])$, the space of real-valued measurable and square-integrable functions over $[-1,1]$. Since $a_L(x_1,\cdot)$ is absolutely continuous (Lipschtiz continuous still by the \textit{item}~(1) of Proposition~\ref{pro::general when f is odd}) on $\R$, it follows from \cite[Théorème 2.]{kolmogorov1974} that its Fourier series converges uniformly to $a_L(x_1,\cdot)$ on $\R$.
		Observe also that \eqref{eq::linear Fourier coefficients} defines functions $a_n\in L^\infty(\R)$ for all $n\in\N$, so that one gets for all $x_1\in\R$ and for all $\sigma>0$, the existence of $M>0$ such that
		\begin{equation}
			\sum\limits_{n=0}^{+\infty}\int_{-\infty}^{\infty}\left|\frac{1}{\sigma\sqrt{2\pi}}e^{-\frac{(x_1-y_1)^2}{2\sigma^2}}e^{-2\pi ^2\sigma^2n^2}a_n(y_1)\cos(2\pi nx_2)\right|dy_1 \le \frac{M}{1-e^{-2\pi ^2\sigma^2}}.
		\end{equation}
		Therefore, we can substitute \eqref{eq::linear Fourier series} into \eqref{eq::level set 1} and find the following family of one-dimensional linear integral equations indexed by $n\in\N$.
		\begin{equation}\label{eq::Fourier coefficients of a linear dim 1}
			a_n(x_1) = \delta_{1,n}H(-x_1)+\mu\alpha(\psi_n\ast a_n)(x_1),\qquad x_1\in\R
		\end{equation}
		where $\delta_{1,n}$ is the usual Kronecker symbol and the kernel $\psi_n$ is given for $n\in\N$, by
		\begin{equation}\label{eq::omega n}
			\psi_n(s) = e^{-2\pi^2n^2\sigma_1^2}\frac{e^{-\frac{s^2}{2\sigma_1^2}}}{\sigma_1\sqrt{2\pi}}-\kappa e^{-2\pi^2n^2\sigma_2^2}\frac{e^{-\frac{s^2}{2\sigma_2^2}}}{\sigma_2\sqrt{2\pi}},\qquad s\in\R.
		\end{equation}
		For $n\neq 1$, equations \eqref{eq::Fourier coefficients of a linear dim 1} yields to 
		\begin{equation}\label{eq::coeffs equation n different de 1}
			(\delta-\mu\alpha\psi_n)\ast a_n = 0,\qquad\mbox{in}\qquad \cS'(\R)
		\end{equation}
		where $\delta$ is the Dirac distribution at $0$. Taking the Fourier transform of \eqref{eq::coeffs equation n different de 1} in the space $\cS'(\R)$, one obtains for all $\xi\in\R$,
		\begin{equation}
			(1-\mu\widehat{\psi_n}(\xi))\cF\{a_n\}(\xi) = 0,\qquad n\neq 1.
		\end{equation}
		It is not difficult to see that $\max\{\widehat{\psi_n}(\xi)\mid\xi\in\R\}\le\max\{\widehat{\omega}(\xi)\mid\xi\in\R^2\}\le \|\omega\|_1$. Since $\mu\|\omega\|_1<1$ by assumption, one deduces $1-\mu\widehat{\psi_n}(\xi)>0$ for all $\xi\in\R$, and $\cF\{a_n\}\equiv 0$. It follows that 
		\begin{equation}
			a_n\equiv 0,\qquad \mbox{for all }\qquad\quad n\neq 1.
		\end{equation}
		In the case $n=1$, one has
		\begin{equation}\label{eq::harmonic 1 of a linear dim 1}
			a_1(x_1) = H(-x_1)+\mu(\psi_1\ast a_1)(x_1),\qquad\qquad x_1\in\R.
		\end{equation}
		Finally, taking respectively the Fourier transform of \eqref{eq::harmonic 1 of a linear dim 1} and the inverse Fourier transform in the space $\cS'(\R)$, we find that $a_1\in L^\infty(\R^2)$ is given by \eqref{eq::first Fourier coeffs of a linear} with $K\in\cS(\R)$ defined as in \eqref{eq::noyau K}.
	\end{proof}
	
	Due to Lemma~\ref{lem::zero-level set linear BT 2}, inverting the kernel $K$ defined in \eqref{eq::noyau K} and providing an asymptotic behavior of its zeroes on $\R$ will help to provide thorough information on the zeroes of the function $a_1$ as given by \eqref{eq::first Fourier coeffs of a linear}. To achieve this, we use tools from complex analysis. 
	
	Let us consider the extension of $\widehat{K}$ in the set $\C$ of complex numbers,
	\begin{equation}
		\widehat{K}(z) = \frac{\widehat{\psi_1}(z)}{1-\mu\widehat{\psi_1}(z)},\qquad z\in\C.
	\end{equation}
	Then $\widehat{K}$ is a meromorphic function on $\C$, and its poles are zeroes of the entire function
	\begin{equation}\label{eq::exponential polynomial h}
		h(z):= 1-\mu e^{-2\pi^2\sigma_1^2(1+z^2)}+\kappa\mu e^{-2\pi^2\sigma_2^2(1+z^2)},\qquad z\in\C.
	\end{equation}
	\begin{remark}\label{rmk::exponential polynomial}
		The holomorphic function $h$ is an exponential polynomial \cite[Chapter 3]{berenstein2012} in $-z^2$ with frequencies $\alpha_0 = 0$, $\alpha_1 = 2\pi^2\sigma_1^2$ and $\alpha_2 = 2\pi^2\sigma_2^2$ satisfying $\alpha_0<\alpha_1<\alpha_2$ due to assumptions on $\sigma_1$ and $\sigma_2$. It is \textit{normalized} since the coefficient of $0$-frequency equals $1$. A necessary condition for $h$ for being \textit{factorizable} \cite[Remark~3.1.5, p.~201]{berenstein2012} is that parameters $\sigma_1$ and $\sigma_2$ are taken so that it is \textit{simple}. By definition \cite[Definition 3.1.4, p. 201]{berenstein2012}, $h$ is simple if $\alpha_1$ and $\alpha_2$ are commensurable, i.e., $\alpha_1/\alpha_2$ is rational, which is equivalent to $\sigma_1^2/\sigma_2^2$ is rational. 
	\end{remark}
	\begin{remark}\label{rmk::variances of DOG for linear BT experiments}
		Thanks to Remark~\ref{rmk::exponential polynomial}, without loss of generality, we can set parameters in the kernel $\omega$ in \eqref{eq::connectivity} are such that $\kappa=1$, $\sigma_1=1/\pi\sqrt{2}$ and $\sigma_2 = \sigma_1\sqrt{2}$. We also let $\mu=1$.
	\end{remark}
	
	Using Theorem~\ref{thm::cplmt-BT}, and arguing similarly as in the proof of \cite[Proposition~5.12.]{tamekue2024mathematical}, we can prove that $a_1$ has a discrete and countable set of zeroes in $(0, +\infty)$, under the considerations in Remark~\ref{rmk::variances of DOG for linear BT experiments}.
	
	\begin{proof}[Proof of Theorem~\ref{thm::zero-level set linear BT}]
		To complete the proof of Theorem~\ref{thm::zero-level set linear BT}, it suffices to consider Lemma~\ref{lem::zero-level set linear BT 2}, Theorem~\ref{thm::cplmt-BT} and observe that $a_L$ given by \eqref{eq::a in lineat regime BT} satisfies \eqref{eq::zero-level set linear BT}.
	\end{proof}
	A consequence of Theorem~\ref{thm::zero-level set linear BT} is the following.
	\begin{corollary}
		Assume $\mu<\mu_{0}$ and that the response function $f$ is linear. Then,
		% that the response function $f$ in \eqref{eq::NF-intro} is linear. Under the assumption $\mu<\mu_0$,
		Equation \eqref{eq::NF-intro} does not reproduce Billock and Tsou's experiments associated with a sensory input consisting of a pattern of horizontal stripes localized in the left area in the cortex V1.
	\end{corollary}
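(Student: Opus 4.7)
The plan is to read off the corollary directly from Theorem~\ref{thm::zero-level set linear BT} and its intermediate Lemma~\ref{lem::zero-level set linear BT 2}, which already carry all of the analytic content. The starting point is the explicit factorization $a_L(x_1,x_2) = a_1(x_1)\cos(2\pi\lambda x_2)$ established in that lemma: under the assumption $\mu<\mu_0$, the contraction-type bound $\mu\,\widehat{\psi_n}(\xi)<1$ annihilates every Fourier mode $n\neq 1$, and only the first harmonic survives, modulated by a profile $a_1\in L^\infty(\R)$.

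Next, I would interpret this factorization geometrically. Reproducing Billock and Tsou's phenomenon in this setting requires that the cortical stationary state on the right half-plane $(0,+\infty)\times\R$, the region complementary to the localized stimulus, display a pattern of \emph{vertical} stripes in V1, since these correspond via the inverse retino-cortical map~\eqref{eq::retino-cortical} to the illusory tunnel pattern reported in~\cite{billock2007}. A genuine vertical-stripe pattern is, up to phase, of the form $\cos(2\pi\lambda' x_1)$, hence constant in $x_2$; its zero-set in $(0,+\infty)\times\R$ is a union of full vertical lines, and in particular contains no horizontal line of the form $\R\times\{x_2^*\}$.

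By contrast, the factorization forces $a_L$ to vanish identically on every horizontal line $(0,+\infty)\times\{x_2^*\}$ with $\cos(2\pi\lambda x_2^*)=0$. These horizontal zero-lines are precisely the component $(0,+\infty)\times\cX_2$ appearing in the zero-set decomposition of Theorem~\ref{thm::zero-level set linear BT}, and they are inherited directly from the cosine factor carried by the forcing $I_L$. Their presence throughout the right half-plane means that, on that region, $a_L$ is nothing more than an amplitude-modulated horizontal-stripe pattern with $x_1$-profile $a_1$; it cannot contain any vertical-stripe component, since a nonzero such component would prevent $a_L$ from vanishing along entire horizontal lines. This is the geometric content of the corollary.

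There is essentially no residual obstacle at this stage: once Theorem~\ref{thm::zero-level set linear BT} is in hand, the non-reproduction is a purely geometric translation of the factorized structure of $a_L$. The real difficulty sits upstream --- proving the discreteness of $\cX_1$ via the exponential-polynomial analysis of the inverse kernel $K$, and showing via the contraction estimate that no higher Fourier modes survive --- and is already handled in the arguments preceding the corollary.
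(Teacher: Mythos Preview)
Your proposal is correct and follows essentially the same approach as the paper: both arguments invoke Theorem~\ref{thm::zero-level set linear BT} (together with the factorization $a_L(x_1,x_2)=a_1(x_1)\cos(2\pi\lambda x_2)$ from Lemma~\ref{lem::zero-level set linear BT 2}) and observe that the persistence of the horizontal zero-lines $(0,+\infty)\times\cX_2$ in the right half-plane is incompatible with the pure vertical-stripe pattern required by Billock and Tsou. The paper phrases the conclusion as the output being a ``mixture of horizontal and vertical stripes'' rather than vertical stripes only, while you phrase it as an amplitude-modulated horizontal-stripe pattern; these are two readings of the same factorization and zero-set structure.
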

	\begin{proof}
		Given that the sensory input in equation \eqref{eq::NF-intro} is a pattern consisting of horizontal stripes localized in the left area in the cortex V1, Theorem~\ref{thm::zero-level set linear BT} shows that the corresponding stationary state consists of a mixture of patterns of horizontal and vertical stripes in the right area in V1 instead of vertical stripes only, as Billock and Tsou reported.
	\end{proof}

	\subsection{Unreproducibility of Billock and Tsou's experiments with certain nonlinear response functions}\label{ss::unreproducibility of BT with certain nonlinearities}
	As we recalled in Section~\ref{ss::Assumption on parameters}, the numerical results provided in \cite[Fig.~8]{tamekue2023} suggest that a complex interplay of excitatory and inhibitory activity is required to model complex phenomena like Billock and Tsou's experiments using the Amari-type neural fields \eqref{eq::NF-intro}. In particular, they suggest adopting a nonlinear function $f_{m,\alpha}$ that allows for positive and negative values but is not odd, breaking the symmetry between excitatory and inhibitory influences. More precisely, \cite[Fig.~8]{tamekue2023} suggests that the stationary output of \eqref{eq::NF-intro} computed with the following response functions does not capture the essential features of visual illusions reported by Billock and Tsou. For $s\in\R$, they are given by:
	\begin{itemize}
		\myitem[\textbf{(NL1)}]\label{it:nl1} Nonnegative (wholly excitatory influence) nonlinearities:
		\begin{equation}\label{eq::nonnegative}
			f_{0,\alpha}(s) = \max(0,\min(1, \alpha s)),\qquad0<\alpha<\infty,
		\end{equation}
		\myitem[\textbf{(NL2)}]\label{it:nl2} Odd (balanced inhibitory and excitatory influence) nonlinearities:
		\begin{equation}\label{eq::odd}
			f_{1,\alpha}(s) = \max(-1,\min(1,\alpha s)),\qquad0<\alpha<\infty,
		\end{equation}
		\myitem[\textbf{(NL3)}]\label{it:nl3} Nonlinearities with a strong excitatory influence and a weak slope:
		\begin{equation}\label{eq::unbalanced and weak slope}
			f_{m,\alpha}(s) = \max(-m,\min(1,\alpha s)),
			% = \begin{cases}
				% 	1,\quad\mbox{if}\quad s\ge \frac 1\alpha,\cr
				% 	\alpha s,\quad\mbox{if}\quad-\frac m\alpha\le s\le \frac 1\alpha,\cr
				% 	-m,\quad\mbox{if}\quad s\le -\frac m\alpha,
				% \end{cases}
			\qquad 0<\alpha< m\le 1,
		\end{equation}
		\myitem[\textbf{(NL4)}]\label{it:nl4} Nonlinearities with a strong inhibitory influence and a weak slope:
		\begin{equation}\label{eq::semilinear and weak slope}
			f_{m,\alpha}(s) = \max(-m,\min(1,\alpha s)),
			% = \begin{cases}
				% 	1,\quad\mbox{if}\quad s\ge \frac 1\alpha,\cr
				% 	\alpha s,\quad\mbox{if}\quad-\frac m\alpha\le s\le \frac 1\alpha,\cr
				% 	-m,\quad\mbox{if}\quad s\le -\frac m\alpha,
				%                                        \end{cases}
			\qquad  0<\alpha< 1<m.
		\end{equation}
	\end{itemize}
	
	%\footnote{Since $f_{m,\alpha}(-\infty)=-m$, $f_{m,\alpha}(\infty)=1$ and $m<1$, the word \textit{unbalanced} is used to highlight that given a cortical activity $a(x,t)$ of a V1's spiking neuron at position $x\in\R^2$ at time $t\ge 0$, the excitatory output $f_{m,\alpha}(a(x,t))\ge 0$ is not compensate by inhibitory output $f_{m,\alpha}(a(x,t))\le 0$.}

	This section aims to provide analytical insight into why the Amari-type neural fields \eqref{eq::NF-intro} do not model Billock and Tsou's experiments when the response function is taken to be one of \ref{it:nl2}-\ref{it:nl4}.  The main focus will be on the study of the qualitative properties in terms of the zero-level set of the stationary solution to \eqref{eq::NF-intro} obtained with each of these nonlinearities when the sensory input is taken as $I_L$ defined in \eqref{eq::sensory inputs}. We are currently unable to treat the case \ref{it:nl1}.
	
	The first theorem of this section is the following.
	\begin{theorem}\label{thm::unreproducibility strong inhibitory or excitatory and weak slope}
		% Assume that the sensory input $I\in L^\infty(\R^2)$ in Equation~\eqref{eq::NF-intro}.
		If $\mu<\mu_0$ and $\min(1,m)\alpha^{-1}\ge\|I\|_\infty\left(1-\mu/\mu_0\right)^{-1}$, then the stationary solution $a_{m,\alpha}$ to \eqref{eq::NF-intro} with the response function $f_{m,\alpha}$ is the solution to \eqref{eq::NF-intro} with the linear response function with slope $\alpha>0$. In particular, if $I=I_L$, the nonlinear response functions \ref{it:nl3} and \ref{it:nl4} do not model Billock and Tsou's experiments.
	\end{theorem}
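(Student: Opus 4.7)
The plan is to use the uniform bound on the stationary solution provided by Theorem~\ref{thm::boundness of the solution} to show that, under the stated hypothesis on the parameters, the solution $a_{m,\alpha}$ never leaves the interval $[-m/\alpha, 1/\alpha]$ on which $f_{m,\alpha}$ agrees with the linear map $s\mapsto \alpha s$. Once this is established, $a_{m,\alpha}$ will satisfy the stationary equation associated with the linear response function, and the conclusion will follow by uniqueness.

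More concretely, I would first recall from Theorem~\ref{thm::existence of stationary input} that the stationary solution $a_{m,\alpha}\in L^\infty(\R^2)$ exists, is unique, and satisfies
\begin{equation}
a_{m,\alpha} = I + \mu\,\omega\ast f_{m,\alpha}(a_{m,\alpha}).
\end{equation}
Applying item~(i) of Theorem~\ref{thm::boundness of the solution} yields the a.e.\ bound $\|a_{m,\alpha}\|_\infty \le \|I\|_\infty(1-\mu/\mu_0)^{-1}$. The hypothesis $\min(1,m)\alpha^{-1}\ge \|I\|_\infty(1-\mu/\mu_0)^{-1}$ then forces $|a_{m,\alpha}(x)|\le \min(1,m)/\alpha$ for a.e.~$x\in\R^2$, and in particular $-m/\alpha \le a_{m,\alpha}(x)\le 1/\alpha$.

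Next, I would observe that on this interval $f_{m,\alpha}(s) = \alpha s$ by definition~\eqref{eq::f m alpha}, so that $f_{m,\alpha}(a_{m,\alpha}(x)) = \alpha\, a_{m,\alpha}(x)$ for a.e.~$x$. Substituting into the stationary identity gives
\begin{equation}
a_{m,\alpha} = I + \mu\alpha\,(\omega\ast a_{m,\alpha}),
\end{equation}
which is exactly equation~\eqref{eq::SS} for the linear response function $f(s)=\alpha s$. Since the assumption $\mu<\mu_0$ also holds for this linear problem (both share the same $\mu_0$), Theorem~\ref{thm::existence of stationary input} ensures uniqueness of the stationary solution, so $a_{m,\alpha}$ must coincide with the stationary solution obtained with the linear response function of slope $\alpha$.

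For the final claim about \ref{it:nl3} and \ref{it:nl4}, I would note that $\|I_L\|_\infty\le 1$, and check that the ``weak slope'' ranges $0<\alpha<m\le 1$ and $0<\alpha<1<m$ are compatible with the hypothesis $\min(1,m)/\alpha\ge (1-\mu/\mu_0)^{-1}$ provided $\alpha$ is small enough; under this the conclusion of the theorem applies, reducing $a_{m,\alpha}$ to the linear stationary state, which by Theorem~\ref{thm::zero-level set linear BT} (together with the corollary that follows it) fails to reproduce Billock and Tsou's experiments. The main ``obstacle'' here is purely bookkeeping: one must be careful that the bound $\|a_{m,\alpha}\|_\infty \le \|I\|_\infty(1-\mu/\mu_0)^{-1}$ truly involves the slope of $f_{m,\alpha}$ at $0$, since $\mu_0$ depends on $\alpha$ through~\eqref{eq::parameter mu_0}; fortunately the inequality in Theorem~\ref{thm::boundness of the solution} is written in terms of this very $\mu_0$, so no further adjustment is needed.
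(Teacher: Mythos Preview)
Your proof of the first assertion is essentially identical to the paper's: invoke Theorem~\ref{thm::existence of stationary input} for existence and uniqueness, apply the bound $\|a_{m,\alpha}\|_\infty\le\|I\|_\infty(1-\mu/\mu_0)^{-1}$ from Theorem~\ref{thm::boundness of the solution}, use the hypothesis to confine $a_{m,\alpha}$ to $[-m/\alpha,1/\alpha]$ where $f_{m,\alpha}(s)=\alpha s$, and conclude by uniqueness of the linear stationary problem.

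For the ``in particular'' clause you diverge slightly. You read it as a free-standing claim about all $(m,\alpha)$ in the ranges \ref{it:nl3}--\ref{it:nl4}, and therefore find yourself needing the extra restriction ``$\alpha$ small enough'' to force the hypothesis $\min(1,m)/\alpha\ge(1-\mu/\mu_0)^{-1}$. The paper instead keeps the hypothesis of the theorem in force for the second sentence: with $\|I_L\|_\infty=1$ and $(1-\mu/\mu_0)^{-1}>1$, the hypothesis \emph{implies} $\min(1,m)>\alpha$, so $(m,\alpha)$ automatically falls into \ref{it:nl3} or \ref{it:nl4}, and Theorem~\ref{thm::zero-level set linear BT} then gives the non-reproducibility. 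In short, the paper deduces membership in \ref{it:nl3}/\ref{it:nl4} from the standing hypothesis rather than trying to verify the hypothesis starting from those ranges; under that reading your extra ``$\alpha$ small enough'' is unnecessary, and without it your argument in any case does not cover the full \ref{it:nl3}--\ref{it:nl4} ranges.
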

	\begin{proof}
		If $\mu<\mu_0$ then $a_{m,\alpha}\in L^\infty(\R^2)$ is the unique solution to $a_{m,\alpha} = I_L+\mu\omega\ast f_{m,\alpha}(a_{m,\alpha})$ thanks to Theorem~\ref{thm::existence of stationary input}. Recall from Theorem~\ref{thm::boundness of the solution} that $\|a_{m,\alpha}\|_\infty\le \|I\|_\infty(1-\mu/\mu_0)^{-1}$. If $\min(1,m)\alpha^{-1}\ge\|I\|_\infty\left(1-\mu/\mu_0\right)^{-1}$, then for a.e. $x\in\R^2$, one has 
		\[
		-\frac{m}{\alpha}\le a_{m,\alpha}(x)\le\frac{1}{\alpha}.
		\]
		Therefore, $f_{m,\alpha}(a_{m,\alpha}(y)) = a_{m,\alpha}(y)$ for a.e. $y\in\R^2$, and $a_{m,\alpha}\in L^\infty(\R^2)$ solves the stationary equation with a linear response function with slope $\alpha>0$. Finally, to prove the last part of the theorem, it suffices to observe that $\|I_L\|_\infty=1$ and $ (1-\mu/\mu_0)^{-1}>1$, which implies that $\min(1,m)>\alpha$, and then $\alpha<m\le 1$ or $\alpha<1<m$. The result then follows at once thanks to the first part of the theorem and Theorem~\ref{thm::zero-level set linear BT}.
	\end{proof}
	\begin{figure}[t]
		\centering
		\includegraphics[width=.65\linewidth]{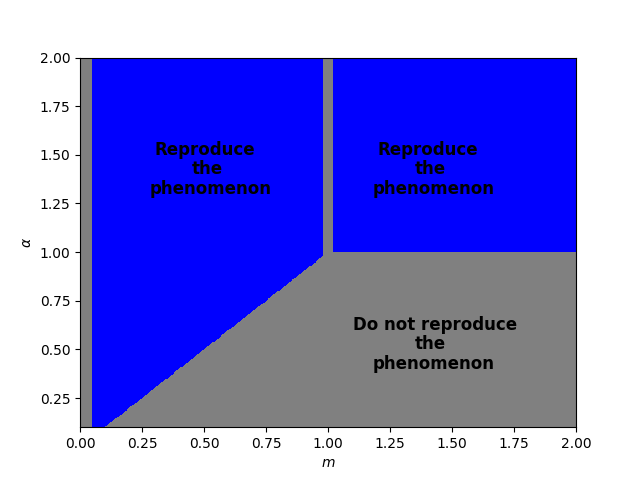}
		\vspace{-0.5cm}
		\caption{Summary on the ranges of parameters $m\ge 0$ and $\alpha>0$ where the nonlinearity $f_{m,\alpha}$ reproduces the phenomenon or not, for a funnel-like stimulus having the cortical representation defined in \eqref{eq::sensory inputs}. The parameters reproduce the phenomenon in \textit{blue}, and in \textit{grey}, they don't reproduce it. 
		}
		\label{fig:reproducibility}
	\end{figure}
	\begin{remark}
		Observe that Theorem~\ref{thm::unreproducibility strong inhibitory or excitatory and weak slope} also accounts for the case of $m=1$ and $\alpha<1$. This means that the odd nonlinearity $f_{1,\alpha}$ of \ref{it:nl2} with $0<\alpha<1$ does not model Billock and Tsou's experiments. It, therefore, remains to prove that the odd nonlinearity $f_{1,\alpha}$ with $\alpha\ge 1$  does not model Billock and Tsou's experiments.
	\end{remark}
	Fortunately, for all $\alpha\ge 0$, the odd nonlinearity $f_{1,\alpha}$ of \ref{it:nl2} satisfies all the hypotheses of Proposition~\ref{pro::general when f bounded and odd BT}, taken as the response function in \eqref{eq::NF-intro}. One, therefore, has the following result. See, for instance, Figure~\ref{fig:left horizontal stripes odd sigmoid} for a numerical visualization.
	% and the nonnegative nonlinearity $f_{0,\alpha}$ with $0<\alpha<\infty$ do not replicate Billock and Tsou's experiments.
	
	\begin{proposition}
		% Assume the neurones' response function is the odd nonlinearity $f_{1,\alpha}$.
		Under the assumption $\mu<\mu_0/2$, Equation \eqref{eq::NF-intro} with response function \ref{it:nl2} does not reproduce Billock and Tsou's experiments associated with a sensory input consisting of a pattern of horizontal stripes localized in the left area in the cortex V1.
	\end{proposition}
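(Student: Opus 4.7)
The plan is to invoke Proposition~\ref{pro::general when f bounded and odd BT} directly. First I would check that the response function $f_{1,\alpha}$ of \ref{it:nl2} satisfies the hypotheses of that proposition: it is a response function in the sense of Definition~\ref{def:response-fct} (Lipschitz, non-decreasing, vanishing at $0$, differentiable at the origin with $f_{1,\alpha}'(0)=\alpha=\|f_{1,\alpha}'\|_\infty$), and moreover $f_{1,\alpha}(-s)=-f_{1,\alpha}(s)$, i.e., it is odd. The sensory input $I_L$ is of the admissible form $\cos(2\pi\lambda x_2)H(\theta_L-x_1)$, and the standing hypothesis $\mu<\mu_0/2$ is exactly the one required by Proposition~\ref{pro::general when f bounded and odd BT}.

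Next I would apply Proposition~\ref{pro::general when f bounded and odd BT} to conclude that the stationary state $a_{L}\in L^\infty(\R^2)$ associated with $I_L$ satisfies, for a.e.~$x_1\in\R$,
\begin{equation}
\{x_2\in\R\mid a_L(x_1,x_2)=0\} \;=\; \{x_2\in\R\mid \cos(2\pi\lambda x_2)=0\} \;=\; \tfrac{1}{4\lambda}+\tfrac{1}{2\lambda}\Z.
\end{equation}
In particular, when restricted to the right half-plane $\{x_1>\theta_L\}$, the zero level-set of $a_L$ is a discrete union of \emph{horizontal} lines $x_2=\tfrac{1}{4\lambda}+\tfrac{k}{2\lambda}$, $k\in\Z$, and contains no line of the form $x_1=\mathrm{const}$.

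From this characterization I would conclude: according to Section~\ref{ss::BT experiments}, reproducing Billock and Tsou's effect in the cortical picture means that the stationary output in the region complementary to the stimulus (here the right area of V1) must display a pattern of \emph{vertical} stripes, which by the grayscale convention introduced after~\eqref{eq::sensory inputs} would require zeros of $a_L$ along vertical lines $x_1=\mathrm{const}$. Since we have just shown that in $\{x_1>\theta_L\}$ the zeros of $a_L$ form horizontal lines only, the cortical output is incompatible with the appearance of vertical stripes on the right, and therefore~\eqref{eq::NF-intro} with nonlinearity \ref{it:nl2} fails to reproduce the phenomenon. The bulk of the work is already buried in Proposition~\ref{pro::general when f bounded and odd BT}, so I expect no genuine obstacle at this step; the only subtlety worth spelling out is the translation from the analytical conclusion on zero level-sets to the visual notion of ``horizontal vs.~vertical stripes'', which is exactly the binary grayscale convention fixed after~\eqref{eq::sensory inputs}.
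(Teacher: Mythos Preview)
Your proposal is correct and follows the same route as the paper: both invoke Proposition~\ref{pro::general when f bounded and odd BT} for the odd nonlinearity $f_{1,\alpha}$ and read off the structure of the zero level-set of $a_L$ to conclude that no vertical-stripe pattern appears in the right area of V1. Your write-up is in fact slightly more careful than the paper's, since you explicitly verify the hypotheses on $f_{1,\alpha}$ and spell out the translation from ``zeros along horizontal lines $x_2=\tfrac{1}{4\lambda}+\tfrac{k}{2\lambda}$'' to ``no vertical stripes'' via the grayscale convention; the paper's own proof loosely describes the output as ``a mixture of horizontal and vertical stripes'', whereas Proposition~\ref{pro::general when f bounded and odd BT} actually yields only horizontal zero-lines, which is exactly what you wrote.
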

	\begin{proof}
		Given that the sensory input in Equation \eqref{eq::NF-intro} is a pattern consisting of horizontal stripes localized in the left area in the cortex V1, Proposition~\ref{pro::general when f bounded and odd BT} shows that the corresponding stationary state consists of a mixture of patterns of horizontal and vertical stripes in the right area in V1 instead of vertical stripes only, as Billock and Tsou reported.
	\end{proof}
	
	% Observe that we did not treat the nonnegative nonlinearity \ref{it:nl1}. We refer to Section~\ref{s::concluding remarks and discussion} for a discussion of this case.
	
	Summing up, the results in this section provide a complete theoretical investigation of Billock and Tsou's experiments modeling by Equation \eqref{eq::NF-intro} with response function $f_{m,\alpha}$, except for the range $m\neq 1$ and $\alpha\ge \min(m, 1)$. Although outside of the scope of this work, we observe that thanks to Corollary~\ref{coro::BT NC}, the study of this range can be reduced to considering the semilinear response function $f_{\infty,\alpha}$ as defined in \eqref{eq::f infinity alpha}. 
	
	{For fixed parameters $\sigma_1>0$, $\sigma_2>0$ and $\kappa>0$ in the kernel $\omega$, and $\mu>0$ chosen such that $\mu<\mu_0$, we summarize in Figure~\ref{fig:reproducibility} the ranges of parameters $m\ge 0$ and $\alpha>0$ for which equation \eqref{eq::NF-intro} with the response function $f_{m,\alpha}$ reproduce the phenomenon or not, for a funnel-like stimulus having the cortical representation defined in \eqref{eq::sensory inputs}.
	}
	
	\section{Numerical analysis and experiments}
	\label{s::numerics}
	% \label{s::numerical results for Billock and Tsou's experiments}
	
	In this section, we present a numerical scheme for the approximation of stationary solutions of \eqref{eq::NF-intro} and analyze its convergence. Finally, we present some numerical experiments obtained using this scheme.
	
	% Due to the difficulty of carrying out this study analytically because of the nonlinear nature of the equation, as a first attempt, we present a numerical analysis-type of argument. 
	% that should ``theoretically'' justify why  Equation \eqref{eq::NF-intro} with the aforementioned nonlinear response functions replicate Billock and Tsou experiments.
	
	\subsection{Analysis of the numerical scheme}\label{s::Analysis of NS}
	In this section, for the sake of generality, we assume that the response function $f$ satisfies the assumptions in Definition~\ref{def:response-fct}. 
	Given a sensory input $I\in L^\infty(\R^2)$, we compute numerical solutions $a_{n,h,M}:\Z^2\to \R$ depending on three parameters $h>0$, $M>0$, and $n\in \N$. These are obtained via the following iterative procedure, where $(i,j)\in \Z^2$:
	\begin{equation}\label{eq::sequence of functions numerical}
		\begin{split}
			a_{0,h,M}(i,j) &= I(ih,jh),\\
			a_{n+1,h,M}(i,j) &= I(ih,jh)+\mu h^2\sum_{p,q=-M}^{M}\omega(ph,qh)f(a_{n,h}(i-p,j-q)).
		\end{split}
	\end{equation}
	% The theoretical results presented below aim at estimating the error between $a_{n,h,M}$ and the exact stationary solution $a_I$.
	We start by presenting the following error estimate, whose proof is quite technical and is presented in Appendix~\ref{s::proof-of-numerical-error}.
	
	\begin{theorem}[Numerical error estimate]
		\label{thm:numerical-error}
		Assume that $\mu<\mu_0$ and let the sensory input $I$ be given by
		\begin{equation}
			I(x_1,x_2) = I_1(x_1,x_2)H(\theta_L - x_1)+I_2(x_1,x_2)H(x_1-\theta_L)
		\end{equation}
		where $\theta_L\in\mathbb{R}$, and $I_1,I_2\in L^\infty(\R^2)$ are globally Lipschitz continuous. Then, for any $\eta\in(\mu/\mu_0,1)$ there exists $h_0>0$ such that for every $h<h_0$ it holds
		\begin{equation}
			\sup_{(i,j)\in \Z^2}| a_I(ih,jh)-a_{n,h,M}(i,j)| = O(h)+O\left(\eta^n\right)+O\left(e^{-\frac{M^2h^2}{2\sigma_2^2}}\right)
		\end{equation}
		where the $O(\cdot)$'s only depend on $\mu$, $\eta$ $\alpha$, $\omega$, $\|I\|_\infty$, and the Lipschitz constants of $I_1$ and $I_2$.
	\end{theorem}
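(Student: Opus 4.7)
The plan is to combine a Banach fixed-point argument on $\ell^\infty(\Z^2)$ with a consistency/stability analysis of the numerical scheme. First I would introduce the continuous operator $\cT[u]=I+\mu\omega*f(u)$ on $L^\infty(\R^2)$, whose unique fixed point is $a_I$ by Theorem~\ref{thm::existence of stationary input}, and the discrete operator $\cT_{h,M}$ on $\ell^\infty(\Z^2)$ associated with the iteration~\eqref{eq::sequence of functions numerical}. Since $\omega\in\cS(\R^2)$, the Riemann sums $h^2\sum_{p,q\in\Z}|\omega(ph,qh)|$ converge to $\|\omega\|_1$ as $h\to 0$, so for any $\eta>\mu/\mu_0=\mu\alpha\|\omega\|_1$ there exists $h_0>0$ such that, uniformly in $M$, $\cT_{h,M}$ is an $\eta$-contraction on $\ell^\infty(\Z^2)$ for every $h<h_0$. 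Denote by $a_{h,M}^*$ its unique fixed point; the fixed-point identity gives $\|a_{h,M}^*\|_{\ell^\infty}\le\|I\|_\infty/(1-\eta)$, bounded uniformly in $h$ and $M$.

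Writing $a_I^h(i,j):=a_I(ih,jh)$, I decompose
\[
a_I^h(i,j)-a_{n,h,M}(i,j)=\bigl(a_I^h(i,j)-a_{h,M}^*(i,j)\bigr)+\bigl(a_{h,M}^*(i,j)-a_{n,h,M}(i,j)\bigr).
\]
The second term is handled by the standard contraction estimate $\|a_{h,M}^*-a_{n,h,M}\|_{\ell^\infty}\le\eta^n\|a_{h,M}^*-a_{0,h,M}\|_{\ell^\infty}=O(\eta^n)$, using the uniform bounds above. For the first, consistency plus stability gives
\[
\|a_I^h-a_{h,M}^*\|_{\ell^\infty}\le\frac{\|r_{h,M}\|_{\ell^\infty}}{1-\eta},\qquad r_{h,M}(i,j):=\cT[a_I](ih,jh)-\cT_{h,M}[a_I^h](i,j),
\]
so everything reduces to bounding the residual $r_{h,M}$.

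I would split the residual as $r_{h,M}=\mu(T_{h,M}+Q_{h,M})$, where $T_{h,M}$ is the tail error from integrating $\omega(y)f(a_I((ih,jh)-y))$ over $\{\max(|y_1|,|y_2|)>Mh\}$ and $Q_{h,M}$ is the quadrature error on $[-Mh,Mh]^2$. The tail estimate is immediate from the Gaussian pointwise bound $|\omega(y)|\le C_\omega e^{-|y|^2/(2\sigma_2^2)}$ combined with the uniform bound on $f(a_I)$ provided by Theorem~\ref{thm::boundness of the solution}; a direct Gaussian-tail computation yields $\|T_{h,M}\|_{\ell^\infty}=O(e^{-M^2h^2/(2\sigma_2^2)})$. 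For $Q_{h,M}$, on each grid cell of side $h$ the integrand $y\mapsto\omega(y)f(a_I((ih,jh)-y))$ is Lipschitz in $y$ with local Lipschitz constant bounded by $|\nabla\omega(y)|\|f(a_I)\|_\infty+|\omega(y)|\alpha D_I$ (using $\alpha$-Lipschitzness of $f$ and the estimate $D_I$ of Proposition~\ref{pro::Lipschitz continuity} applied to $a_I$ on each of the two open half-planes $\{x_1<\theta_L\}$, $\{x_1>\theta_L\}$), \emph{except} on the vertical line $\{y_1=ih-\theta_L\}$, where $a_I$ inherits the jump of $I$.

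The main obstacle is handling those cells straddling this discontinuity line. On cells not meeting the line, a midpoint-type Riemann sum bound gives a per-cell error of (local Lipschitz constant)$\times h^3$; summed against the Gaussian decay of $\omega$ and $\nabla\omega$ this produces a convergent integral times $h$, i.e.\ $O(h)$. On the at most $O(M)$ cells through the discontinuity, each contributes at most $2\|\omega f(a_I)\|_{L^\infty(\text{cell})}h^2$, and summing in the vertical variable gives a bound of the form $C\,h\sum_{|q|\le M}|\omega(y_1^*,qh)|\,h$ with $y_1^*:=ih-\theta_L$, itself $O(h)$ because the one-dimensional slice of $|\omega|$ is integrable on $\R$. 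Collecting the three contributions $O(h)$, $O(e^{-M^2h^2/(2\sigma_2^2)})$ and $O(\eta^n)$, and tracking that all constants depend only on $\mu$, $\eta$, $\alpha$, $\omega$, $\|I\|_\infty$ and the Lipschitz constants of $I_1,I_2$ (the latter entering through $D_I$), would complete the proof.
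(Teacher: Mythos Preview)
Your argument is correct and rests on the same ingredients as the paper's proof: the $\eta$-contraction of the discrete operator for small $h$ (from $h^2\sum|\omega(ph,qh)|\to\|\omega\|_1$), the Gaussian tail estimate for the $M$-truncation, the cell-by-cell Lipschitz quadrature bound using Proposition~\ref{pro::Lipschitz continuity} away from $\{x_1=\theta_L\}$, and the crude $O(h)$ estimate on the single column of cells straddling the discontinuity.

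The organization differs slightly. The paper introduces an additional intermediate $a_h$ (the fixed point of the untruncated discrete scheme, $M=+\infty$) and splits
\[
a_I(ih,jh)-a_{n,h,M}(i,j)=\big(a_I(ih,jh)-a_h(i,j)\big)+\big(a_h(i,j)-a_{h,M}(i,j)\big)+\big(a_{h,M}(i,j)-a_{n,h,M}(i,j)\big),
\]
handling the $h$-error and the $M$-error in separate steps, each with its own contraction/stability argument. Your two-term decomposition folds the first two steps into a single residual $r_{h,M}=\mu(T_{h,M}+Q_{h,M})$ and invokes stability only once, which is more economical. The paper's version is slightly more modular (each error source is isolated), but both routes are equivalent in substance and yield the same dependence of the constants on the data.
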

	
	\begin{remark}
		The only part of the proof where the Lipschitz continuity assumption on the sensory input in Theorem~\ref{thm:numerical-error} is needed is to control the error introduced by the discretization of the integral term of \eqref{eq::NF-intro}. 
		It is, however, easy to see that the argument of proof can be adapted to more general sensory inputs $I$, under appropriate assumptions on the region where $I$ is not Lipschitz continuous.
	\end{remark}
	
	\begin{remark}
		It is immediate from Theorem~\ref{thm:numerical-error} that to have numerical convergence to the exact stationary solution $a_I$, one has to choose $h\to 0$, $n\to +\infty$, and $M$ such that $Mh\to +\infty$.
	\end{remark}
	
	To compare the zero level-set of the exact solution with their numerical approximations, we introduce the following approximated zero level-set for $a_I$: 
	\begin{equation}
		\mathcal{Z}_{a_I}^\varepsilon = \{ x\in\R^2 \mid |a_I|< \varepsilon \}, \qquad \varepsilon>0.
	\end{equation}
	In order to define a numerical approximation of the above, 
	for a sensory input $I\in L^\infty(\mathbb{R}^2)$ as in Theorem~\ref{thm:numerical-error}, we let $\Omega_\pm=\{x\in\R^2\mid \pm(x_1-\theta_L)>0\}$. Then, for $(n,h,M)\in \N\times \R_+\times \R_+$, we define
	\begin{eqnarray}
		\mathcal{Z}_{n,h,M}^{\varepsilon,\pm} &=& \{ x\in \Omega_\pm \mid \exists (i,j)\in \Z^2 \text{ s.t.\ } \\ \nonumber 
		& & \qquad\qquad\qquad (ih,jh)\in\Omega_\pm,\,|x-(ih,jh)|< h/2 \text{ and } |a_{n,h,M}(i,j)|< \varepsilon \}\\
		\mathcal{Z}_{n,h,M}^{\varepsilon} &=& \mathcal{Z}_{n,h,M}^{\varepsilon,-}\cup \mathcal{Z}_{n,h,M}^{\varepsilon,+}.  
	\end{eqnarray}
	We have the following result, which guarantees the convergence of the numerical approximations of the zero level-set to the exact set $\mathcal Z_{a_I}$.
	
	\begin{theorem}
		Under the same assumptions as in Theorem \ref{thm:numerical-error}, for any $\varepsilon\in(0,1/2)$ it holds
		\begin{equation}
			\label{eq:inclusions}
			\mathcal{Z}_{n,h,M}^{\varepsilon/2} \subset \mathcal{Z}_{a_I}^{\varepsilon}
			\subset \mathcal{Z}_{n,h,M}^{2\varepsilon} 
		\end{equation}
		for any $(n,h,M)\in \N\times\R_+\times \R_+$ such that, for some constant $C>0$ depending only on $\mu$, $\alpha$, $\omega$, $\|I\|_\infty$ and the Lipschitz constants of $I_1$ and $I_2$, it holds
		\begin{equation}
			\label{eq:estimates-nhm}
			h \le C \varepsilon,
			\qquad
			n \ge -C \log\varepsilon,
			\qquad
			M \ge -C\frac{\log \varepsilon}{\varepsilon^2}.
		\end{equation}
	\end{theorem}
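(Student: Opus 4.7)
The plan is to combine two ingredients: the numerical error estimate
\begin{equation}
E(n,h,M) := \sup_{(i,j)\in\Z^2} |a_I(ih,jh)-a_{n,h,M}(i,j)|
\end{equation}
provided by Theorem~\ref{thm:numerical-error}, with the Lipschitz continuity of the exact stationary state $a_I$ on each of the open half-planes $\Omega_\pm=\{x\in\R^2\mid \pm(x_1-\theta_L)>0\}$. The latter follows from Proposition~\ref{pro::Lipschitz continuity} applied to each half-plane separately, on which $I$ is Lipschitz by hypothesis, yielding a common Lipschitz constant $D_I$ given by \eqref{eq:DI} and depending only on $\mu,\alpha,\omega,\|I\|_\infty$ and on the Lipschitz constants of $I_1,I_2$. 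Both inclusions in \eqref{eq:inclusions} are then obtained from the triangle inequality.

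\textbf{First inclusion.} To prove $\mathcal Z_{n,h,M}^{\varepsilon/2}\subset \mathcal Z_{a_I}^{\varepsilon}$, fix $x\in\mathcal Z_{n,h,M}^{\varepsilon/2}$. By definition $x\in\Omega_\pm$ for some sign, and there exists $(i,j)\in\Z^2$ such that $(ih,jh)\in\Omega_\pm$, $|x-(ih,jh)|<h/2$, and $|a_{n,h,M}(i,j)|<\varepsilon/2$. Since both $x$ and $(ih,jh)$ lie in the same $\Omega_\pm$, the Lipschitz bound on $a_I$ applies and gives
\begin{equation}
|a_I(x)|\le |a_I(x)-a_I(ih,jh)|+|a_I(ih,jh)-a_{n,h,M}(i,j)|+|a_{n,h,M}(i,j)|\le \tfrac{D_I}{2}h+E(n,h,M)+\tfrac{\varepsilon}{2},
\end{equation}
which is smaller than $\varepsilon$ as soon as $D_I h/2+E(n,h,M)\le \varepsilon/2$, a condition to which we return below.

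\textbf{Second inclusion.} For $x\in\mathcal Z_{a_I}^{\varepsilon}\cap\Omega_\pm$, we need to exhibit a grid point $(i,j)$ with $(ih,jh)\in\Omega_\pm$, $|x-(ih,jh)|<h/2$ and $|a_{n,h,M}(i,j)|<2\varepsilon$. For any such grid point the same triangle inequality gives
\begin{equation}
|a_{n,h,M}(i,j)|\le |a_I(x)|+D_I|x-(ih,jh)|+E(n,h,M)<\varepsilon+\tfrac{D_I}{2}h+E(n,h,M),
\end{equation}
which is $\le 2\varepsilon$ under the very same condition. \emph{The main technical obstacle} is the existence of such a grid point on the correct side of the interface $\{x_1=\theta_L\}$: since the nearest lattice point of spacing $h$ to $x$ lies at Euclidean distance at most $h\sqrt{2}/2$, the covering radius exceeds $h/2$, so one must either interpret $|x-(ih,jh)|$ in the $\ell^\infty$ sense (under which every $x\in\R^2$ admits a grid point at distance $\le h/2$ lying on the same side as $x$ up to a $h$-thin strip around $\{x_1=\theta_L\}$), or else enlarge slightly the constant in the condition $h\le C\varepsilon$. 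In either case the cost is absorbed in the final constant.

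\textbf{Parameter bookkeeping.} It remains to verify that \eqref{eq:estimates-nhm} implies $D_I h/2+E(n,h,M)\le \varepsilon/2$. Fix once and for all $\eta\in(\mu/\mu_0,1)$, and denote by $C_1,C_2,C_3$ the constants hidden in the three $O(\cdot)$'s of Theorem~\ref{thm:numerical-error}. Choosing $h\le \varepsilon/(2D_I+4C_1)$ controls $D_I h/2+C_1 h\le \varepsilon/4$; requiring $C_2\eta^n\le \varepsilon/8$, i.e.\ $n\ge \log(8C_2/\varepsilon)/|\log\eta|$, gives the bound $n\ge -C\log\varepsilon$ for a suitable $C$; and requiring $C_3 e^{-M^2h^2/(2\sigma_2^2)}\le \varepsilon/8$ translates into $M\ge \sigma_2\sqrt{2\log(8C_3/\varepsilon)}/h$. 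Combined with $h=O(\varepsilon)$, this is implied with room to spare by the sufficient condition $M\ge -C\log\varepsilon/\varepsilon^2$ appearing in the statement. All constants depend only on the quantities listed in Theorem~\ref{thm:numerical-error}, and the proof is complete.
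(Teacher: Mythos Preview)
Your proof is correct and follows essentially the same approach as the paper: combine the uniform error bound from Theorem~\ref{thm:numerical-error} with the Lipschitz estimate of Proposition~\ref{pro::Lipschitz continuity} on each half-plane $\Omega_\pm$, and conclude both inclusions by the triangle inequality. Your treatment is in fact more careful than the paper's, which simply asserts that ``combining the two estimates one easily obtains \eqref{eq:inclusions}'' --- in particular you flag the covering-radius issue (the Euclidean ball of radius $h/2$ around grid points does not tile $\R^2$, and a point near the interface might have its nearest grid point on the wrong side), a point the paper's proof passes over in silence; your proposed fixes (read $|\cdot|$ as $\ell^\infty$, or absorb the factor $\sqrt 2$ into $C$) are both acceptable.
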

	
	\begin{proof}
		By Theorem~\ref{thm:numerical-error}, there exists $(n_0,h_0,M_0)\in \N\times \R_+\times \R_+$ such that for any $n>n_0$, $h<h_0$, and $M>M_0$, we have that 
		\begin{equation}
			\label{eq:zero1}
			\sup_{(i,j)\in \Z^2}| a_I(ih,jh)-a_{n,h,M}(i,j)| < \frac{\varepsilon}2.
		\end{equation}
		The estimate \eqref{eq:estimates-nhm} immediately follows choosing, e.g., $\eta=(1+\mu/\mu_0)/2$.
		Moreover, by Lipschitz continuity of $a_I$ on $\Omega_+\cup\Omega_-$, which is guaranteed by Proposition~\ref{pro::Lipschitz continuity}, up to reducing $h_0$ (i.e., reducing $C>0$), for all $(i,j)\in \Z^2$ with $(ih,jh)\in\Omega^\pm$ and $x\in \Omega^\pm$ such that $|x-(ih,jh)|\le h/2$, we have
		\begin{equation}
			\label{eq:zero2}
			| a_I(x)-a_{I}(ih,jh)| < \frac{\varepsilon}2.
		\end{equation}
		Combining \eqref{eq:zero1} and \eqref{eq:zero2}, one easily obtains \eqref{eq:inclusions}, completing the proof of the statement.
	\end{proof}
	
	\subsection{Simulations for Billock and Tsou experiments}\label{s::Simulations}
	
	The numerical implementation is obtained using the Julia toolbox from \cite{tamekue:tel-04230895}, which implements the numerical scheme presented above. 
	These experiments have been obtained with the parameters:
	\begin{equation}
		n = 10^2, \qquad h = 10^{-2}, \qquad M = 10^3.
	\end{equation}
	{ The sensory input is taken to be localized either on the left or on the right part of the cortical space. In the first case we let $I(x_1,x_2)=\cos(2\pi\lambda x_2)H(\theta-x_1)$, while in the second case $I(x_1,x_2)=\cos(2\pi\lambda x_2)H(x_1-\theta)$ for $\lambda>0$ and $\theta>0$. The choice of the input is precised in the captions, while we collect in Table~\ref{tab:parameters} the values of the parameters.}
	
	\begin{table}%[]
		\centering
		\begin{tabular}{|l||c|c||c|c||c|c|c||c|c|}
			\hline 
			& \multicolumn{2}{|c||}{$I$} & \multicolumn{2}{|c||}{$f$} & \multicolumn{3}{|c||}{$\omega$} & & \\
			& $\lambda$ & $\theta$ & $m$ & $\alpha$ & $\sigma_1$ & $\sigma_2$ & $\kappa$ & $\mu$ & $\mu_0$ \\
			\hline 
			\hline 
			Figure~\ref{fig:Billock-funnel_fovea}  & $0.4$ & $5$  &  $0.2$ & $0.5$ & $1/\pi\sqrt{2}$ & $1/\pi$ & $1.2$ & $1.5$ & $1.92$ \\
			Figure~\ref{fig:Billock-funnel_surround}  & $0.6$ & $2$  &  $0.2$ & $0.8$ & $1/\pi$ & $\sqrt{2}/\pi$ & $1$ & $1.2$ & $2$ \\
			Figure~\ref{fig:billock-funnel}  & $1.25$ & $3$ & $0.5$ & $1.5$ & $1/\pi\sqrt{2}$ & $1/\pi$ & $1.2$ & $1.5$ & $1.92$ \\
			Figure~\ref{fig:billock-funnel-surround} & $1$ & $2$ & $\infty$ & $5$ & $1/\pi$ & $\sqrt{2}/\pi$ & $1$ & $1.2$ & $2$ \\ 
			\hline
		\end{tabular}
		\caption{Parameters used in the presented numerical simulation.}
		\label{tab:parameters}
	\end{table}
	
	We exhibit in Figures~\ref{fig:Billock-funnel_fovea} and \ref{fig:Billock-funnel_surround} a numerical reproduction of Billock and Tsou's experiments, in the sense that the stripes' frequency is similar to that used in the experiment, for a funnel-like stimulus localized in the fovea and the peripheral visual field. In V1, we have a pattern of black/white horizontal stripes in the \textit{left} (respectively \textit{right}) area in V1 and white in the \textit{right} (respectively \textit{left}) area in V1. 
	We also exhibit in Figures~\ref{fig:billock-funnel} and \ref{fig:billock-funnel-surround} a numerical experiment where the stripes' frequency is not the one of Billock and Tsou's experiments. 
	
	Observe that each output pattern exhibited in Figures~\ref{fig:Billock-funnel_fovea}--\ref{fig:billock-funnel-surround} captures all the essential features of the after-image reported by Billock and Tsou at the level of V1. It suffices to apply the inverse retino-cortical map to each output pattern to obtain such images at the retina level. See, for instance, \cite{tamekue2023}.
	
	\begin{figure}[t]
		\centering
		% First image with TikZ square contour
		\begin{minipage}{0.45\textwidth}
			\centering
			\begin{tikzpicture}[remember picture]
				\node[anchor=south west,inner sep=0] (image1) at (0,0) {
					\includegraphics[width=0.8\textwidth]{BT-stimulus_fovea-V1.png}
				};
				\begin{scope}[x={(image1.south east)},y={(image1.north west)}]
					\draw[black, thick, opacity=0.1] (0,0) rectangle (1,1);
				\end{scope}
			\end{tikzpicture}
		\end{minipage}
		\hfill
		% Second image with TikZ square contour
		\begin{minipage}{0.45\textwidth}
			\centering
			\begin{tikzpicture}[remember picture]
				\node[anchor=south west,inner sep=0] (image2) at (0,0) {
					\includegraphics[width=0.8\textwidth]{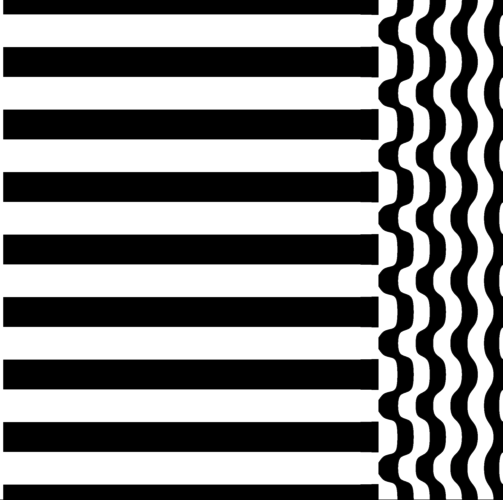}
				};
				\begin{scope}[x={(image2.south east)},y={(image2.north west)}]
					\draw[black, thick, opacity=0.1] (0,0) rectangle (1,1);
				\end{scope}
			\end{tikzpicture}
		\end{minipage}
		% Arrow with text (above) and math (below), drawn after both images have been placed
		\begin{tikzpicture}[remember picture, overlay]
			\draw[->, thick] ([xshift=5pt]image1.east) -- ([xshift=-5pt]image2.west)
			node[midway, above, font=\small] {via the stationary}
			node[midway, below, font=\small] {equation~\eqref{eq::SS}};
		\end{tikzpicture}
		\caption{{\textit{Left:}  sensory input $I(x_1, x_2) = \cos(2\pi\lambda x_2)H(\theta-x_1)$. \textit{Right:} corresponding stationary output.}}
		\label{fig:Billock-funnel_fovea}
	\end{figure}
	\begin{figure}%[ht]
		\centering
		% First image with TikZ square contour
		\begin{minipage}{0.45\textwidth}
			\centering
			\begin{tikzpicture}[remember picture]
				\node[anchor=south west,inner sep=0] (image1) at (0,0) {
					\includegraphics[width=0.8\textwidth]{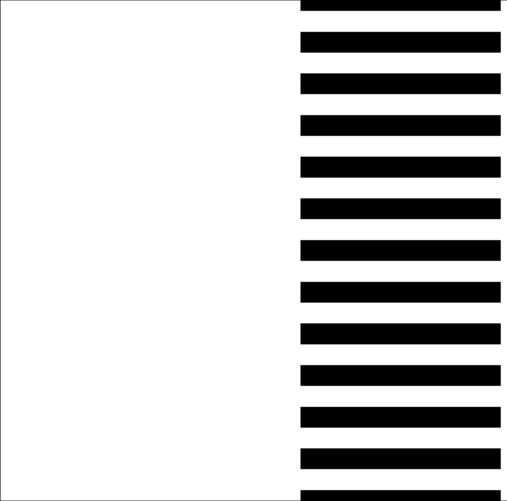}
				};
				\begin{scope}[x={(image1.south east)},y={(image1.north west)}]
					\draw[black, thick, opacity=0.1] (0,0) rectangle (1,1);
				\end{scope}
			\end{tikzpicture}
		\end{minipage}
		\hfill
		% Second image with TikZ square contour
		\begin{minipage}{0.45\textwidth}
			\centering
			\begin{tikzpicture}[remember picture]
				\node[anchor=south west,inner sep=0] (image2) at (0,0) {
					\includegraphics[width=0.8\textwidth]{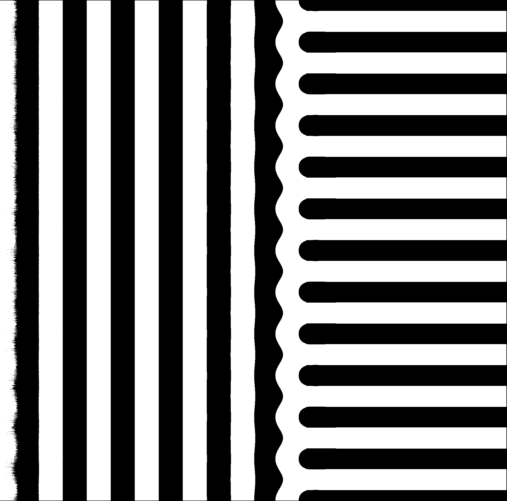}
				};
				\begin{scope}[x={(image2.south east)},y={(image2.north west)}]
					\draw[black, thick, opacity=0.1] (0,0) rectangle (1,1);
				\end{scope}
			\end{tikzpicture}
		\end{minipage}
		% Arrow with text (above) and math (below), drawn after both images have been placed
		\begin{tikzpicture}[remember picture, overlay]
			\draw[->, thick] ([xshift=5pt]image1.east) -- ([xshift=-5pt]image2.west)
			node[midway, above, font=\small] {via the stationary}
			node[midway, below, font=\small] {equation~\eqref{eq::SS}};
		\end{tikzpicture}
		\caption{{\textit{Left:}  sensory input $I(x_1, x_2) = \cos(2\pi\lambda x_2)H(x_1-\theta)$. \textit{Right:} corresponding stationary output.}}
		\label{fig:Billock-funnel_surround}
	\end{figure}
	\begin{figure}[t]
		\centering
		% First image with TikZ square contour
		\begin{minipage}{0.45\textwidth}
			\centering
			\begin{tikzpicture}[remember picture]
				\node[anchor=south west,inner sep=0] (image1) at (0,0) {
					\includegraphics[width=0.8\textwidth]{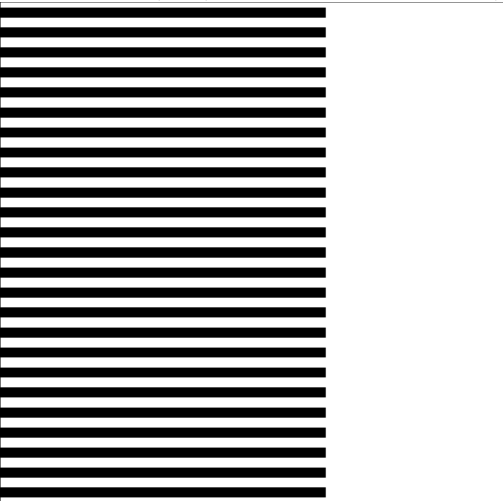}
				};
				\begin{scope}[x={(image1.south east)},y={(image1.north west)}]
					\draw[black, thick, opacity=0.1] (0,0) rectangle (1,1);
				\end{scope}
			\end{tikzpicture}
		\end{minipage}
		\hfill
		% Second image with TikZ square contour
		\begin{minipage}{0.45\textwidth}
			\centering
			\begin{tikzpicture}[remember picture]
				\node[anchor=south west,inner sep=0] (image2) at (0,0) {
					\includegraphics[width=0.8\textwidth]{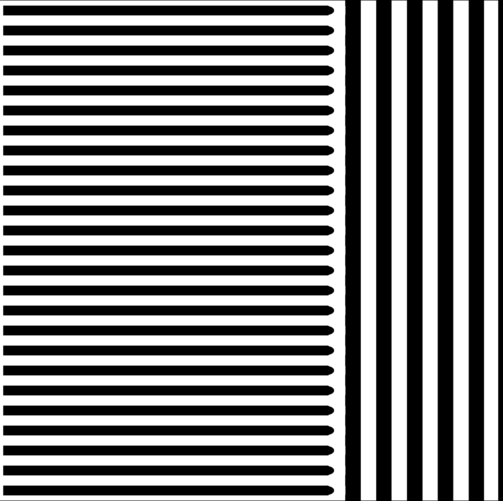}
				};
				\begin{scope}[x={(image2.south east)},y={(image2.north west)}]
					\draw[black, thick, opacity=0.1] (0,0) rectangle (1,1);
				\end{scope}
			\end{tikzpicture}
		\end{minipage}
		% Arrow with text (above) and math (below), drawn after both images have been placed
		\begin{tikzpicture}[remember picture, overlay]
			\draw[->, thick] ([xshift=5pt]image1.east) -- ([xshift=-5pt]image2.west)
			node[midway, above, font=\small] {via the stationary}
			node[midway, below, font=\small] {equation~\eqref{eq::SS}};
		\end{tikzpicture}
		\caption{{\textit{Left:}  sensory input $I(x_1, x_2) = \cos(2\pi\lambda x_2)H(\theta-x_1)$. \textit{Right:} corresponding stationary output.}}
		\label{fig:billock-funnel}
	\end{figure}
	
	\begin{figure}%[ht]
		\centering
		% First image with TikZ square contour
		\begin{minipage}{0.45\textwidth}
			\centering
			\begin{tikzpicture}[remember picture]
				\node[anchor=south west,inner sep=0] (image1) at (0,0) {
					\includegraphics[width=0.8\textwidth]{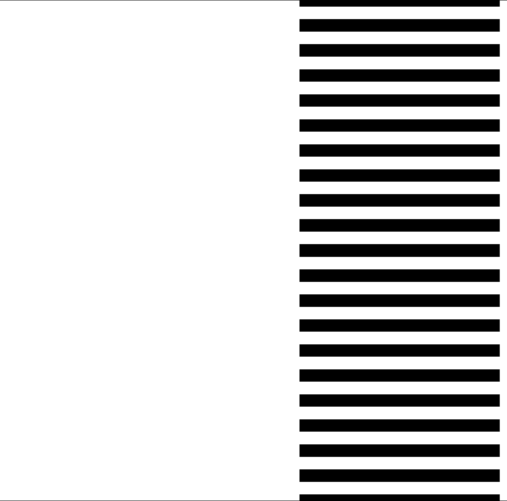}
				};
				\begin{scope}[x={(image1.south east)},y={(image1.north west)}]
					\draw[black, thick, opacity=0.1] (0,0) rectangle (1,1);
				\end{scope}
			\end{tikzpicture}
			% \caption{Funnel pattern in the centre of the visual field.}
			% \label{fig:fovea funnel odd sigmoid}
		\end{minipage}
		\hfill
		% Second image with TikZ square contour
		\begin{minipage}{0.45\textwidth}
			\centering
			\begin{tikzpicture}[remember picture]
				\node[anchor=south west,inner sep=0] (image2) at (0,0) {
					\includegraphics[width=0.8\textwidth]{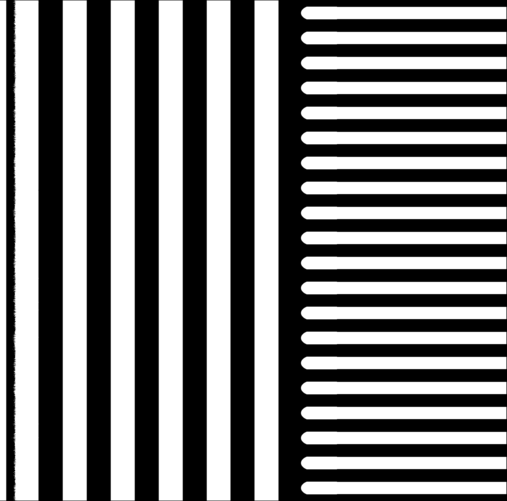}
				};
				\begin{scope}[x={(image2.south east)},y={(image2.north west)}]
					\draw[black, thick, opacity=0.1] (0,0) rectangle (1,1);
				\end{scope}
			\end{tikzpicture}
			% \caption{Horizontal stripes in the left area of V1.}
			% \label{fig:left horizontal stripes odd sigmoid}
		\end{minipage}
		% Arrow with text (above) and math (below), drawn after both images have been placed
		\begin{tikzpicture}[remember picture, overlay]
			\draw[->, thick] ([xshift=5pt]image1.east) -- ([xshift=-5pt]image2.west)
			node[midway, above, font=\small] {via the stationary}
			node[midway, below, font=\small] {equation~\eqref{eq::SS}};
		\end{tikzpicture}
		\caption{{\textit{Left:}  sensory input $I(x_1, x_2) = \cos(2\pi\lambda x_2)H(x_1-\theta)$. \textit{Right:} corresponding stationary output.}}
		\label{fig:billock-funnel-surround}
	\end{figure}

	\section{Concluding remarks and discussion}\label{s::concluding remarks and discussion}
	In this paper, we provided a mechanistic modeling of the psychophysical experiments reported by Billock and Tsou \cite{billock2007} using neural dynamics of Amari-type that models the average membrane potential of neuronal activity of spiking neurons
	in the primary visual cortex (V1). We focused on the case where intra-neural connectivity is smaller than the threshold where specific geometrical patterns spontaneously arise in the absence of sensory input from the retina. We considered, in particular, visual stimuli consisting of regular funnel patterns localized in the fovea or peripheral visual field. 
	
	Firstly, the retino-cortical map between the visual field and V1 allowed us to model these visual stimuli as patterns of horizontal stripes localized in the left or right area of V1 that we incorporated as sensory inputs in the neural fields equation. 
	Then, through complex and harmonic analysis tools, we have shown that when the neuronal response function of V1 is linear, the output pattern of the equation does not capture the V1 representation of the after-images reported by Billock and Tsou, suggesting that the phenomenon is wholly nonlinear. 
	Next, we dived into the study of nonlinear response functions for which the corresponding output patterns of the equation qualitatively capture, at the level of V1, the essential features of the after-images reported by Billock and Tsou. 
	Through this study, we have analytically shown that nonlinear response functions with either balanced inhibitory and excitatory influence or a strong excitatory influence and weak slope or a strong inhibitory influence and weak slope do not reproduce the phenomenon. This suggests that a complex interplay between excitatory and inhibitory influences \cite{haider2006neocortical,shu2003turning} is required for the neural fields equation to model the psychophysical observations reported by Billock and Tsou \cite{billock2007} for a funnel pattern visual stimulus localized either in the fovea or peripheral visual field. Finally, we presented numerical experiments showing that nonlinear response functions other than those enumerated previously can reproduce the phenomenon.
	
	While much remains to be understood about the mechanisms underlying Billock and Tsou's psychophysical observations, our study provides valuable insights into how the primary visual cortex processes sensory information arising from localized regular funnel patterns in the visual field. In particular, this study supports the experimental finding { (see, e.g., \cite[Experiment~3-p.~8492 and Discussion-p.~8493]{billock2007})} suggesting that there is an orthogonal response in the unexcited region of V1 as a response to simple geometrical patterns from the retina that do not fill all the visual field or are not regular in shape. 
	
	We stress that the structure of the visual stimuli related to funnel patterns used by Billock and Tsou at the V1 level was crucial to obtaining the results presented in this paper. The same modeling regarding the tunnel pattern localized in the fovea or the peripheral visual field (see \cite[Fig.~3b and 3d]{billock2007}) will not yield the after-images reported by Billock and Tsou.
	Indeed, due to the rotational invariance of these tunnel patterns, the stationary solutions induced by the corresponding sensory inputs will be invariant with respect to translations in the second variable of V1 (see, e.g., \cite[Proposition~A.1]{tamekue2024mathematical}). In particular, this excludes the possibility of a funnel-like after-image in the unexcited region.
	
	In this work, we have focused on time-independent visual stimuli, which turned out to be enough to model (static) nonlocal perceptual phenomena associated with the funnel patterns under consideration. Studying pattern formation from spatiotemporal visual stimuli would be interesting in future work. As an open question related directly to the current study, it will be interesting to analytically show that a nonnegative response function (as, e.g., the response function \ref{it:nl1} of Section~\ref{ss::unreproducibility of BT with certain nonlinearities}), which models wholly excitatory or inhibitory influence, does not reproduce the phenomenon, as suggested by the numerical simulations exhibited in \cite[Fig.~8]{tamekue2023}. Moreover, finding a systematic analytical method for explicitly studying the qualitative properties of the output pattern (e.g., the structure of the zero level-set) would be valuable. The starting point could be to investigate the case of the semilinear response function $f_{\infty,\alpha}$ since numerical analysis arguments and simulations suggest that this nonlinearity reproduces the phenomenon.

	\appendix
	
	\section{Complementary results}\label{s::complement resluts}
	This section contains miscellaneous results used in the previous sections. We begin with the following Gronwall's lemma, see for instance \cite[Proposition~2.1]{emmrich1999discrete} for a proof.
	\begin{lemma}\label{lem::gronwall lemma}
		Assume that $u\in C([0,T); \R)$, $T\in(0,\infty)$ satisfies the integral inequality
		\begin{equation}
			u(t)\le u(0)+\int_{0}^{t}g(s)u(s)ds+\int_{0}^{t}h(s)ds,\qquad\;\mbox{on}\;[0, T),
		\end{equation}
		for some $0\le g\in L^1(0,T)$ and $h\in L^1(0, T)$. Then $u$ satisfies the pointwise estimate
		\begin{equation}
			u(t)\le u(0)e^{G(t)}+\int_{0}^{t}h(s)e^{G(t)-G(s)}ds,\qquad\forall t\in(0,T),
		\end{equation}
		where $G(t)=\int_{0}^{t}g(s)ds$.
	\end{lemma}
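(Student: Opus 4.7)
The plan is to reduce the integral inequality to a linear differential inequality for the accumulator of the "bad" term and then integrate it explicitly via an integrating factor. Specifically, I set $v(t):=\int_0^t g(s)u(s)\,ds$, which is absolutely continuous with $v(0)=0$ and $v'(t)=g(t)u(t)$ for a.e.\ $t\in[0,T)$, so that the hypothesis can be rewritten as
\begin{equation}
u(t)\le u(0)+v(t)+H(t),\qquad H(t):=\int_0^t h(s)\,ds.
\end{equation}

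Next, multiplying both sides by $g(t)\ge 0$ and using $v'(t)=g(t)u(t)$ yields the linear differential inequality
\begin{equation}
v'(t)-g(t)v(t)\le g(t)\bigl(u(0)+H(t)\bigr)\qquad\text{for a.e.\ } t\in[0,T).
\end{equation}
I would then multiply by the integrating factor $e^{-G(t)}$ to obtain $\tfrac{d}{dt}\bigl(e^{-G(t)}v(t)\bigr)\le g(t)e^{-G(t)}(u(0)+H(t))$, integrate from $0$ to $t$ (using $v(0)=0$), and multiply back by $e^{G(t)}$ to get
\begin{equation}
v(t)\le e^{G(t)}\int_0^t g(s)e^{-G(s)}\bigl(u(0)+H(s)\bigr)\,ds.
\end{equation}

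Finally I would insert this bound into $u(t)\le u(0)+v(t)+H(t)$ and simplify. The $u(0)$ part collapses immediately via $\int_0^t g(s)e^{-G(s)}\,ds = 1-e^{-G(t)}$, giving contribution $u(0)e^{G(t)}$. The $H$-part is handled by Fubini on the double integral $\int_0^t g(s)e^{-G(s)}\int_0^s h(\tau)\,d\tau\,ds$; swapping the order of integration produces $\int_0^t h(\tau)(e^{-G(\tau)}-e^{-G(t)})\,d\tau$, and the stray $-\int_0^t h(\tau)e^{-G(t)}\,d\tau$ term cancels exactly against the loose $H(t)=\int_0^t h(\tau)\,d\tau$ after multiplication by $e^{G(t)}$, leaving the desired kernel $e^{G(t)-G(\tau)}$.

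The only mildly delicate point is justifying the almost-everywhere differentiation of $v$ and the use of Fubini: $v$ is absolutely continuous because $g\in L^1(0,T)$ and $u\in C([0,T);\R)$ implies $gu\in L^1_{\mathrm{loc}}$, while nonnegativity of $g$ and measurability/integrability of $h$ give the absolute integrability needed to swap the order of integration. Everything else is routine algebra; the main obstacle is essentially bookkeeping to see that the $u(0)$ and $H$ contributions reassemble into the clean closed form stated in the lemma.
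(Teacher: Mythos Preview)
Your argument is correct and is the standard integrating-factor proof of this Gr\"onwall-type lemma; the bookkeeping you outline for both the $u(0)$ and the $H$ contributions is accurate, and the measure-theoretic justifications (absolute continuity of $v$ and of $e^{-G}v$, applicability of Fubini) are exactly the ones needed. Note that the paper does not actually supply its own proof of this lemma: it merely states the result and refers the reader to \cite[Proposition~2.1]{emmrich1999discrete}, so there is no in-paper argument to compare against.
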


	\subsection{Explicit computations of the kernel \texorpdfstring{$K$}{K} of Section~\ref{ss::theoretical Billock and Tsou experiments linear}}
	The following result is used to prove that \eqref{eq::NF-intro} with a linear response function does not model Billock and Tsou's observations for a funnel pattern localized either in the fovea or peripheral visual field.  
	%The first is the following.
	\begin{theorem}\label{thm::cplmt-BT}
		Under the considerations of Remark~\ref{rmk::variances of DOG for linear BT experiments},
		% \begin{equation}\label{eq::solution of SS DoG}
			% 	b(x) = I(x)+(K\ast I)(x),\qquad\qquad\forall x\in\R^{*},
			% \end{equation}
		the kernel $K$ defined in \eqref{eq::noyau K} satisfies, for any $x\in\R^{*}$,
		\begin{eqnarray}\label{eq:: kernel K}
			\frac{\sqrt 3}{2\pi} K(x) &=&  \frac{e^{-2\pi m_0|x|}}{\sqrt{n_0^2+m_0^2}}\cos\left( 2\pi n_0|x|+\frac{4\pi}{3}  -\phi_0\right)\nonumber\\%\frac{2\pi}{\sqrt{3}}
			&& \qquad\qquad+\sum\limits_{k=1}^{\infty}\frac{e^{-2\pi m_k|x|}}{\sqrt{n_k^2+m_k^2}}\cos\left( 2\pi n_k|x|+\frac{4\pi}{3}  -\phi_k\right) \nonumber\\%\frac{2\pi}{\sqrt{3}}
			&&\qquad\qquad+ \sum\limits_{k=1}^{\infty}\frac{e^{-2\pi f_k|x|}}{\sqrt{f_k^2+e_k^2}}\cos\left( 2\pi e_k|x|+\frac{4\pi}{3}  -\theta_k\right).
			%\frac{e^{-2\pi m_0|x|}}{\sqrt{n_0^2+m_0^2}}\cos\left( 2\pi n_0|x|+\frac{4\pi}{3}  -\phi_0\right) + O\left( \frac{1}{|x|} \right),
		\end{eqnarray}
		Here, for any $k\in \N$, we have that $\phi_k,\theta_k\in \R$, and, letting $c_k = \sqrt{1+6k}$ and $d_k = \sqrt{-1+6k}$, we have
		\begin{equation}\label{eq:: m_k and n_k}
			m_k^2=\frac{1+\sqrt{1+\frac{\pi^2}{9}c_k^4}}{2},\qquad n_k^2 = \frac{-1+\sqrt{1+\frac{\pi^2}{9}c_k^4}}{2},\qquad k\in\N
		\end{equation}
		\begin{equation}\label{eq:: e_k and f_k}
			e_k^2=\frac{1+\sqrt{1+\frac{\pi^2}{9}d_k^4}}{2},\qquad f_k^2 = \frac{-1+\sqrt{1+\frac{\pi^2}{9}d_k^4}}{2},\qquad k\in\N.
		\end{equation}
	\end{theorem}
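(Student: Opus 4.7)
We aim to obtain \eqref{eq:: kernel K} via residue calculus applied to the meromorphic extension
\[
\widehat K(z) \;=\; \frac{\widehat{\psi_1}(z)}{1-\widehat{\psi_1}(z)}, \qquad z \in \C,
\]
of the integrand in \eqref{eq::noyau K}. Under the normalizations of Remark~\ref{rmk::variances of DOG for linear BT experiments}, one has $\widehat{\psi_1}(z) = e^{-(1+z^2)} - e^{-2(1+z^2)}$. The first step is to locate the poles: setting $w = e^{-(1+z^2)}$ reduces $1-\widehat{\psi_1}(z)=0$ to the quadratic $w^2 - w + 1 = 0$, whose roots are $w = e^{\pm i\pi/3}$. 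Back-solving yields $z^2 = -1 + i\beta$ with $\beta \in \{\pi(6k \pm 1)/3 : k \in \Z\}$. Writing $z = n + im$ with $n,m \in \R$, the equations $n^2 - m^2 = -1$, $2nm = \beta$ force
\[
m^2 \;=\; \tfrac{1+\sqrt{1+\beta^2}}{2}, \qquad n^2 \;=\; \tfrac{-1+\sqrt{1+\beta^2}}{2},
\]
which for $\beta = \pi c_k^2/3$ produces the pair $\pm(n_k + i m_k)$ from \eqref{eq:: m_k and n_k}, and for $\beta = \pi d_k^2/3$ the pair $\pm(e_k + i f_k)$ from \eqref{eq:: e_k and f_k}. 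A short computation of $h'(z_0) \neq 0$ shows that all poles are simple.

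Next I would compute the residues. At any pole $z_0$, using $w_0^2 = w_0 - 1$, one has $\widehat{\psi_1}(z_0) = w_0 - w_0^2 = 1$ and $\widehat{\psi_1}'(z_0) = 2 z_0 (w_0 - 2)$, so
\[
\mathrm{Res}_{z=z_0}\widehat K \;=\; \frac{-1}{2 z_0 (w_0-2)}.
\]
Since $|w_0 - 2| = \sqrt 3$ with explicit argument $\pm 5\pi/6$, every residue has modulus $(2\sqrt 3\,|z_0|)^{-1}$ and a definite phase $e^{\pm i \pi/6}$ determined by which of $e^{i\pi/3}, e^{-i\pi/3}$ is $w_0$.

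To evaluate $K(x) = \int_{\R} e^{2\pi i x \xi}\widehat K(\xi)\,d\xi$ for $x > 0$, I would shift the real contour upward through rectangles $[-R, R] \times [0, Y_N]$, where $Y_N \to +\infty$ is chosen between consecutive horizontal pole-layers (possible since $\mathrm{Im}(z_0)$ along the families tends to $+\infty$ through a discrete set). On the vertical sides $\xi = \pm R + iy$, $y \in [0, Y_N]$, $\mathrm{Re}(\xi^2) = R^2 - y^2 \to +\infty$ uniformly, so $\widehat{\psi_1}$ decays super-exponentially in $R$; these contributions vanish as $R \to \infty$. On the top side at height $Y_N$ one exploits the ratio structure: $\widehat K \to -1$ whenever $|\widehat{\psi_1}|$ is large, so $|\widehat K(\xi + i Y_N)|$ is uniformly bounded for $\xi \in \R$ provided $Y_N$ stays away from poles, while $|e^{2\pi i x(\xi + iY_N)}| = e^{-2\pi x Y_N} \to 0$. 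The residue theorem then gives
\[
K(x) \;=\; 2\pi i \sum_{z_0\,:\,\mathrm{Im}(z_0) > 0} e^{2\pi i x z_0}\,\mathrm{Res}_{z_0}\widehat K.
\]

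Finally, for each $|\beta|$ I would pair the two upper-half-plane poles $n + im$ and $-n + im$ (coming from $+\beta$ and $-\beta$, respectively): their residue phases $e^{\mp i \pi/6}$ are complex conjugates, so after writing $n \pm im = \sqrt{n^2 + m^2}\,e^{\pm i \arctan(m/n)}$ and collapsing $e^{i\theta} - e^{-i\theta}$ combinations, each pair contributes a term of the form $\frac{2\pi}{\sqrt 3}\frac{e^{-2\pi m x}\cos(2\pi n x + \psi)}{\sqrt{n^2+m^2}}$ with an explicit phase $\psi$. Regrouping by the families $\{c_k\}_{k \ge 0}$ and $\{d_k\}_{k \ge 1}$ reproduces the three sums of \eqref{eq:: kernel K} (the $k=0$ term of the $c$-family being displayed separately as the innermost pole). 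Evenness of $\widehat K$ yields $K(-x) = K(x)$, allowing $x$ to be replaced by $|x|$. Absolute convergence of the sums over $k$ follows from $m_k, f_k \asymp \sqrt k$, ensuring $e^{-2\pi m_k |x|}, e^{-2\pi f_k |x|}$ decay faster than any polynomial. The main obstacle is the top-of-contour estimate in Step three: because $\widehat{\psi_1}(iY)$ grows like $e^{Y^2}$, the bound on $\widehat K$ there is delicate and requires a careful selection of the sequence $Y_N$ avoiding the poles; this is modeled on the analogous argument in \cite[Proposition~5.12]{tamekue2024mathematical}, which handles a meromorphic kernel of the same flavor.
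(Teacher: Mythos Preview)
Your proposal is correct and follows essentially the same residue-calculus approach as the paper: locate the poles of $\widehat{\psi_1}/(1-\widehat{\psi_1})$ in $\C$ via the substitution $w=e^{-(1+z^2)}$, compute the residues, deform the real contour into the upper half-plane, and pair the two upper-half-plane poles attached to each $|\beta|$ into a real damped cosine. The only cosmetic difference is that the paper uses expanding semicircles of carefully chosen radii $R_n$ lying between successive pole layers rather than rectangles, and defers the vanishing of the auxiliary arc to \cite[Theorem~B.1]{tamekue2024mathematical} (not Proposition~5.12).
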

	\begin{proof}
		We recall that for $x_1\in\R$, one has 
		$$
		K(x_1) = \int_{-\infty}^{+\infty}e^{2i\pi x_1\xi}\frac{\widehat{\psi_1}(\xi)}{1-\widehat{\psi_1}(\xi)}d\xi,\qquad \widehat{\psi_1}(\xi) = e^{-(1+\xi^2)}- e^{-2(1+\xi^2)},\qquad \xi\in\R.
		$$
		We are looking for poles of the following meromorphic function
		\begin{equation}
			h(z)	=\frac{\widehat{\psi_1}(z)}{1-\widehat{\psi_1}(z)}e^{2i\pi x_1\xi},\qquad\qquad\widehat{\psi_1}(z) = e^{-(1+z^2)}-e^{-2(1+z^2)},\qquad z\in\C.
		\end{equation}
		By careful computations, one finds that the poles of $h$ in $\C$ are given by $F_{k,\ell}$, $\overline{F_{k,\ell}}$, $G_{k,\ell}$ and $\overline{G_{k,\ell}}$, where for $\ell\in\{0,1\}$, one has
		$$
		F_{k,\ell} = (-1)^\ell n_k+im_k,\qquad k\in \N,\qquad\mbox{and}\qquad G_{k,\ell} = (-1)^\ell f_k+ie_k,\qquad k\in \N^{*}
		$$
		where $m_k$ and $n_k$ are given by \eqref{eq:: m_k and n_k}, and  $e_k$ and $f_k$ are given by \eqref{eq:: e_k and f_k}. Then the residue of $h$ are given  for $\ell\in\{0,1\}$ by
		%	$$
		%	\Res(h,z_0) =-\frac{e^{2i\pi x_1z_0}}{2z_0(1-e^{-2(1+z_0^2)})}.
		%	$$
		\begin{equation}\label{eq::residue F(k,ell)}
			\Res(h,F_{k,\ell})= \frac{(-1)^\ell i\overline{F_{k,\ell}}e^{(-1)^\ell i\frac\pi 3}}{2\sqrt{3+\frac{\pi^2}{3}c_k^4}}e^{2i\pi x_1 F_{k,\ell}},\qquad  \Res(h,F_{k,\ell})=\overline{\Res(h,\overline{F_{k,\ell}})},\quad k\in\N
		\end{equation}
		\begin{equation}\label{eq::residue G(k,ell)}
			\Res(h,G_{k,\ell})= \frac{-(-1)^\ell i\overline{G_{k,\ell}}e^{-(-1)^\ell i\frac\pi 3}}{2\sqrt{3+\frac{\pi^2}{3}d_k^4}}e^{2i\pi x_1 G_{k,\ell}},\qquad  \Res(h,G_{k,\ell})=\overline{\Res(h,\overline{G_{k,\ell}})},\quad k\in\N^{*}.
		\end{equation}
		We now fix $x_1>0$, and we let  
		$$
		R_n:=\frac{\sqrt{\sqrt{1+\frac{\pi^2}{9}c_n^4}}+\sqrt{\sqrt{1+\frac{\pi^2}{9}d_n^4}}}{2},\qquad\qquad n\in\N^{*}.
		$$
		We consider the path $\Gamma_n$ straight along the real line axis from $-R_n$ to $R_n$ and then counterclockwise along a semicircle centred at $z=0$ in the upper half of the complex plane, $\Gamma_n = [-R_n,R_n]\cup C_n^{+}$, where $C_n^{+} = \{R_ne^{i\theta}\mid\theta\in[0,\pi]\}$. Then, by the residue Theorem, one has for all $n\in\N^{*}$,
		\begin{eqnarray}\label{eq::residue thm DoG Billock and Tsou}
			\int_{-R_n}^{R_n}h(\xi)d\xi+\int_{C_n^{+}}h(z)dz&=&2\pi i\sum\limits_{\ell=0}^{\ell=1}\sum\limits_{k=0}^{n-1}	\Res(h,F_{k,\ell})+2\pi i\sum\limits_{\ell=0}^{\ell=1}\sum\limits_{k=1}^{n-1}	\Res(h,G_{k,\ell})\nonumber\\
			&=& \frac{2\pi}{\sqrt{3}}\frac{e^{-2\pi m_0|x|}}{\sqrt{n_0^2+m_0^2}}\cos\left( 2\pi n_0|x|+\frac{4\pi}{3}  -\phi_0\right)+\nonumber\\
			&& \frac{2\pi}{\sqrt{3}}\sum\limits_{k=1}^{n-1}\frac{e^{-2\pi m_k|x|}}{\sqrt{n_k^2+m_k^2}}\cos\left( 2\pi n_k|x|+\frac{4\pi}{3}  -\phi_k\right)+\nonumber\\
			&&\frac{2\pi}{\sqrt{3}}\sum\limits_{k=1}^{n-1}\frac{e^{-2\pi f_k|x|}}{\sqrt{f_k^2+e_k^2}}\cos\left( 2\pi e_k|x|+\frac{4\pi}{3}  -\theta_k\right)
		\end{eqnarray}
		where $\phi_k:=\phi_k(m_k, n_k)\in\R$ and $\theta_k:=\theta_k(e_k, f_k)\in\R$ are such that
		$$
		\cos\phi_k = \frac{n_k}{\sqrt{m_k^2+n_k^2}},\qquad \sin\phi_k = \frac{m_k}{\sqrt{m_k^2+n_k^2}},\qquad k\in\N
		$$
		$$
		\cos\theta_k = -\frac{f_k}{\sqrt{e_k^2+f_k^2}},\qquad \sin\theta_k = \frac{e_k}{\sqrt{e_k^2+f_k^2}},\qquad k\in\N^{*}.
		$$
		Arguing similarly as in the proof of \cite[Theorem B.1. ]{tamekue2024mathematical} we can prove that 
		$$
		\int_{C_n^{+}}h(z)dz\xrightarrow[n\to+\infty]{}0.
		$$
		Finally, passing in the limit as $n\rightarrow+\infty$ in \eqref{eq::residue thm DoG Billock and Tsou} completes the proof.
	\end{proof}
	
	\subsection{Proof of Theorem~\ref{thm:numerical-error}}
	\label{s::proof-of-numerical-error}
	
	% \begin{proof}[Proof of Theorem~\ref{thm:numerical-error}]
		We start by noticing that 
		\begin{equation}
			\lim_{h\to 0}h^2\sum_{p,q=-\infty}^{\infty}|\omega(ph,qh)|  
			=
			\|\omega\|_1.
		\end{equation}
		Hence, one can take $h>0$ small enough, such that 
		\begin{equation}
			\label{eq:approx-mu-h-M}
			\mu \alpha h^2\sum_{p,q=-\infty}^{\infty}|\omega(ph,qh)| \le  \eta < 1.
		\end{equation}
		Consider the fixed point equation
		\begin{equation}
			\label{eq:fixed-point-eq}
			b(i,j) = I(ih,jh)+\mu h^2\sum_{p,q=-M}^{M}\omega(ph,qh)f(b(i-p,j-q)), \qquad b\in \ell^\infty(\Z^2).
		\end{equation}
		Thanks to \eqref{eq:approx-mu-h-M}, the contraction mapping principle ensures the existence and uniqueness of the solution $a_{h,M}$ to the above. In particular, it holds
		\begin{equation}
			\label{eq:ahm-anhm}
			\sup_{(i,j)\in \Z^2}| a_{h,M}(i,j)-a_{n,h,M}(i,j)| \le \frac{\eta^{n+1}}{1-\eta} \|I\|_\infty .
		\end{equation}
		
		Consider now the fixed point equation of the type \eqref{eq:fixed-point-eq} with $M=+\infty$. Thanks to \eqref{eq:approx-mu-h-M}, this equation admits a unique solution $a_h\in \ell^\infty(\Z^2)$ such that
		\begin{equation}
			\label{eq:ah}
			a_h(i,j) = I(ih,jh)+\mu h^2\sum_{p,q=-\infty}^{\infty}\omega(ph,qh)f(a_h(i-p,j-q)).
		\end{equation}
		We now claim that there exists a constant $C_\omega>0$ depending only on the parameters of the coupling kernel $\omega$ such that
		\begin{equation}
			\label{eq:ah-ahm}
			\sup_{(i,j)\in \Z^2}| a_{h}(i,j)-a_{h,M}(i,j)| \le \frac{\mu\alpha}{(1-\eta)^2} \|I\|_\infty C_\omega e^{-\frac{M^2 h^2}{2\sigma_2^2}}.
		\end{equation}
		First of all, observe that by \eqref{eq:ah} we have
		\begin{equation}
			\label{eq:norm-ah-boh}
			\sup_{(i,j)\in \Z^2}|a_h(i,j)|\le \frac{\|I\|_\infty}{1-\eta}.
		\end{equation}
		Next, for any $(i,j)\in\Z^2$, by \eqref{eq:fixed-point-eq} and \eqref{eq:ah}, we have
		\begin{equation}
			\label{eq:def-J12}
			a_{h}(i,j)-a_{h,M}(i,j) = J_1 + J_2
		\end{equation}
		where
		\begin{eqnarray}
			J_1 &=& \mu h^2\sum_{\max\{|p|,|q|\}\ge M+1} \omega(ph,qh)f(a_{h}(i-p,j-q)) \\
			J_2 &=& \mu h^2\sum_{p,q = -M}^M \omega(ph,qh)\bigg(f(a_{h}(i-p,j-q))-f(a_{h,M}(i-p,j-q))\bigg).
		\end{eqnarray}
		Using \eqref{eq:norm-ah-boh} and the fact that $f$ is globally $\alpha$-Lipschitz continuous, one has
		\begin{eqnarray}
			|J_1| &\le& \frac{\mu h^2\alpha\|I\|_\infty}{1-\eta} \sum_{\max\{|p|,|q|\}\ge M+1} |\omega(ph,qh)|\\
			\label{eq:J2}
			|J_2| &\le& \mu h^2\alpha \sup_{(i,j)\in\Z^2} |a_h(i,j)-a_{h,M}(i,j)| \sum_{p,q=-M}^M |\omega(ph,qh)|
		\end{eqnarray}
		It is then immediately seen that 
		\begin{equation}
			\label{eq:estimate-J1}
			|J_1| \le \frac{\mu h^2\alpha\|I\|_\infty}{1-\eta}  C_\omega \int_{M+1}^{\infty} e^{-\frac{x_1^2h^2}{2\sigma_2^2}}\,dx_1\int_{-\infty}^{\infty} e^{-\frac{x_2^2h^2}{2\sigma_2^2}}\,dx_2
			\le \frac{\mu\alpha\|I\|_\infty}{1-\eta}  {C_\omega}  e^{-\frac{M^2h^2}{2\sigma_2^2}}.
		\end{equation}
		Here, $C_\omega>0$ denotes possibly different constants only depending on $\omega$. As for $J_2$, we deduce from  \eqref{eq:approx-mu-h-M} and \eqref{eq:J2} that
		\begin{equation}
			|J_2|\le \eta \sup_{(i,j)\in \Z^2}|a_h(i,j)|.
		\end{equation}
		Collecting \eqref{eq:def-J12}, \eqref{eq:estimate-J1}, and the above completes the proof of the claim.
		
		We are now left to upper-bound $|a_I(ih,jh)-a_{h}(i,j)|$ for all $(i,j)\in \Z^2$ and $h$ small enough. To proceed, we define the squares $Q_{p,q} = (ph,(p+1)h)\times(qh,(q+1)h)\subset\R^2$ for $(p,q)\in\Z^2$. By definition of $a_I$ and $a_h$, one gets that for every $(i,j)\in\Z^2$
		\begin{equation}
			a_I(ih,jh)-a_h(i,j) = \mu \sum_{p,q=-\infty}^\infty  \left(K_{p,q}^1 + K_{p,q}^2 + K_{p,q}^3\right)
		\end{equation}
		where
		\begin{eqnarray}
			K_{p,q}^1&=&
			\int_{Q_{p,q}}\omega(y)\bigg(  f(a_I(ih-y_1,jh-y_2)) -  f(a_I((i-p)h,(j-q)h)) \bigg) \,dy\\
			K_{p,q}^2&=&
			\int_{Q_{p,q}}\bigg(\omega(y)-\omega(ph,qh)\bigg)  f(a_I((i-p)h,(j-q)h))  \,dy\\
			K_{p,q}^3&=&
			\int_{Q_{p,q}}\omega(ph,qh)\bigg(  f(a_I((i-p)h,(j-q)h))-f(a_h(i-p,j-q)) \bigg) \,dy.
		\end{eqnarray}
		By Theorem~\ref{thm::boundness of the solution} and the $\alpha$-Lipschitz continuity of $f$, it is immediate to see that
		\begin{equation}
			\label{eq:81}
			|K_{p,q}^2|\le \frac{\alpha\|I\|_\infty}{1-\frac{\mu}{\mu_0}} h^3 \max_{Q_{p,q}}|\nabla \omega|
			\quad\text{and}\quad
			|K^3_{p,q}|\le \alpha h^2 |\omega(ph,qh)| \sup_{(i,j)\in\Z^2}|a_I(ih,jh)-a_h(i,j)|.
		\end{equation}
		Observe that there exists $C_\omega\ge 1$ such that $\max_{Q_{p,q}}|\nabla \omega| \le C_\omega|\nabla\omega(ph,qh)|$, for every $(p,q)\in \Z^2$. Hence, it follows that there exists $C_\omega>0$ such that
		\begin{equation}
			\label{eq:85}
			\mu\sum_{p,q=-\infty}^\infty |K_{p,q}^2| \le C_\omega \frac{\mu\alpha\|I\|_\infty}{1-\frac{\mu}{\mu_0}} h.
		\end{equation}
		On the other hand, by \eqref{eq:approx-mu-h-M}, we have
		\begin{equation}
			\label{eq:86}
			\mu\sum_{p,q=-\infty}^\infty |K_{p,q}^3| \le \eta \sup_{(i,j)\in\Z^2}|a_I(ih,jh)-a_h(i,j)|.
		\end{equation}

		To estimate $K^1_{p,q}$, we start by noticing that, by construction, there exists $p_0\in\N$ such that $K^1_{p,q}\cap\{x_1=\theta_L\}\neq \emptyset$ if and only if $p=p_0$. In particular, $a_I$ is Lipschitz continuous on $Q_{p,q}$ if $p\neq p_0$ by Proposition~\ref{pro::Lipschitz continuity}, with Lipschitz constant upper-bounded by $D_I$ defined in \eqref{eq:DI} where the corresponding $L_I$ is equal to the maximum of the Lipschitz constants of $I_1$ and $I_2$. Hence, for every $(p,q)\in\Z^2$ $p\neq p_0$ we have
		\begin{equation}
			|K^1_{p,q}|\le \alpha D_I h \int_{Q_{p,q}}|\omega(y)|\,dy.
		\end{equation}
		It follows that
		\begin{equation}
			\label{eq:88}
			\mu\sum_{\substack{(p,q)\in\Z^2\\ p\neq p_0}} |K_{p,q}^1| \le \mu \alpha D_I h \|\omega\|_1.
		\end{equation}
		On the other hand, for every $(p_0,q)$, $q\in\Z$, we have
		\begin{equation}
			|K^1_{p_0,q}|\le \alpha \frac{\|I\|_\infty}{1-\frac{\mu}{\mu_0}}  \int_{Q_{p_0,q}}|\omega(y)|\,dy.
		\end{equation}
		Hence, there exists a constant $C_\omega>0$
		\begin{equation}
			\label{eq:90}
			\mu\sum_{q=-\infty}^\infty |K_{p_0,q}^1| \le  \frac{\mu\alpha\|I\|_\infty}{1-\frac{\mu}{\mu_0}} \int_{\{p_0 h\le y_1\le (p_0+1) h\}}|\omega(y)|\,dy 
			\le  \frac{\mu\alpha\|I\|_\infty}{1-\frac{\mu}{\mu_0}} C_\omega h.
		\end{equation}
		By collecting the estimates \eqref{eq:85}, \eqref{eq:86}, \eqref{eq:88}, and \eqref{eq:90}, we obtain that
		\begin{equation}
			\label{eq:aI-ah}
			\sup_{(i,j)\in\Z^2}|a_I(ih,jh)-a_h(i,j)| \le 
			\frac{C_\omega \mu \alpha}{(1-\eta)^2} \left( D_I + \|I\|_\infty \right)h.
		\end{equation}
		Finally, collecting \eqref{eq:ahm-anhm}, \eqref{eq:ah-ahm} and \eqref{eq:aI-ah} yields the statement.
		\qed
		% \end{proof}
	
	% Same-type analysis can be made to prove that the kernel $K$ has a countable and discrete set of zeroes in $(0, +\infty)$, as in the case of the kernel $K$ studied in Section~\textcolor{blue}{cite SIAM paper}, refer in particular to Proposition~\textcolor{blue}{cite SIAM paper}.

	% \begin{figure}%[b]{0.3\textwidth}
		%          \centering
		%         \animategraphics[controls={play, stop},width=\textwidth]{50}{folder-animate/output-}{1}{399}
		%          \caption{}
		% \end{figure}

	\printbibliography
	
\end{document}